\theoremstyle{plain}
\newtheorem{theor10}{Theorem}
\newenvironment{theor1}
  {\pushQED{\qed}\begin{theor10}}
  {\popQED\end{theor10}}
\newtheorem{lem0}{Lemma}[section]
\newenvironment{lem}
  {\pushQED{\qed}\begin{lem0}}
  {\popQED\end{lem0}}
\theoremstyle{plain}
\newtheorem{theor0}[lem0]{Theorem}
\newenvironment{theor}
  {\pushQED{\qed}\begin{theor0}}
  {\popQED\end{theor0}}
\newtheorem{prop0}[lem0]{Proposition}
\newenvironment{prop}
  {\pushQED{\qed}\begin{prop0}}
  {\popQED\end{prop0}}
\newtheorem{cor0}[lem0]{Corollary}
\newenvironment{cor}
  {\pushQED{\qed}\begin{cor0}}
  {\popQED\end{cor0}}
\numberwithin{equation}{section}
\newcommand{\e}{\varepsilon}
\newcommand{\R}{\mathbb R}
\newcommand{\Z}{\mathbb Z}
\newcommand{\C}{\mathbb C}
\newcommand{\D}{\mathbb D}
\newcommand{\T}{\mathbb T}
\newcommand{\Pc}{\mathcal P}
\newcommand{\Dc}{\mathcal D}
\newcommand{\Hc}{\mathcal H}
\newcommand{\Id}{\operatorname{Id}}
\newcommand{\LB}{{\operatorname{LB}}}
\newcommand{\LLB}{{\operatorname{LLB}}}
\newcommand{\loc}{{\operatorname{loc}}}
\newcommand{\Ld}{\operatorname{L}}
\newcommand{\step}[1]{\noindent \textit{Step} #1.}
\newcommand{\pv}{\operatorname{p.}\!\operatorname{v.}}
\title[Lenard--Balescu correction to mean-field theory]{Lenard--Balescu correction to mean-field theory}
\author[M. Duerinckx]{Mitia Duerinckx}
\address[Mitia Duerinckx]{Laboratoire de Mathématiques d'Orsay, UMR 8628, Université Paris-Sud, F-91405 Orsay, France \& Universit\'e Libre de Bruxelles, Département de Mathématique, Brussels, Belgium}
\email{mitia.duerinckx@u-psud.fr}
\author[L. Saint-Raymond]{Laure Saint-Raymond}
\address[Laure Saint-Raymond]{École Normale Supérieure de Lyon, UMR 5669, Unité de Mathématiques Pures et Appliquées, F-69342 Lyon, France}
\email{laure.saint-raymond@ens-lyon.fr}
\begin{document}
\selectlanguage{english}

\maketitle

\begin{abstract}
In the mean-field regime, the evolution of a gas of $N$ interacting particles is governed in first approximation by a Vlasov type equation with a self-induced force field.
This equation is conservative and describes return to equilibrium only in the very weak sense of Landau damping.
However, the first correction to this approximation is given by the Lenard--Balescu operator, which dissipates entropy on the very long timescale $O(N)$.
In this paper, we show how one can derive rigorously this correction on intermediate timescales (of order $O(N^r)$ for $r <1$), close to equilibrium.
\end{abstract}


\section{Introduction}
\subsection{General overview}\label{sec:overview}
Consider the dynamics of a system of $N$ classical particles in the torus $\T^d$ as given by Newton's equations of motion,
\begin{equation}\label{eq:part-dyn}
\frac{d}{dt}x_{j,N}=v_{j,N},\qquad\frac{d}{dt}v_{j,N}=-\frac1N\sum_{1\le l\le N\atop l\ne j}\nabla V(x_{j,N}-x_{l,N}),\qquad1\le j\le N,
\end{equation}
where $\{(x_{j,N},v_{j,N})\}_{j=1}^N$ denotes the set of positions and velocities of the particles in the phase space $\D:=\T^d\times\R^d$ and where $V:\T^d\to\R$ is a long-range interaction potential.
In the mean-field scaling that is considered here, the strength of each binary interaction is~$O(\frac1N)$, so that the total force exerted on a particle by all the others is of order~$O(1)$.
In terms of a probability density $F_N$ on the $N$-particle phase space $\D^N=(\T^d\times\R^{d})^N$, these equations are equivalent to the following Liouville equation,
\begin{align}\label{eq:Liouville}
\partial_tF_N+\sum_{j=1}^Nv_j\cdot\nabla_{x_j}F_N=\frac1N\sum_{1\le j\ne l\le N}\nabla V(x_j-x_l)\cdot\nabla_{v_j}F_N,
\end{align}
where particles are assumed to be exchangeable, hence $F_N$ is symmetric in its $N$ variables $z_j:=(x_j,v_j)\in\D$, $1\le j\le N$.
In the large-$N$ limit, rather than describing the whole set of individual particle trajectories in this $N$-body problem, one looks for an ``averaged'' description of the system: one may typically focus on the evolution of one ``typical'' particle in the system, as described by the first marginal of $F_N$,
\[F_{N}^{1;t}(z):=\int_{\D^{N-1}}F_N^{t}(z,z_2\ldots,z_N)\,dz_{2}\ldots dz_N.\]
Neglecting $2$-particle correlations (the so-called Boltzmann's chaos assumption) formally leads to the following mean-field approximation: $F_{N}^1$ is expected to remain close to the solution $H$ of the Vlasov equation,
\begin{align}\label{eq:Vlasov}
\partial_tH+v\cdot\nabla_xH=(\nabla V\ast\mu_H)\cdot\nabla_vH,\qquad\mu_H(x):=\int_{\R^d}H (x,v)\,dv.
\end{align}
We refer to~\cite{Golse-rev} for a review of rigorous results on this well-travelled topic.
Next, we may investigate the next-order correction to this mean-field approximation.
This is particularly relevant in the case of spatially homogeneous systems: if the mean-field density is spatially homogeneous, $H=H(v)$, it remains constant over time in view of~\eqref{eq:Vlasov}, hence the mean-field regime is trivial and the next-order correction becomes the relevant leading order. In that setting, we naturally focus on the velocity density of the typical particle,
\[f_N^{1;t}(v):=\int_{\T^d}F_N^{1;t}(x,v)\,dx.\]
In agreement with Bogolyubov's theory of non-equilibrium statistical mechanics~\cite{Bogolyubov-46},
while the mean-field approximation was obtained by neglecting particle correlations, the next-order correction precisely amounts to taking into account those $2$-particle correlations (or ``collisions''), only neglecting higher-order correlations.
Correlations should lead to irreversible effects and actually create a non-Markovian correction of order $O(\frac1N)$. This contribution is expected to become $O(1)$ only on the relevant long timescale $t\sim N$, and memory effects to vanish on this long timescale, hence giving rise to a simple collisional operator.
More precisely, the time-rescaled velocity density $f_{N}^{1;Nt}$ is predicted to remain close to the solution~$f$ of the following so-called Lenard--Balescu kinetic equation,
\begin{align}\label{eq:LB}
\partial_tf=\LB(f):=\nabla\cdot\int_{\R^d}B(v,v-v_*;\nabla f)\,\big(f_*\nabla f-f\nabla_*f_*\big)\,dv_*,
\end{align}
with the notation $f=f(v)$, $f_*=f(v_*)$, $\nabla=\nabla_v$, and $\nabla_*=\nabla_{v_*}$,
where the collision kernel is given by
\begin{align}\label{eq:LB-kernel}
B(v,v-v_*;\nabla f):=\sum_{k\in2\pi\Z^d}(k\otimes k)\,\pi\widehat V(k)^2\tfrac{\delta(k\cdot(v-v_*))}{|\e(k,k\cdot v;\nabla f)|^2},
\end{align}
and the dispersion function is
\begin{align}\label{eq:LB-disp}
\e(k,k\cdot v;\nabla f):=1+\widehat V(k)\int_{\R^d}\tfrac{k\cdot\nabla f(v_*)}{k\cdot(v-v_*)-i0}\,dv_*,
\end{align}
where henceforth the notation $\pm i0$ means that we take the limit $\alpha\downarrow0$ of the same quantity with $\pm i\alpha$.
Note that this  function $\e$ appears naturally when computing the resolvent of the Vlasov operator linearized around the equilibrium $f= f(v)$. In particular, the equilibrium is linearly stable as long as $\e$ does not vanish, which is referred to as the Penrose criterion~\cite{Penrose}; this is in particular the case for Maxwellian equilibria.
In dimension $d=1$, the Lenard-Balescu operator vanishes identically: the system then undergoes a kinetic blocking and is expected to evolve only on the longer timescale $t\sim N^2$ under the effect of $3$-particle correlations, cf.~\cite{FBOC-19,FCP-20}.

\medskip
The Lenard--Balescu equation (\ref{eq:LB}) was formally derived in the early 60s independently by Guernsey~\cite{Guernsey-60,Guernsey-62}, Lenard~\cite{Lenard-60}, and Balescu~\cite{Balescu-60,BT-61}. The former two derivations rely on the earlier work by Bogolyubov~\cite{Bogolyubov-46} and proceed by truncating the so-called BBGKY hierarchy of equations for correlation functions and computing the Markovian limit. This derivation, as presented e.g.~in~\cite[Appendix~A]{Nicholson}, is our main inspiration in the sequel. Balescu's derivation builds instead on Prigogine's theory of irreversibility~\cite{Prigogine-61,Balescu-63} by means of a diagrammatic approach.
Another derivation was later proposed by Klimontovich~\cite{Klimontovich-67} (see also~\cite[Section~51]{LL-vol10}) based on fluctuations of the empirical measure associated with the particle dynamics~\eqref{eq:part-dyn}.

\medskip
At a formal level, the Lenard--Balescu equation~\eqref{eq:LB} preserves mass, momentum, and kinetic energy, and satisfies an $H$-theorem,
\[\partial_t\int_{\R^d} f\log f=-\iint_{\R^d\times\R^d}\big((\nabla-\nabla_*)\sqrt{ff_*}\big)\cdot B(v,v-v_*;\nabla f)\big((\nabla-\nabla_*)\sqrt{ff_*}\big)~\le~0,\]
hence it should describe relaxation to Maxwellian equilibrium on the long timescale $t\gg N$.
For a thorough discussion of the relevance of this equation in plasma physics, we refer e.g.~to~\cite[Section~11.11]{Krall-Trivelpiece}, \cite[Section~47]{LL-vol10}, \cite[Chapter~5]{Nicholson}, or \cite[Part~1]{Decoster}.
A key feature of this equation is to take into account collective screening effects in form of the nonlocal nonlinearity of the kernel~\eqref{eq:LB-kernel} via the dispersion function $\e$. As such, it can be viewed as a corrected version of the celebrated (phenomenological) Landau equation, which indeed amounts to neglecting collective effects, that is, setting $\e\equiv1$.

\medskip
From the mathematical viewpoint, this dynamical screening in form of the full nonlinearity of the kernel~\eqref{eq:LB-kernel} makes the study of the Lenard--Balescu equation~\eqref{eq:LB} reputedly difficult.
Even local well-posedness remains an open problem, while the only rigorous results concern the relaxation of the linearized evolution at the Maxwellian equilibrium~\cite{Merchant-Liboff-73,Strain-07}.

\medskip
In this context, any rigorous derivation from particle dynamics~\eqref{eq:part-dyn} has remained elusive, apart from some partial attempts in~\cite{Lancellotti-10,VW-18} (see also~\cite{BPS-13,Winter-19} for corresponding attempts at the derivation of the Landau equation in the weak-coupling regime).
On top of well-posedness issues, the difficulty is mainly twofold: 
\begin{enumerate}[$\bullet$]
\item proving a priori estimates on correlation functions that are uniform up to the long timescale $t\sim N$. A major problem here is that regularity is not uniformly propagated by the Liouville equation (\ref{eq:Liouville}).
\item rigorously establishing the relaxation of the equation for the $2$-particle correlation function in order to get a closed kinetic equation for the density $f$. The problem is that  this relaxation holds only in a weak sense  corresponding to linear Landau damping for two typical particles, and therefore it is not clear that the error term (which converges only weakly to 0) would not contribute to the limiting dynamics.
\end{enumerate}

\medskip
In the present work,
we proceed to a fully rigorous analysis under the following two key simplifications:
\begin{enumerate}[(1)]
\item We focus on a linearized regime and consider two typical such settings:
\begin{itemize}
\item the evolution of a tagged particle in a thermal equilibrium background;
\item the evolution of tiny fluctuations around thermal equilibrium (more precisely, fluctuations of order $O(\frac1N)$, thus much smaller than typical fluctuations $O(\frac1{\sqrt N})$).
\end{itemize}
We mainly focus on the former setting, which is more relevant from a physical point of view, while the latter can be treated similarly and is briefly addressed in Appendix~\ref{app:lin-setting}.
\item We do not reach the relevant timescale $t\sim N$, but rather obtain the (suitably linearized) Lenard--Balescu evolution as a $O(\frac1{N^{1-r}})$ correction on the intermediate timescale $t\sim N^r$ for some small enough exponent $r>0$.
\end{enumerate}
The first simplification of course alleviates difficulties related to the nonlinear structure, including the well-posedness issue for~\eqref{eq:LB} and its rigorous derivation as a long-time limit. In addition, this linearization allows us to rigorously establish as in~\cite{BGSR-16} a weak version of chaos and neglect many-particle correlations uniformly in time, cf.~Section~\ref{sec:cum-estim}.
The second simplification is needed to avoid possibly delicate resonance questions related to singularities of long-time propagators (filamentation), as pointed out in Section~\ref{sec:echo}.
Note that the first simplification is relaxed in the companion article~\cite{D-19}, where a new approach to correlation estimates is developed away from equilibrium, but it requires to restrict to even shorter timescales $t\ll\log N$.

\subsection{Main result}
We focus on a linearized setting and consider the motion of a tagged particle in a bath of $N$ classical particles in the torus~$\T^d$ at thermal equilibrium (total number of particles is now $N+1$). More precisely, we denote by $z_{0,N}:=(x_{0,N},v_{0,N})\in\D$ the position and velocity of the tagged particle, and by $\{z_{j,N}:=(x_{j,N},v_{j,N})\}_{j=1}^N$ the positions and velocities of the $N$ background particles, the motion of which is governed by Newton's equations of motion,
\begin{equation}\label{eq:Newton-re}
\frac{d}{dt}x_{j,N}=v_{j,N},\qquad \frac{d}{dt}v_{j,N}=-\frac1{N}\sum_{0\le l\le N\atop l\ne j}\nabla V(x_{j,N}-x_{l,N}),\qquad0\le j\le N,
\end{equation}
and we assume that the background particles are initially at thermal equilibrium, that is, the initial positions and velocities  $\{z_{j,N}^\circ:=(x_{j,N}^\circ,v_{j,N}^\circ)\}_{j=0}^N$ are distributed according to the following probability density on the $(N+1)$-particle phase space $\D^{N+1}=(\T^d\times\R^d)^{N+1}$,
\begin{equation}\label{eq:tagged-initial}
F_N^\circ(z_0,\ldots,z_N)\,:=\,f^\circ(v_0)\,M_{N,\beta}(z_1,\ldots,z_N),
\end{equation}
where $f^\circ:\R^d\to\R^+$ denotes the initial velocity distribution of the tagged particle (assumed to be spatially homogeneous), and where the equilibrium background density $M_{N,\beta}$ is given by the Gibbs measure
\begin{equation}\label{eq:def-MNbeta}
M_{N,\beta}(z_1,\ldots,z_N)\,:=\,Z_{N,\beta}^{-1}\,e^{-\frac\beta2\sum_{j=1}^N|v_j|^2-\frac\beta{2N}\sum_{1\le j\ne l\le N}V(x_j-x_l)},
\end{equation}
with normalization factor $Z_{N,\beta}$ and with fixed inverse temperature $\beta\in(0,\infty)$.
While the mean-field force exerted by the equilibrium background
vanishes for symmetry reasons, the tagged particle is expected to display a Brownian trajectory in velocity space on the timescale $t\sim N$ and to progressively acquire the Maxwellian velocity distribution as the background itself,
\[M_\beta(v)\,:=\,(\tfrac\beta{2\pi})^\frac d2e^{-\frac\beta2|v|^2}.\]
More precisely, as first predicted in~\cite{TGM-64} (see also~\cite{Piasecki-81,PS-87}), its time-rescaled velocity distribution $f_N^{1;Nt}$ is expected to remain close to the solution $f$ of the following Fokker-Planck equation, which is obtained by replacing the time-dependent distribution~$f_*$ in the Lenard--Balescu equation~\eqref{eq:LB} by the velocity distribution $M_\beta$ of the equilibrium background,
\begin{equation}\label{eq:FP}
\partial_tf=\nabla\cdot \big(D_\beta(\nabla f+\beta vf)\big),
\end{equation}
in terms of the diffusion tensor
\[D_\beta(v):=\sum_{k\in2\pi\Z^d\setminus\{0\}}(k\otimes k)\,\tfrac{\pi\widehat V(k)^2}{|k||\e_\beta(k,k\cdot v)|^2} (\tfrac{\beta}{2\pi})^\frac 12e^{-\frac\beta{2}(v\cdot\frac k{|k|})^2},\]
with
\begin{align*}
\e_\beta(k,k\cdot v):=1-\widehat V(k)\int_{\R^d}\tfrac{\beta k\cdot v_*}{k\cdot(v-v_*)-i0}\,M_\beta(v_*)\,dv_*.
\end{align*}
Equation~\eqref{eq:FP} satisfies an $H$-theorem,
\[\partial_t\int_{\R^d}M_\beta\,\big|\tfrac f{M_\beta}\big|^2\,=\,-2\int_{\R^d}M_\beta\,\nabla(\tfrac f{M_\beta})\cdot D_\beta\nabla(\tfrac f{M_\beta})\,\le\,0,\]
hence it describes relaxation of the tagged particle velocity distribution to the Maxwellian $M_\beta$.
The following main result provides a fully rigorous derivation of~\eqref{eq:FP} starting from particle dynamics, although only on an intermediate timescale $t\sim N^r$ with $r<1$ small enough. The reason for this limitation is that we do not manage to rule out possible resonant effects as described in Section~\ref{sec:echo}.
As the velocity distribution does not evolve on this intermediate timescale, we simply obtain the Fokker-Planck operator applied to the initial data, instead of a genuine evolution equation.
The case of tiny fluctuations around thermal equilibrium is similarly addressed in Appendix~\ref{app:lin-setting},
while the nonlinear setting of Section~\ref{sec:overview} is treated in the companion article~\cite{D-19}. A similar consistency result for the Landau equation in the weak-coupling regime was obtained in~\cite{BPS-13}.

\begin{theor1}[Tagged particle in equilibrium bath]\label{th:main}
Let $d\ge1$ and assume that
\begin{itemize}
\item the interaction potential $V:\T^d\to\R$ is even, positive definite (that is, $\widehat V\ge0$), and smooth enough (that is, $V\in W^{4,\infty}(\T^d)$);
\item the temperature of the initial background~\eqref{eq:def-MNbeta} is large enough (that is, $\beta\|V\|_{\Ld^\infty}\le \frac1{C_0}$ for some large enough universal constant $C_0$);
\item the initial velocity distribution has Maxwellian decay (that is, $\frac{f^\circ}{M_\beta}\in\Ld^\infty(\R^d)$).
\end{itemize}
Then, for $0<r<\frac1{18}$, the velocity distribution $f_N$ of the tagged particle satisfies on the timescale $t\sim N^r$,
\[\lim_{N\uparrow\infty}N(\partial_tf_N^{1;t})|_{t=N^r\tau}\,=\,\nabla\cdot \big(D_\beta(\nabla f^\circ+\beta vf^\circ)\big),\]
as a function of $(\tau,v)$ in the weak sense of $\Dc'(\R^+\times\R^d)$.
\end{theor1}

\noindent
{\bf Outline of the paper.}
The key tool that we use in this paper is the set of cumulants of the probability density $F_N$, which encodes the correlations within clusters of particles of different sizes. In the mean-field regime, these correlations are expected to be small, all the more that the size of the cluster is large. Section~\ref{sec:cum} provides global-in-time a priori estimates on these cumulants, as well as their evolution equations. Then Section~\ref{sec:echo} explains the strategy to get asymptotically a closed equation on the first cumulant, which takes form of the Fokker-Planck equation~\eqref{eq:FP}. The main difficulty at this stage is to establish the stability of this equation under the uniform cumulant estimates, which have very low regularity and do not prevent oscillations or resonances. We therefore separate the proof of the convergence into two steps. The first one, in Section~\ref{sec:intermediate}, shows how one can truncate the cumulant hierarchy at order~$3$, controlling higher-order contributions by means of the uniform cumulant estimates, combined with resolvent estimates for the linearized Vlasov equation. Next, in Section~\ref{sec:eqns-lim}, we identify the Lenard--Balescu operator, combining the first two equations of the cumulant hierarchy and computing resolvents explicitly by complex deformations.

\bigskip
\noindent
{\bf Notation.}
\begin{enumerate}[$\bullet$]
\item We denote by $C\ge1$ any constant that only depends on $d$, on $\|V\|_{W^{4,\infty}(\T^d)}$, and on $\|{f^\circ}/{M_\beta}\|_{\Ld^\infty(\R^d)}$. We use the notation $\lesssim$ (resp.~$\gtrsim$) for $\le C\times$ (resp.\@ $\ge\frac1C\times$) up to such a multiplicative constant $C$. We write $\simeq$ when both $\lesssim$ and $\gtrsim$ hold. We add subscripts to $C,\lesssim,\gtrsim,\simeq$ in order to indicate dependence on other parameters. We denote by $O(K)$ a quantity that is bounded by $CK$.
\item For $0\le m\le n$, we set $[m,n]:=\{m,\ldots,n\}$ and we use the abbreviation $[m]:=[1,m]=\{1,\ldots, m\}$. For an index set $\sigma=\{i_1,\ldots,i_k\}$ we write $z_{\sigma}:=(z_{i_1},\ldots,z_{i_k})$.
\item For $a,b\in\R$ we write $a\wedge b:=\min\{a,b\}$ and $\langle a\rangle:=(1+a^2)^{1/2}$.
\item We use the following shorthand notation for averaging with respect to the Maxwellian distribution,
\begin{equation*}
\langle G(v)\rangle_v:=\int_{\R^d}G(v)\,M_\beta(v)\,dv.
\end{equation*}
\end{enumerate}

\section{Cumulant formalism and BBGKY hierarchy}\label{sec:cum}
This section is devoted to notation and basic results for the analysis of many-particle correlations in the BBGKY formalism for a tagged particle in an equilibrium bath~\eqref{eq:Newton-re}--\eqref{eq:def-MNbeta}.
The label $0$ is reserved for the tagged particle, while the labels $1,\dots,N$ are used for the (exchangeable) background particles.
We denote here by $F_N$ the probability density  on the $(N+1)$-particle phase space $\D^{N+1}$, which satisfies the following Liouville equation,
\begin{align}\label{eq:Liouville2}
\partial_tF_N+\sum_{j=0}^Nv_j\cdot\nabla_{x_j}F_N=\frac1N\sum_{0\le j\ne l\le N}\nabla V(x_j-x_l)\cdot\nabla_{v_j}F_N.
\end{align}
Note that $F_N$ is symmetric in its $N$ last entries as it is initally, cf.~\eqref{eq:tagged-initial}, embodying the exchangeability of the $N$ background particles. 
In order to prove Theorem~\ref{th:main}, we proceed to a linear cumulant expansion of $F_N$, which nicely splits correlations between subsets of particles. Cumulants satisfy a system of coupled equations, which is a variant of the celebrated BBGKY hierarchy.
As in~\cite{BGSR-16}, in the present linear setting, suboptimal a priori estimates on cumulants can be deduced by symmetry and yield a weak version of chaos, which allows a rigorous truncation of the hierarchy, cf.~Section~\ref{sec:intermediate}.

\subsection{Cumulant expansion and estimates}\label{sec:cum-estim}
Cumulants of a probability density are polynomial combinations of marginals that encode many-particle correlations and allow to recover the original distribution in form of a cluster expansion.
In the present setting, as the distribution of the $N$ background particles is close to equilibrium $M_{N,\beta}$, we naturally linearize the definition of cumulants. While the equilibrium contains particle correlations that are difficult to work with in practice, we rather choose to linearize the definition of cumulants around the simpler Maxwellian proxy $M_\beta^{\otimes N}$.
Up to errors due to this simplification, the linear cumulant $G_N^{m+1}$ below describes the correlation of the tagged particle with $m$ background particles.
Since $F_N$ is symmetric in its $N$ last entries, we note that the marginal $F_N^{m+1}$ and cumulant $G_N^{m+1}$ are similarly symmetric in their $m$ last entries.
The proof of the following key result is straightforward and can be found in~\cite[Proposition~4.2]{BGSR-16}.

\begin{lem}[\cite{BGSR-16}]\label{lem:cumulants}
For $0\le m\le N$, let the $(m+1)$th-order marginal $F_N^{m+1}:\D^{m+1}\to\R^+$ of $F_N$ be defined as
\[F_{N}^{m+1}(z_{[0,m]})\,:=\,\int_{\D^{N-m}}F_N(z_{[0,N]})\,dz_{[m+1,N]},\]
and define the corresponding cumulant $G_N^{m+1}:\D^{m+1}\to\R$ as
\begin{equation}\label{eq:def-cum}
G_N^{m+1}(z_{[0,m]})\,:=\,\sum_{j=0}^m(-1)^{m-j}\sum_{\sigma\in\mathfrak P_{j}^m}\tfrac{F_{N}^{j+1}}{M_\beta^{\otimes(j+1)}}(z_0,z_\sigma),
\end{equation}
where $\mathfrak P_{j}^m$ denotes the set of all subsets of $[m]$ with $j$ elements.
Then, the following two properties hold.
\begin{enumerate}[(i)]
\item \emph{Cluster expansion:} for all $0\le m\le N$,
\[F_{N}^{m+1}(z_{[0,m]})\,=\,M_\beta^{\otimes (m+1)}(v_{[0,m]})\sum_{j=0}^m\sum_{\sigma\in\mathfrak P_j^m}G_N^{j+1}(z_0,z_\sigma).\]
\item \emph{Orthogonality:} for all $1\le m\le N$ there holds $\int_\D G_N^{m+1}(z_{[0,m]})\,M_\beta(v_l)\,dz_l=0$ for each $l\in[m]$, hence
\[\int_{\D^N}\tfrac{|F_N|^2}{M_\beta^{\otimes (N+1)}}\,=\,\sum_{m=0}^N\binom Nm\int_{\D^{m+1}}|G_N^{m+1}|^2M_\beta^{\otimes(m+1)}.\qedhere\]
\end{enumerate}
\end{lem}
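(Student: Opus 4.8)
The strategy is to establish~(i) by M\"obius inversion, and then to deduce the orthogonality in~(ii) and the resulting $L^2$ decomposition directly from it. Throughout, abbreviate $\widetilde{F}_N^{k+1}:=F_N^{k+1}/M_\beta^{\otimes(k+1)}$, and for a finite set $A$ of distinct background labels set $f(A):=\widetilde{F}_N^{|A|+1}(z_0,z_A)$ and $g(A):=G_N^{|A|+1}(z_0,z_A)$, which are well defined by the exchangeability of the $N$ background particles (this is what makes $F_N^{k+1}$, hence $G_N^{k+1}$, symmetric in its $k$ last entries). In this notation \eqref{eq:def-cum} reads $g(A)=\sum_{B\subseteq A}(-1)^{|A\setminus B|}f(B)$, and M\"obius inversion on the Boolean lattice of subsets of $A$ immediately gives $f(A)=\sum_{B\subseteq A}g(B)$ for every $A$; specializing to $A=[m]$ and grouping the subsets $B$ by cardinality is exactly the cluster expansion~(i).

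For the orthogonality~(ii), I would integrate the cluster expansion~(i) against $M_\beta(v_m)\,dz_m$ over $\D$; the case of a general label $l\in[m]$ then follows by the symmetry just noted. Using $\int_\D M_\beta\,dz=1$ and the consistency of marginals, $\int_\D F_N^{m+1}(z_{[0,m-1]},z_m)\,dz_m=F_N^{m}(z_{[0,m-1]})$, the left-hand side becomes $\widetilde{F}_N^{m}(z_{[0,m-1]})$, while on the right-hand side the subsets $\sigma\subseteq[m]$ avoiding the label $m$ sum, again by~(i), back to $\widetilde{F}_N^{m}(z_{[0,m-1]})$; hence
\[
\sum_{\sigma\subseteq[m-1]}\int_\D G_N^{|\sigma|+2}(z_0,z_\sigma,z_m)\,M_\beta(v_m)\,dz_m\,=\,0\qquad\text{for every }m\ge1.
\]
For $m=1$ this reads $\int_\D G_N^{2}(z_0,z_1)\,M_\beta(v_1)\,dz_1=0$, and an induction on $m$ concludes: at step $m$, every term with $|\sigma|\le m-2$ vanishes by the inductive hypothesis (after relabeling by exchangeability), so the only surviving term $\sigma=[m-1]$ must vanish as well, that is, $\int_\D G_N^{m+1}(z_{[0,m]})\,M_\beta(v_m)\,dz_m=0$.

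The displayed $L^2$ decomposition is then a Pythagorean theorem in $L^2(\D^{N+1},M_\beta^{\otimes(N+1)})$. By~(i) with $m=N$, $F_N/M_\beta^{\otimes(N+1)}=\sum_{m=0}^N\sum_{\sigma\in\mathfrak S_m^N}G_N^{m+1}(z_0,z_\sigma)$, each summand being viewed as a function on $\D^{N+1}$ depending only on $z_0$ and $z_\sigma$. For distinct pairs $(m,\sigma)\ne(m',\sigma')$ one has $\sigma\ne\sigma'$, so one may pick a label $l$ in the symmetric difference $\sigma\triangle\sigma'$; integrating $z_l$ out first (all other factors being independent of $z_l$) produces a factor of the form $\int_\D G_N^{j+1}(z_0,z_\rho)\,M_\beta(v_l)\,dz_l$ with $l\in\rho$, which vanishes by~(ii), so all cross terms drop out. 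Computing the squared $L^2(M_\beta^{\otimes(N+1)})$-norm of both sides of the cluster expansion, each diagonal term $\|G_N^{m+1}(z_0,z_\sigma)\|_{L^2(M_\beta^{\otimes(N+1)})}^2$ reduces to $\int_{\D^{m+1}}|G_N^{m+1}|^2\,M_\beta^{\otimes(m+1)}$ upon integrating out the $N-m$ absent variables (each contributing $\int_\D M_\beta\,dz=1$), and there are $|\mathfrak S_m^N|=\binom Nm$ such terms for each $m$; this is the stated identity.

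I do not expect a genuine obstacle here: the statement is elementary once the right viewpoint is adopted (M\"obius inversion together with orthogonality in the Gaussian-weighted $L^2$ space), and the only thing to watch is the bookkeeping of symmetrizations and of which variables are integrated out. This is precisely why the cleanest route first establishes~(i) by inversion and then reads off both~(ii) and the $L^2$ decomposition from it, rather than manipulating the alternating sum in~\eqref{eq:def-cum} directly.
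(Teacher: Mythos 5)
Your proof is correct, and since the paper itself defers this lemma to \cite[Proposition~4.2]{BGSR-16} rather than proving it, your route---M\"obius inversion on the subset lattice for~(i), the induction (using exchangeability) for the orthogonality in~(ii), and the resulting Pythagorean computation for the quadratic identity---is exactly the standard intended argument, with the bookkeeping handled correctly. The only cosmetic remark is that the quadratic identity is naturally an integral over $\D^{N+1}$ (as you implicitly use, $F_N$ being a density on $\D^{N+1}$), the domain $\D^N$ in the displayed statement being a typo.
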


As the above shows, for $0\le m\le N$, the so-defined cumulant $G_N^{m+1}$ is an element of the following Hilbert space,
\begin{multline*}
\Ld^2_\beta(\D^{m+1})\,:=\,\bigg\{G\in\Ld^2_\loc(\D^{m+1}):\int_{\D^{m+1}}|G|^2M_\beta^{\otimes(m+1)}<\infty,\\
\text{ $G$ symmetric in the last $m$ entries}
\bigg\},
\end{multline*}
endowed with the Hilbert norm
\[\|G\|_{\Ld^2_\beta(\D^{m+1})}^2:=\langle G,G\rangle_{\Ld^2_\beta(\D^{m+1})},\qquad\langle H,G\rangle_{\Ld^2_\beta(\D^{m+1})}:=\int_{\D^{m+1}}\overline H\,G\,M^{\otimes (m+1)}_\beta,\]
and for $1\le p\le\infty$ we similarly define $\Ld^p_\beta(\D^{m+1})$ with
\[\|G\|_{\Ld^p_\beta(\D^{m+1})}:=\Big(\int_{\D^{m+1}}|G|^pM_\beta^{\otimes(m+1)}\Big)^\frac1p.\]
As in~\cite{BGSR-16}, the use of symmetry in Lemma~\ref{lem:cumulants}(ii) in form of combinatorial factors leads to a priori estimates on cumulants, which describe some decorrelation between particles and can thus be seen as a weak version of chaos.
An important feature is that these estimates further hold uniformly in time, therefore playing a key role for rigorous long-time analysis.
Note however that they are only suboptimal: $G_N^2$ is for instance expected to be of order $O(\frac1N)$ instead of $O(\frac1{N^{1/2}})$,
but these estimates serve as a starting point to be improved a posteriori.

\begin{lem}[Time-uniform a priori estimates on cumulants]\label{lem:cum-ap}
If for some $q>1$ there hold
\[\tfrac{f^\circ}{M_\beta}\in\Ld^{2q}_\beta(\D)\qquad\text{and}\qquad\beta\|V\|_{\Ld^\infty(\T^d)}<\tfrac{4(q-1)}{2q-1},\]
then for all $0\le m\le N$ we have
\[\sup_{t\ge0}\|G_N^{m+1;t}\|_{\Ld^2_\beta(\D^{m+1})}\,\lesssim_{m,q,\beta}\, N^{-\frac{m}2}\big\|\tfrac{f^\circ}{M_\beta}\big\|_{\Ld^{2q}_\beta(\D)}.\qedhere\]
\end{lem}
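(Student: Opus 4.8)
The plan is to exploit the orthogonality identity in Lemma~\ref{lem:cumulants}(ii) as a conservation law for the Liouville flow, following the strategy of~\cite{BGSR-16}. The starting point is the observation that the quantity $\int_{\D^{N+1}}\frac{|F_N|^2}{M_\beta^{\otimes(N+1)}}$ is, up to the Gibbs-vs-Maxwellian discrepancy, a Lyapunov functional for \eqref{eq:Liouville2}: the transport part $\sum_j v_j\cdot\nabla_{x_j}$ is skew-adjoint in $\Ld^2_\beta$, and the interaction part $\frac1N\sum_{j\ne l}\nabla V(x_j-x_l)\cdot\nabla_{v_j}$ is also skew-adjoint with respect to the \emph{Gibbs} weight $M_{N,\beta}^{\otimes}$ by construction of the Gibbs measure. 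The main technical point will be to control the error incurred by working with the Maxwellian proxy $M_\beta^{\otimes(N+1)}$ rather than the true equilibrium $M_{N,\beta}$; here the smallness hypothesis $\beta\|V\|_{\Ld^\infty}<\frac{4(q-1)}{2q-1}$ enters, ensuring that the ratio $M_{N,\beta}/M_\beta^{\otimes N}$ is bounded in a suitable $\Ld^p_\beta$ sense uniformly in $N$ (a standard cluster-expansion / polymer-gas estimate for a stable potential at high temperature), so that the modified functional $\int\frac{|F_N|^2}{M_\beta^{\otimes(N+1)}}$ is \emph{almost} conserved, or at worst controlled by its initial value times a constant depending on $q,\beta$.

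Concretely, I would first establish the differential inequality
\[
\frac{d}{dt}\int_{\D^{N+1}}\frac{|F_N^t|^2}{M_\beta^{\otimes(N+1)}}\,\lesssim_{q,\beta}\,\int_{\D^{N+1}}\frac{|F_N^t|^2}{M_\beta^{\otimes(N+1)}},
\]
by computing the time derivative, using the skew-adjointness of the transport term exactly, and estimating the remaining interaction contribution via the $\Ld^\infty$ bound on $\nabla V$ together with a Gaussian integration-by-parts in the $v_j$ variables (each $\nabla_{v_j}$ hitting $M_\beta^{\otimes(N+1)}$ produces a factor $\beta v_j$, and $\frac1N\sum_j \beta v_j\cdot\nabla V(x_j-x_l)$ is handled by Cauchy--Schwarz and the fact that $M_\beta$ has all moments). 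A Grönwall argument would then in fact give exponential growth, which is too weak; so instead I would use the exact Gibbs structure to cancel the interaction term identically — the only loss is then the static comparison between the two reference measures, giving
\[
\int_{\D^{N+1}}\frac{|F_N^t|^2}{M_\beta^{\otimes(N+1)}}\,\lesssim_{q,\beta}\,\int_{\D^{N+1}}\frac{|F_N^\circ|^2}{M_\beta^{\otimes(N+1)}}\,=\,\big\|\tfrac{f^\circ}{M_\beta}\big\|_{\Ld^2_\beta(\D)}^2\,\Big\|\tfrac{M_{N,\beta}}{M_\beta^{\otimes N}}\Big\|_{\Ld^\infty}
\]
uniformly in $t$, where the last factor is controlled by the high-temperature cluster expansion; the upgrade from $\Ld^2_\beta$ to $\Ld^{2q}_\beta$ on $f^\circ$ is exactly what absorbs the polynomial growth of that cluster factor in $N$.

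Finally, I would feed this time-uniform bound into the orthogonality identity of Lemma~\ref{lem:cumulants}(ii),
\[
\sum_{m=0}^N\binom Nm\|G_N^{m+1;t}\|_{\Ld^2_\beta(\D^{m+1})}^2\,=\,\int_{\D^{N+1}}\frac{|F_N^t|^2}{M_\beta^{\otimes(N+1)}}\,\lesssim_{q,\beta}\,\big\|\tfrac{f^\circ}{M_\beta}\big\|_{\Ld^{2q}_\beta(\D)}^2,
\]
and keep only the single term of index $m$ on the left: since $\binom Nm\gtrsim_m N^m$ for fixed $m$, dividing through yields $\|G_N^{m+1;t}\|_{\Ld^2_\beta(\D^{m+1})}^2\lesssim_{m,q,\beta} N^{-m}\|\frac{f^\circ}{M_\beta}\|_{\Ld^{2q}_\beta(\D)}^2$, which is the claim after taking square roots. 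The main obstacle is the second step: making the cancellation of the interaction term rigorous despite using the wrong reference measure, i.e.\ proving that the Gibbs correction $M_{N,\beta}/M_\beta^{\otimes N}$ stays under control in a strong enough norm uniformly in $N$ — this is precisely where positive-definiteness of $V$ (stability) and the high-temperature smallness condition on $\beta\|V\|_{\Ld^\infty}$ are indispensable, and where the exponent $q>1$ must be spent.
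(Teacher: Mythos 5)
Your skeleton (orthogonality identity of Lemma~\ref{lem:cumulants}(ii), a conservation law for the Liouville flow built on the Gibbs measure, a comparison of reference measures, and the combinatorial factor $\binom Nm\gtrsim_m N^m$) is indeed the paper's, and the final step of keeping a single term in the orthogonality identity is fine. But the central quantitative step is wrong as written. The ``static comparison'' you propose, bounding $\int|F_N^t|^2/M_\beta^{\otimes(N+1)}$ by $\|f^\circ/M_\beta\|_{\Ld^2_\beta}^2\,\|M_{N,\beta}/M_\beta^{\otimes N}\|_{\Ld^\infty}$, loses a factor that is \emph{exponentially} large in $N$: pointwise $\widetilde M_{N,\beta}/M_\beta^{\otimes(N+1)}=(\widetilde Z'_{N,\beta})^{-1}e^{-\frac\beta{2N}\sum_{j\ne l}V(x_j-x_l)}$, whose exponent is of size $N\beta\|V\|_{\Ld^\infty}$ while $\widetilde Z'_{N,\beta}\simeq e^{-N\frac\beta2\widehat V(0)}$; no high-temperature cluster expansion makes this $\Ld^\infty$ ratio bounded uniformly in $N$ (what is controlled are suitably normalized \emph{integrated} quantities, i.e.\ partition functions, not pointwise ratios), and since you must compare the two reference measures in both directions (once to invoke conservation, once to return to the Maxwellian weight), the exponentials do not cancel and you are left with $e^{cN}$, which destroys the $N^{-m/2}$ conclusion after dividing by $\binom Nm$. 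Moreover, the claim that the upgrade from $\Ld^2_\beta$ to $\Ld^{2q}_\beta$ on $f^\circ$ ``absorbs the polynomial growth of that cluster factor in $N$'' cannot work: $\|f^\circ/M_\beta\|_{\Ld^{2q}_\beta(\D)}$ is an $N$-independent norm of a function of a single velocity variable, so it can absorb no $N$-dependence at all --- and the factor at stake is exponential, not polynomial.

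What the paper actually does at this point is different and is where $q>1$ and the smallness condition are really spent. Since the full $(N+1)$-particle Gibbs measure $\widetilde M_{N,\beta}$ (including the tagged particle) is stationary for~\eqref{eq:Liouville2}, the functional $\int|F_N^t|^{2q}/\widetilde M_{N,\beta}^{2q-1}$ is \emph{exactly} conserved; H\"older with exponents $(q,q')$ then bounds $\int|F_N^t|^2/M_\beta^{\otimes(N+1)}$ by the $1/q$ power of this conserved quantity (evaluated at $t=0$, producing $\|f^\circ/M_\beta\|_{\Ld^{2q}_\beta(\D)}^2$ and partition-function ratios) times $\big(\int\widetilde M_{N,\beta}^{q'+1}/|M_\beta^{\otimes(N+1)}|^{q'}\big)^{1/q'}$, which is again a ratio of partition functions, now at inverse temperatures $\beta$ and $(q'+1)\beta$. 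The exponential-in-$N$ parts of all these partition functions cancel exactly because the limiting free energy $c_\beta=\frac\beta2\widehat V(0)$ is linear in $\beta$ (positive definiteness $\widehat V\ge0$ enters via Csisz\'ar--Kullback--Pinsker to identify the minimizer), and the remaining prefactors converge as $N\uparrow\infty$ by the large-deviation result of Ben Arous--Brunaud provided $(q'+1)\beta\|V\|_{\Ld^\infty}<4$, which is precisely the hypothesis $\beta\|V\|_{\Ld^\infty}<\frac{4(q-1)}{2q-1}$. So the extra integrability $q>1$ is spent in this H\"older interpolation between the conserved Gibbs-weighted $\Ld^{2q}$ functional and the Maxwellian-weighted $\Ld^2$ functional required by the orthogonality identity --- not on the initial data, and not on an $\Ld^\infty$ comparison of the two measures. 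Without this (or an equivalent mechanism ensuring the exponential free-energy factors cancel), your argument does not yield a bound uniform in $N$, so the proof as proposed has a genuine gap.
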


\begin{proof}
Next to~\eqref{eq:def-MNbeta}, define the full $(N+1)$-particle Gibbs measure
\begin{equation*}
\widetilde M_{N,\beta}(z_0,z_{[N]})\,:=\,\widetilde Z_{N,\beta}^{-1}\,e^{-\frac\beta2\sum_{j=0}^N|v_j|^2-\frac\beta{2N}\sum_{0\le j\ne l\le N}V(x_j-x_l)},
\end{equation*}
which is a global equilibrium of the Liouville equation~\eqref{eq:Liouville2}. Hence, we find for $1\le q<\infty$,
\[\partial_t\int_{\D^N}\tfrac{|F_N^t|^{2q}}{|\widetilde M_{N,\beta}|^{2q-1}}=0.\]
For $1<q<\infty$, setting $q':=\frac{q}{q-1}$, Lemma~\ref{lem:cumulants}(ii), Hölder's inequality, and this equilibrium property lead to
\begin{eqnarray}\label{eq:bound-GNk-pr}
\|G_N^{m+1;t}\|_{\Ld^2_\beta(\D^{m+1})}^2&\le&\binom Nm^{-1}\int_{\D^N}\tfrac{|F_N^t|^2}{M_\beta^{\otimes (N+1)}}\nonumber\\
&\le&\binom Nm^{-1}\Big(\int_{\D^{N+1}}\tfrac{|F_N^\circ|^{2q}}{|\widetilde M_{N,\beta}|^{2q-1}}\Big)^\frac1q\Big(\int_{\D^{N+1}}\tfrac{|\widetilde M_{N,\beta}|^{q'+1}}{|M_\beta^{\otimes (N+1)}|^{q'}}\Big)^\frac{1}{q'},
\end{eqnarray}
and it remains to estimate the two right-hand side factors.
Factoring out Maxwellians, we split
\begin{equation}\label{eq:split-MN}
M_{N,\beta}=M_\beta^{\otimes N}M_{N,\beta}',\qquad\widetilde M_{N,\beta}=M_\beta^{\otimes (N+1)}\widetilde M_{N,\beta}',
\end{equation}
in terms of
\begin{eqnarray*}
M_{N,\beta}'(x_{[N]})&:=&(Z_{N,\beta}')^{-1}e^{-\frac\beta{2N}\sum_{1\le j\ne l\le N}V(x_j-x_l)},\\
\widetilde M_{N,\beta}'(x_0,x_{[N]})&:=&(\widetilde Z_{N,\beta}')^{-1}e^{-\frac\beta{2N}\sum_{0\le j\ne l\le N}V(x_j-x_l)},
\end{eqnarray*}
where $Z_{N,\beta}'$ and $\widetilde Z_{N,\beta}'$ are the corresponding normalization factors on $(\T^d)^N$ and $(\T^d)^{N+1}$, respectively.
The definition~\eqref{eq:tagged-initial} of $F_N^\circ$ yields
\begin{eqnarray*}
\Big(\int_{\D^{N+1}}\tfrac{|F_N^\circ|^{2q}}{|\widetilde M_{N,\beta}|^{2q-1}}\Big)^\frac1q&=&\big\|\tfrac{f^\circ}{M_\beta}\big\|_{\Ld^{2q}_\beta(\D)}^2\Big(\int_{(\T^d)^{N+1}}\tfrac{|M_{N,\beta}'|^{2q}}{|\widetilde M_{N,\beta}'|^{2q-1}}\Big)^\frac1q\\
&\le&e^{2\beta\|V\|_{\Ld^\infty(\T^d)}}\big\|\tfrac{f^\circ}{M_\beta}\big\|_{\Ld^{2q}_\beta(\D)}^2\Big(\tfrac{\widetilde Z_{N,\beta}'}{Z_{N,\beta}'}\Big)^{1+\frac1{q'}},
\end{eqnarray*}
and similarly,
\begin{align*}
\int_{\D^{N+1}}\tfrac{|\widetilde M_{N,\beta}|^{q'+1}}{|M_\beta^{\otimes (N+1)}|^{q'}}\,=\,\tfrac{Z'_{N,(q'+1)\beta}}{(Z'_{N,\beta})^{q'+1}}.
\end{align*}
Inserting these estimates into~\eqref{eq:bound-GNk-pr} leads to
\begin{equation}\label{eq:bound-GNm-pre}
\|G_N^{m+1;t}\|_{\Ld^2_\beta(\D^{k+1})}^2\le\binom Nm^{-1}e^{2\beta\|V\|_{\Ld^\infty(\T^d)}}\big\|\tfrac{f^\circ}{M_\beta}\big\|_{\Ld^{2q}_\beta(\D)}^2\tfrac{(\widetilde Z_{N,\beta}')^{1+\frac1{q'}}(Z'_{N,(q'+1)\beta})^{\frac1{q'}}}{(Z'_{N,\beta})^{2(1+\frac1{q'})}}.
\end{equation}
The problem is thus reduced to checking that the last factor is bounded uniformly in~$N$ for~$\beta$ small enough.
Rather than going through a direct tedious computation of the partition functions, we appeal to standard large deviation theory for particle systems: it follows from
e.g.~\cite[Theorem~B(ii)]{Benarous-Brunaud-90} that $e^{Nc_{\beta}}Z'_{N,\beta}\to Z'_{\beta}\in(0,\infty)$ and $e^{Nc_{\beta}}\widetilde Z'_{N,\beta}\to\widetilde Z'_{\beta}\in(0,\infty)$ as $N\uparrow\infty$, where the constant $c_\beta$ is characterized by
\[c_\beta:=\inf\bigg\{\int_{\T^d}\log\mu\,d\mu+\frac\beta2\iint_{\T^d\times\T^d} V(x-y)\,d\mu(x)d\mu(y)\,:\,\mu\in\Pc(\T^d)\bigg\},\]
whenever this infimum is reached at some $\mu_0$ and is non-degenerate in the sense that the operator $\Sigma$ on $\Ld^2(\T^d,\mu_0)$ defined by $\Sigma f (y) := 2 \int_{\T^d} f(x)  V(x-y)\,d\mu_0(x)$
does not have 1 as an eigenvalue.
As $\int_{\T^d}V=\widehat V(0)\ge0$, the Csisz\'ar-Kullback-Pinsker inequality yields
\begin{eqnarray*}
\lefteqn{\int_{\T^d}\log\mu\,d\mu+\frac\beta2\iint_{\T^d\times\T^d} V(x-y)\,d\mu(x)d\mu(y)}\\
&=&\int_{\T^d}\log\mu\,d\mu+\frac\beta2\iint_{\T^d\times\T^d} V(x-y)\,(d\mu(x)-dx)(d\mu(y)-dy)+ \frac\beta2 \widehat V(0) \\
&\ge &\int_{\T^d}\log\mu\,d\mu-\frac\beta2\|V\|_{\Ld^\infty(\T^d)}\|\mu-1\|_{\operatorname{TV}}^2+ \frac\beta2 \widehat V(0)\\
&\ge& \big(2-\tfrac\beta2\|V\|_{\Ld^\infty(\T^d)}\big)\|\mu-1\|_{\operatorname{TV}}^2+\frac\beta2 \widehat V(0).
\end{eqnarray*}
Hence, for $\beta\|V\|_{\Ld^\infty(\T^d)}<4$, there holds $c_\beta=\frac\beta2 \widehat V(0)$ and the infimum is indeed nondegenerate, which entails $e^{N\frac\beta2 \widehat V(0)}Z'_{N,\beta}\to Z'_{\beta}$ and $e^{N\frac\beta2 \widehat V(0)}\widetilde Z'_{N,\beta}\to \widetilde Z'_{\beta}$ and implies that the last factor in~\eqref{eq:bound-GNm-pre} is uniformly bounded as $N\uparrow\infty$.
\end{proof}

Next, we compute the first three initial cumulants, based on the special form~\eqref{eq:tagged-initial} of the initial data $F_N^\circ$.

\begin{lem}[Initial cumulants]\label{lem:initial}
The initial first two cumulants are
\[G_N^{1;\circ}(z)=\tfrac{f^\circ}{M_\beta}(v)=:g^\circ(v),\qquad G_N^{2;\circ}=0,\]
and the initial third cumulant can be written as $G_N^{3;\circ}(z_{[0,2]})=\tfrac1Ng^\circ(v_0)H_{N,\beta}^\circ(x_1-x_2)$ for some function $H_{N,\beta}^\circ$ with $\|H^\circ_{N,\beta}\|_{\Ld^2(\T^d)}\lesssim_\beta1$.
\end{lem}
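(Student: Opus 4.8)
The plan is to compute the marginals $F_N^{1;\circ},F_N^{2;\circ},F_N^{3;\circ}$ of the initial datum~\eqref{eq:tagged-initial} explicitly, read off the cumulants from~\eqref{eq:def-cum}, and bound the residual pair-correlation term using the orthogonality identity of Lemma~\ref{lem:cumulants}(ii) together with the partition-function asymptotics already established in the proof of Lemma~\ref{lem:cum-ap}.

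First I would factor out the Maxwellian as in~\eqref{eq:split-MN}, writing $M_{N,\beta}=M_\beta^{\otimes N}M_{N,\beta}'$, and observe that, since the interaction defining $M_{N,\beta}'$ is invariant under joint translations on $(\T^d)^N$ (normalized so that $|\T^d|=1$), every one-particle marginal of $M_{N,\beta}'$ is identically $1$ and its two-particle marginal is a function of the difference variable alone, which I denote $\rho_{N,\beta}(x_1-x_2)$ (with $\int_{\T^d}\rho_{N,\beta}=1$). Integrating~\eqref{eq:tagged-initial} over the last variables then gives $F_N^{1;\circ}(z_0)=f^\circ(v_0)$, $F_N^{2;\circ}(z_0,z_1)=f^\circ(v_0)M_\beta(v_1)$, and $F_N^{3;\circ}(z_0,z_1,z_2)=f^\circ(v_0)M_\beta(v_1)M_\beta(v_2)\rho_{N,\beta}(x_1-x_2)$; substituting these into~\eqref{eq:def-cum} for $m=0,1,2$, the Maxwellians cancel and the alternating sums telescope, yielding $G_N^{1;\circ}(z_0)=g^\circ(v_0)$, $G_N^{2;\circ}=g^\circ(v_0)-g^\circ(v_0)=0$, and $G_N^{3;\circ}(z_{[0,2]})=g^\circ(v_0)\big(\rho_{N,\beta}(x_1-x_2)-1\big)$. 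Setting $H_{N,\beta}^\circ:=N(\rho_{N,\beta}-1)$ recovers the announced form of $G_N^{3;\circ}$, and the claim reduces to the pair-correlation bound $\|\rho_{N,\beta}-1\|_{\Ld^2(\T^d)}\lesssim_\beta N^{-1}$, which concerns the background only.

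For the latter, combining Lemma~\ref{lem:cumulants}(ii) at $t=0$ (keeping only the $m=2$ term) with the factorization $F_N^\circ=g^\circ(v_0)M_\beta(v_0)M_\beta^{\otimes N}(v_{[N]})M_{N,\beta}'(x_{[N]})$ gives $\binom N2\,\|G_N^{3;\circ}\|_{\Ld^2_\beta(\D^3)}^2\le\|g^\circ\|_{\Ld^2_\beta(\R^d)}^2\,Z'_{N,2\beta}/(Z'_{N,\beta})^2$, whereas $\|G_N^{3;\circ}\|_{\Ld^2_\beta(\D^3)}^2=\|g^\circ\|_{\Ld^2_\beta(\R^d)}^2\,\|\rho_{N,\beta}-1\|_{\Ld^2(\T^d)}^2$ by Fubini (the $x_0$-, $v_1$- and $v_2$-integrals being trivial and the translation-invariant $x_1,x_2$-integral reducing to the $\Ld^2(\T^d)$-norm of the difference). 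Cancelling the common factor $\|g^\circ\|_{\Ld^2_\beta(\R^d)}^2$ leaves $\|\rho_{N,\beta}-1\|_{\Ld^2(\T^d)}^2\le\binom N2^{-1}Z'_{N,2\beta}/(Z'_{N,\beta})^2$, and since the ratio $Z'_{N,2\beta}/(Z'_{N,\beta})^2$ converges to a finite positive limit by the large-deviation asymptotics of the proof of Lemma~\ref{lem:cum-ap}, invoked at the two inverse temperatures $\beta$ and $2\beta$ (both admissible because $\beta\|V\|_{\Ld^\infty}\le 1/C_0$), we obtain $\|\rho_{N,\beta}-1\|_{\Ld^2(\T^d)}\lesssim_\beta N^{-1}$, hence $\|H_{N,\beta}^\circ\|_{\Ld^2(\T^d)}\lesssim_\beta1$.

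So the only genuinely delicate point is the bookkeeping in the cumulant computation — above all the translation-invariance remark, which trivializes the one-particle marginal of $M_{N,\beta}'$, thereby forcing $G_N^{2;\circ}\equiv0$ and identifying $G_N^{3;\circ}$ with $g^\circ(v_0)$ times the pair-correlation defect — while the $O(N^{-1})$ bound itself requires no new estimate beyond the $\Ld^2$-orthogonality of cumulants and the torus partition-function asymptotics already in hand. (If instead one wished to bypass that machinery, or to display $H_{N,\beta}^\circ$ explicitly, a convergent Mayer cluster expansion — whose small parameter is again $\beta\|V\|_{\Ld^\infty}$, and which uses $\widehat V\ge0$ to bound the screening denominators — gives the same $O(N^{-1})$ bound together with the limit $\widehat{H_{N,\beta}^\circ}(k)\to-\beta\widehat V(k)/(1+\beta\widehat V(k))$ for $k\ne0$.)
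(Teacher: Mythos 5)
Your proposal is correct and follows essentially the same route as the paper: factor out $M_\beta^{\otimes N}$, use translation invariance of $M'_{N,\beta}$ on the torus to compute the first three marginals (forcing $G_N^{2;\circ}=0$ and $G_N^{3;\circ}(z_{[0,2]})=g^\circ(v_0)\big(\rho_{N,\beta}(x_1-x_2)-1\big)$), and control $H^\circ_{N,\beta}=N(\rho_{N,\beta}-1)$ through the $\Ld^2_\beta$-orthogonality of cumulants. The only (harmless) difference lies in the final bound: the paper simply invokes Lemma~\ref{lem:cum-ap} at $t=0$ with $m=2$, whereas you re-derive exactly that special case from Lemma~\ref{lem:cumulants}(ii), the exact factorization of $F_N^\circ$, and the partition-function asymptotics at inverse temperatures $\beta$ and $2\beta$ (the exponential factors cancel because $c_\beta=\tfrac\beta2\widehat V(0)$ is linear in $\beta$), which bypasses the H\"older step of Lemma~\ref{lem:cum-ap} but uses the same machinery.
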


\begin{proof}
The computation of the first marginal $F_N^{1;\circ}=f^\circ$ of $F_N^\circ$ is obvious from~\eqref{eq:tagged-initial}. We turn to the next marginals,
and we split $M_{N,\beta}$ as in~\eqref{eq:split-MN}.
By translation invariance of~$M'_{N,\beta}$ on the torus, we find
\[\tfrac{F_N^{2;\circ}}{M_\beta^{\otimes2}}(z_{[0,1]})\,=\,g^\circ(v_0)\int_{(\T^d)^{N-1}}M'_{N,\beta}(x_{[N]})\,dx_{[2,N]}=g^\circ(v_0),\]
hence by definition~\eqref{eq:def-cum},
\[G_{N}^{2;\circ}(z_{[0,1]})=\tfrac{F_N^{2;\circ}}{M_\beta^{\otimes2}}(z_0,z_1)-\tfrac{F_N^{1;\circ}}{M_\beta}(z_0)=0.\]
Next, the third marginal takes the form
\[\tfrac{F_N^{3;\circ}}{M_\beta^{\otimes3}}(z_{[0,2]})\,=\,g^\circ(v_0)\int_{(\T^d)^{N-2}}M'_{N,\beta}(x_{[N]})\,dx_{[3,N]},\]
hence by definition~\eqref{eq:def-cum},
\[G_N^{3;\circ}(z_{[0,2]})=g^\circ(v_0)\Big(\int_{(\T^d)^{N-2}}M'_{N,\beta}(x_{[N]})\,dx_{[3,N]}-1\Big).\]
By translation invariance, we can indeed write $G_N^{3;\circ}(z_{[0,2]})=\tfrac1Ng^\circ(v_0)H_{N,\beta}^\circ(x_1-x_2)$ for some function $H_{N,\beta}^\circ:\T^d\to\R$. Applying Lemma~\ref{lem:cum-ap} in the form
\[\|g^\circ\|_{\Ld^2_\beta(\R^d)}\|H^\circ_{N,\beta}\|_{\Ld^2(\T^d)}\,=\,\|NG_N^{3;\circ}\|_{\Ld^2_\beta(\D^3)}\,\lesssim_\beta\,\|g^\circ\|_{\Ld^\infty(\R^d)},\]
the conclusion follows.
\end{proof}

\subsection{BBGKY hierarchy}\label{sec:eqns-BBGKY}
As is classical, the Liouville equation~\eqref{eq:Liouville2} for $F_N$ is equivalent to the following BBGKY hierarchy of equations for the marginals, for $0\le m\le N$,
\begin{multline}\label{eq:BBGKY}
\partial_tF_{N}^{m+1}+\sum_{j=0}^mv_j\cdot\nabla_{x_j}F_{N}^{m+1}=\frac1N\sum_{0\le j\ne l\le m}\nabla V(x_j-x_l)\cdot\nabla_{v_j}F_{N}^{m+1}\\
+\tfrac{N-m}N\sum_{j=0}^m\int_{\D}\nabla V(x_j-x_{*})\cdot\nabla_{v_j}F_{N}^{m+2}(z_0,z_{[m]},z_*)\,dz_{*},
\end{multline}
with the convention $F_N^{N+2}:=0$.
Note that the first right-hand side term is of order $O(\frac{m^2}N)$ and is precisely the one that creates correlations between initially independent particles and deviates from the mean-field theory.
This hierarchy~\eqref{eq:BBGKY} can alternatively be written as a hierarchy of equations on the cumulants.

\begin{lem}[BBGKY hierarchy on cumulants]\label{lem:eqns}
For $0\le m\le N$, the cumulant $G_N^{m+1}$ satisfies
\[\partial_tG_N^{m+1}+iL_{m+1}G_N^{m+1}\,=\,M_{m+2}^{m+1}G_N^{m+2}+\tfrac1N\big(S_{m-1}^{m+1}G_N^{m-1}+S_m^{m+1}G_N^m+S_{m+1}^{m+1}G_N^{m+1}\big),\]
with the convention $G_N^{-1},G_N^0,G_N^{N+2}=0$, where we have set $L_{m+1}:=\sum_{j=0}^mL_{m+1}^{(j)}$ and
\begin{eqnarray*}
iL_{m+1}^{(j)}G_N^{m+1}&:=&v_j\cdot\nabla_{x_j}G_N^{m+1}+\mathds1_{j\ne0}\tfrac{N+1-m}N\beta v_j\cdot\int_{\D}\nabla V(x_j-x_{*})\,\\
&&\hspace{4cm}\times G_N^{m+1}(z_0,z_{[m]\setminus\{j\}},z_*)\,M_\beta(v_{*})\,dz_{*},\\
M_{m+2}^{m+1}G_N^{m+2}&:=&\tfrac{N-m}N\sum_{j=0}^m\int_{\D}\nabla V(x_j-x_{*})\cdot(\nabla_{v_j}-\beta v_j) G_N^{m+2}(z_0,z_{[m]},z_*)\,M_\beta(v_{*})\,dz_{*},\\
S_{m-1}^{m+1}G_N^{m-1}&:=&\sum_{1\le j\ne l\le m}\nabla V(x_j-x_l)\cdot(-\beta v_j)G_N^{m-1}(z_0,z_{[m]\setminus\{j,l\}}),\\
S_m^{m+1}G_N^{m}&:=&\sum_{j=0}^m\sum_{1\le l\le m\atop j\ne l}\nabla V(x_j-x_l)\cdot(\nabla_{v_j}-\beta v_j+\beta v_l)G_N^{m}(z_0,z_{[m]\setminus\{l\}})\\
&&-\sum_{1\le j\ne l\le m}\int_{\D}\nabla V(x_j-x_{*})\cdot(-\beta v_j)\,G_N^m(z_0,z_{[m]\setminus\{j,l\}},z_*)\,M_\beta(v_{*})\,dz_{*},\\
S_{m+1}^{m+1}G_N^{m+1}&:=&\sum_{0\le j\ne l\le m}\nabla V(x_j-x_l)\cdot(\nabla_{v_j}-\beta v_j)G_N^{m+1}\\
&&\hspace{-1.4cm}-\sum_{j=0}^m\sum_{1\le l\le m\atop j\ne l}\int_{\D}\nabla V(x_j-x_{*})\cdot(\nabla_{v_j}-\beta v_j) G_N^{m+1}(z_0,z_{[m]\setminus\{l\}},z_*)\,M_\beta(v_{*})\,dz_{*}.\qedhere
\end{eqnarray*}
\end{lem}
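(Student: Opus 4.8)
The plan is to obtain this identity algebraically from the marginal hierarchy~\eqref{eq:BBGKY} and the cumulant definition~\eqref{eq:def-cum}, following the classical Bogolyubov bookkeeping used in~\cite{BGSR-16} and~\cite[Appendix~A]{Nicholson}. First I would renormalize: setting $g_N^{m+1}:=F_N^{m+1}/M_\beta^{\otimes(m+1)}$ and using that $M_\beta$ depends only on velocities with $\nabla_{v_j}M_\beta(v_j)=-\beta v_jM_\beta(v_j)$, dividing~\eqref{eq:BBGKY} by $M_\beta^{\otimes(m+1)}(v_{[0,m]})$ leaves the transport term untouched, replaces each $\nabla_{v_j}$ by $\nabla_{v_j}-\beta v_j$, and inserts a weight $M_\beta(v_*)$ in the gain integral, so that
\begin{multline*}
\partial_tg_N^{m+1}+\sum_{j=0}^mv_j\cdot\nabla_{x_j}g_N^{m+1}=\tfrac1N\sum_{0\le j\ne l\le m}\nabla V(x_j-x_l)\cdot(\nabla_{v_j}-\beta v_j)g_N^{m+1}\\
+\tfrac{N-m}N\sum_{j=0}^m\int_\D\nabla V(x_j-x_*)\cdot(\nabla_{v_j}-\beta v_j)g_N^{m+2}(z_0,z_{[m]},z_*)\,M_\beta(v_*)\,dz_*,
\end{multline*}
with the convention $g_N^{N+2}:=0$; in these terms the cluster expansion of Lemma~\ref{lem:cumulants}(i) reads $g_N^{m+1}(z_{[0,m]})=\sum_{j=0}^m\sum_{\sigma\in\mathfrak S_j^m}G_N^{j+1}(z_0,z_\sigma)$.

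Next I would differentiate~\eqref{eq:def-cum} in time, $\partial_tG_N^{m+1}=\sum_{j=0}^m(-1)^{m-j}\sum_{\sigma\in\mathfrak S_j^m}\partial_tg_N^{j+1}(z_0,z_\sigma)$, substitute the renormalized hierarchy for each $\partial_tg_N^{j+1}$, and re-expand every $g_N^{j+1}$ and $g_N^{j+2}$ occurring on the right-hand side back into cumulants via the cluster expansion. This yields a finite linear combination of cumulants $G_N^{k+1}$, $k\le m+1$, and it remains to reorganize it into the five displayed operators. Two elementary facts do the work. (i) By periodicity $\int_{\T^d}\nabla V(x_j-x_*)\,dx_*=0$, so in the gain term every cumulant block that neither contains $z_*$ nor is differentiated in $v_j$ integrates to zero --- this eliminates the bulk of the terms. (ii) The inclusion--exclusion signs $(-1)^{m-j}$ cause all ``disconnected'' contributions to telescope away: only configurations survive in which the active interaction edge --- $\{j,l\}$ for the internal term, $\{j,*\}$ for the gain term --- is attached to the surviving cumulant block, and the associated prefactors recombine, via $\tfrac{N-j}N-\tfrac{N-1-j}N=\tfrac1N$, into the stated ones. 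Sorting the survivors by cumulant order then gives: the order-$(m+2)$ piece, which can only come from the gain term at level $m+1$ with $z_*$ joining the full block $[m]$, yielding $M_{m+2}^{m+1}G_N^{m+2}$; the order-$(m+1)$ pieces, namely the free transport $\sum_jv_j\cdot\nabla_{x_j}G_N^{m+1}$ (from the transport of $g_N^{m+1}$), the self-consistent term of $L_{m+1}^{(j)}$ for $j\ne0$ (from the gain term with $z_*$ joining the $(m-1)$-block $[m]\setminus\{j\}$), and the operator $\tfrac1NS_{m+1}^{m+1}$ (from the internal interaction of $g_N^{m+1}$ together with the $O(\tfrac1N)$-residues of the gain terms at the two top levels); and the order-$m$ and order-$(m-1)$ pieces, which, using the evenness of $V$ to match signs, reassemble into $\tfrac1NS_m^{m+1}G_N^m$ and $\tfrac1NS_{m-1}^{m+1}G_N^{m-1}$. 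The boundary conventions are consistent: for $m=N$ the gain term carries the vanishing factor $\tfrac{N-m}N$, matching $G_N^{N+2}=0$; for $m\le1$ the sums defining $S_{m-1}^{m+1}$ and $S_m^{m+1}$ are empty, matching $G_N^{-1}=G_N^0=0$.

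The main --- and essentially only --- difficulty is combinatorial: keeping track of the binomial multiplicities $\binom{\cdot}{\cdot}$ and the alternating signs so that each surviving term lands in the correct operator with the correct prefactor, in particular the cancellations between consecutive levels of the hierarchy that generate the $O(\tfrac1N)$ operators $S_{m-1}^{m+1},S_m^{m+1},S_{m+1}^{m+1}$. There is no analytic content; the statement is a finite, deterministic identity between first-order integro-differential operators, valid for each fixed $N$. The one point worth stressing is that the self-consistent term sits \emph{inside} the transport operator $L_{m+1}$ rather than appearing as a separate collision term: it is precisely the linearized Vlasov contribution --- the mechanism that will later produce the dispersion function $\e_\beta$ and the associated screening --- and it is present because the cumulants were linearized around the Maxwellian proxy $M_\beta^{\otimes N}$ rather than the exact Gibbs equilibrium $M_{N,\beta}$, so that the background's self-consistent response to the correlations does not cancel.
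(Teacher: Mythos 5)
Your proposal follows essentially the same route as the paper's proof: differentiate the cumulant definition~\eqref{eq:def-cum}, insert the (Maxwellian-renormalized) marginal hierarchy~\eqref{eq:BBGKY}, re-expand via the cluster formula of Lemma~\ref{lem:cumulants}(i), and let the alternating signs together with the vanishing of $\int_\D\nabla V(x_j-x_*)\,M_\beta(v_*)\,dz_*$ kill all disconnected contributions, with the $\tfrac{N+1-k}N$ prefactors producing the $O(\tfrac1N)$ operators $S^{m+1}_{m-1},S^{m+1}_m,S^{m+1}_{m+1}$ exactly as in the paper's computation of the terms $A$, $B_0$, $B_1$. The only thing you leave implicit is the explicit cardinality count (the identity $\sum_{k=0}^p(-1)^{p-k}\binom pk=\delta_{p=0}$ applied to the sets $T^m_{ij}$ and to the gain terms), but the mechanism you describe is exactly that computation, so the argument is correct and not genuinely different.
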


For later purposes, we give without proof the expression of the different above-defined operators in Fourier space: denoting by $k_j\in2\pi\Z^d$ the Fourier variable associated with $x_j\in\T^d$ and setting $\hat z_j:=(k_j,v_j)\in\widehat\D:=2\pi\Z^d\times\R^d$,
\begingroup\allowdisplaybreaks
\begin{eqnarray*}
i\widehat L_{m+1}^{(j)}\widehat G_N^{m+1}&:=&ik_j\cdot v_j\Big(\widehat G_N^{m+1}+\mathds1_{j\ne0}\tfrac{N+1-m}N\beta\widehat V(k_j)\langle \widehat G_N^{m+1}\rangle_{v_j}\Big),\\
\widehat M_{m+2}^{m+1}\widehat G_N^{m+2}&:=&\tfrac{N-m}N\sum_{j=0}^m\sum_{k_*\in2\pi\Z^d}ik_*\widehat V(k_*)\\
&&\hspace{1.4cm}\cdot(\nabla_{v_j}-\beta v_j) \big\langle\widehat G_N^{m+2}\big(\hat z_{[0,m]\setminus\{j\}},(k_j-k_*,v_j),(k_*,v_*)\big)\big\rangle_{v_*},\\
\widehat S_{m-1}^{m+1}\widehat G_N^{m-1}&:=&-\sum_{1\le j\ne l\le m}\delta(k_j+k_l)\,ik_j\cdot v_j\beta\widehat V(k_j)\widehat G_N^{m-1}(\hat z_0,\hat z_{[m]\setminus\{j,k\}}),\\
\widehat S_m^{m+1}\widehat G_N^{m}&:=&
-\sum_{j=0}^m\sum_{1\le l\le m\atop j\ne l}ik_l\widehat V(k_l)\cdot(\nabla_{v_j}-\beta v_j+\beta v_l)\,\widehat G_N^{m}\big(\hat z_{[0,m]\setminus\{j,l\}},(k_j+k_l,v_j)\big)\\
&&\hspace{1.4cm}+\sum_{1\le j\ne l\le m}ik_j\cdot v_j\beta\widehat V(k_j)\,\big\langle\widehat G_N^m\big(\hat z_0,\hat z_{[m]\setminus\{j,l\}},(k_j,v_*)\big)\big\rangle_{v_*},\\
\widehat S_{m+1}^{m+1}\widehat G_N^{m+1}&:=&\sum_{0\le j\ne l\le m}\sum_{k_*\in2\pi\Z^d}ik_*\widehat V(k_*)\\
&&\hspace{1.4cm}\cdot(\nabla_{v_j}-\beta v_j)\widehat G_N^{m+1}\big(\hat z_{[0,m]\setminus\{j,l\}},(k_j-k_*,v_j),(k_l+k_*,v_l)\big)\\
&&\hspace{-2.7cm}-\sum_{j=0}^m\sum_{1\le l\le m\atop j\ne l}\sum_{k_*\in2\pi\Z^d}ik_*\widehat V(k_*)\cdot(\nabla_{v_j}-\beta v_j) \big\langle\widehat G_N^{m+1}(\hat z_{[0,m]\setminus\{j,l\}},(k_j-k_*,v_j),(k_*,v_*)\big)\big\rangle_{v_*},
\end{eqnarray*}
\endgroup
where for shortness we abusively use
the implicit correct ordering of variables according to the labels in velocity.

\begin{proof}
[Proof of Lemma~\ref{lem:eqns}]
Combining the definition~\eqref{eq:def-cum} of cumulants in terms of marginals together with the equations of the BBGKY hierarchy~\eqref{eq:BBGKY} for marginals, we find
\[\partial_tG_N^{m+1}+\sum_{k=0}^mv_k\cdot\nabla_{x_k}G_N^{m+1}=A+B,\]
where we have set
\[A\,:=\,\sum_{k=0}^m(-1)^{m-k}\sum_{\sigma\in\mathfrak P_k^m}\tfrac1N\sum_{i,j\in\sigma\cup\{0\}\atop i\ne j}\nabla V(x_i-x_j)\cdot(\nabla_{v_i}-\beta v_i)\tfrac{F_{N}^{k+1}}{M_\beta^{\otimes (k+1)}}(z_0,z_\sigma),\]
and
\begin{multline*}
B:=\sum_{k=0}^m(-1)^{m-k}\big(\tfrac{N-k}N\big)\sum_{\sigma\in\mathfrak P_k^m}\sum_{i\in\sigma\cup\{0\}}\\
\times\int_{\D}\nabla V(x_i-x_{*})\cdot(\nabla_{v_i}-\beta v_i)\tfrac{F_{N}^{k+2}}{M_\beta^{\otimes(k+2)}}(z_0,z_\sigma,z_{*})\,M_\beta(v_{*})\,dz_*.
\end{multline*}
We consider the two right-hand side terms separately and we start with the first one.
Reorganizing the sums yields
\begin{align}\label{eq:rewr-A}
A\,=\,\tfrac1N\sum_{0\le i,j\le m\atop i\ne j}\nabla V(x_i-x_j)\cdot(\nabla_{v_i}-\beta v_i)\sum_{k=0}^m(-1)^{m-k}\sum_{\sigma\in\mathfrak P_k^m\atop i,j\in\sigma\cup\{0\}}\tfrac{F_N^{k+1}}{M_\beta^{\otimes (k+1)}}(z_0,z_\sigma).
\end{align}
Inserting the cluster expansion for marginals in terms of cumulants (cf.~Lemma~\ref{lem:cumulants}(i)), we get
\[T^m_{ij}:= \sum_{k=0}^m(-1)^{m-k}\sum_{\sigma\in\mathfrak P_k^m\atop i,j\in\sigma\cup\{0\}}\tfrac{F_N^{k+1}}{M_\beta^{\otimes(k+1)}}(z_0,z_\sigma)
= \sum_{k=0}^m(-1)^{m-k}\sum_{\sigma\in\mathfrak P_k^m\atop i,j\in\sigma\cup\{0\}} \sum_{l=0}^k\sum_{\tau\in\mathfrak P_l^\sigma}G_N^{l+1}(z_0,z_\tau).\]
First consider the case when $m\geq 2$ and $i,j \neq 0$. We may then decompose
\begin{eqnarray*}
T^m_{ij}
&=& \sum_{l=0}^m\sum_{\tau\in\mathfrak P_l^m \atop i,j \in \tau }G_N^{l+1}(z_0,z_\tau) \sum_{ k= l}^m (-1)^{m-k}\,\sharp\{ \sigma\in\mathfrak P_k^m:\tau \subset \sigma\}\\
 &&+ \sum_{l=0}^{m-1}\sum_{\tau\in\mathfrak P_l^m \atop i\in \tau, j\not\in \tau  }G_N^{l+1}(z_0,z_\tau) \sum_{ k= l}^m (-1)^{m-k}\,\sharp\{ \sigma\in\mathfrak P_k^m: \tau \cup \{j\} \subset \sigma\}\\
&&+\sum_{l=0}^{m-1}\sum_{\tau\in\mathfrak P_l^m \atop j \in \tau, i \not\in \tau  }G_N^{l+1}(z_0,z_\tau) \sum_{ k= l}^m (-1)^{m-k}\,\sharp\{ \sigma\in\mathfrak P_k^m: \tau  \cup \{i\} \subset \sigma\}\\
&&+ \sum_{l=0}^{m-2}\sum_{\tau\in\mathfrak P_l^m \atop i,j \not\in \tau }G_N^{l+1}(z_0,z_\tau) \sum_{ k= l}^m (-1)^{m-k}\,\sharp\{ \sigma\in\mathfrak P_k^m:\tau \cup \{i,j\}  \subset \sigma\}.
\end{eqnarray*}
Computing the cardinalities and using the identity $\sum_{ k= 0} ^ p (-1) ^{p-k} \binom{p}{k} = \delta_{p=0}$, we deduce 
\[T^m_{ij}
\,=\, G_N^{m+1}(z_0,z_{[m]}) + G_N^{m}(z_0,z_{[m]\setminus\{j \}})+  G_N^{m}(z_0,z_{[m]\setminus\{i \}}) + G_N^{m-1}(z_0,z_{[m]\setminus\{i ,j\}}).\]
When $i=0$ or $j=0$, two terms in the previous sum obviously disappear.
We conclude for all $0\le i,j\le m$ with $i\ne j$,
\begin{multline*}
T^m_{ij}\,=\,G_N^{m+1}(z_0,z_{[m]})
+\big(G_N^{m}(z_0,z_{[m]\setminus\{i\}})\,\mathds1_{i\ne0}+G_N^{m}(z_0,z_{[m]\setminus\{j\}})\,\mathds1_{j\ne0}\big)\mathds1_{m\ge1}\\
+G_N^{m-1}(z_0,z_{[m]\setminus\{i,j\}})\,\mathds1_{i,j\ne0}\,\mathds1_{m\ge2}.
\end{multline*}
Inserting this identity into~\eqref{eq:rewr-A}, since $\nabla V$ is odd, we find
\begin{multline*}
A\,=\,\tfrac1N\sum_{0\le i,j\le m\atop i\ne j}\nabla V(x_i-x_j)\cdot(\nabla_{v_i}-\beta v_i)G_N^{m+1}\\
+\mathds1_{m\ge1}\,\tfrac1N\sum_{0\le i\le m,1\le j\le m\atop i\ne j}\nabla V(x_i-x_j)\cdot(\nabla_{v_i}-\beta v_i+\beta v_j)G_N^{m}(z_0,z_{[m]\setminus\{j\}})\\
+\mathds1_{m\ge2}\,\tfrac1N\sum_{1\le i,j\le m\atop i\ne j}\nabla V(x_i-x_j)\cdot(-\beta v_i)G_N^{m-1}(z_0,z_{[m]\setminus\{i,j\}}).
\end{multline*}
We turn to the second term $B$.
Noting that $\int_\D\nabla V(x_i-x_{m+1})\,M_\beta(v_{m+1})\,dz_{m+1}=0$ and reorganizing the sums, we can rewrite
\begin{multline*}
B=\sum_{i=0}^m\int_{\D}\nabla V(x_i-x_{m+1})\cdot(\nabla_{v_i}-\beta v_i)\\
\times\sum_{k=0}^{m+1}(-1)^{m+1-k}\big(\tfrac{N+1-k}N\big)\sum_{\sigma\in\mathfrak P_{k}^{m+1}\atop i\in\sigma\cup\{0\}}\tfrac{F_{N}^{k+1}}{M_\beta^{\otimes(k+1)}}(z_0,z_\sigma)\,M_\beta(v_{m+1})\,dz_{m+1},
\end{multline*}
which we decompose as $B=B_0+B_1$ in terms of
\begin{eqnarray*}
B_0&:=&\sum_{i=0}^m\int_{\D}\nabla V(x_i-x_{m+1})\cdot(\nabla_{v_i}-\beta v_i)\\
&&\times\sum_{k=0}^{m+1}(-1)^{m+1-k}\big(\tfrac{N+1-k}N\big)\sum_{\sigma\in\mathfrak P_{k}^{m+1}}\tfrac{F_{N}^{k+1}}{M_\beta^{\otimes(k+1)}}(z_0,z_\sigma)\,M_\beta(v_{m+1})\,dz_{m+1},\\
B_1&:=&\sum_{i=1}^m\int_{\D}(-\beta v_i)\cdot\nabla V(x_i-x_{m+1})\\
&&\times\sum_{k=0}^{m}(-1)^{m-k}\big(\tfrac{N+1-k}N\big)\sum_{\sigma\in\mathfrak P_{k}^{m+1}\atop i\notin\sigma}\tfrac{F_{N}^{k+1}}{M_\beta^{\otimes(k+1)}}(z_0,z_\sigma)\,M_\beta(v_{m+1})\,dz_{m+1}.
\end{eqnarray*}
First, using the cluster expansion for marginals in terms of cumulants (cf.~Lemma~\ref{lem:cumulants}(ii)), we find after straightforward combinatorial computations,
\begin{eqnarray*}
\lefteqn{\sum_{k=0}^{m+1}(-1)^{m+1-k}\big(\tfrac{N+1-k}{N}\big)\sum_{\sigma\in\mathfrak P_{k}^{m+1}}\tfrac{F_{N}^{k+1}}{M_\beta^{\otimes(k+1)}}(z_0,z_\sigma)}\\
&=&\sum_{k=0}^{m+1}(-1)^{m+1-k}\big(\tfrac{N+1-k}{N}\big)\sum_{\sigma\in\mathfrak P_{k}^{m+1}} \sum_{j=0}^k\sum_{\tau\in\mathfrak P_j^\sigma}G_N^{j+1}(z_0,z_\tau)\\
&=&\sum_{j=0}^{m+1}\bigg(\sum_{k=j}^{m+1}(-1)^{m+1-k}\big(\tfrac{N+1-k}{N}\big)\binom{m+1-j}{k-j}\bigg)\sum_{\tau\in\mathfrak P_j^{m+1}}G_N^{j+1}(z_0,z_\tau)\\
&=&\tfrac{N-m}N\,G_N^{m+2}(z_0,z_{[m+1]})-\tfrac1N\sum_{\tau\in\mathfrak P_m^{m+1}}G_N^{m+1}(z_0,z_\tau),
\end{eqnarray*}
which leads to the identity
\begin{multline*}
B_0=\tfrac{N-m}N\sum_{i=0}^m\int_{\D}\nabla V(x_i-x_{m+1})\cdot(\nabla_{v_i}-\beta v_i) G_N^{m+2}(z_0,z_{[m+1]})\,M_\beta(v_{m+1})\,dz_{m+1}\\
-\tfrac1N\sum_{i=0}^m\sum_{j=1}^m\int_{\D}\nabla V(x_i-x_{m+1})\cdot(\nabla_{v_i}-\beta v_i) G_N^{m+1}(z_0,z_{[m+1]\setminus\{j\}})\,M_\beta(v_{m+1})\,dz_{m+1}.
\end{multline*}
Second, for $1\le i\le m$, we similarly compute
\begin{multline*}
\sum_{k=0}^{m}(-1)^{m-k}\big(\tfrac{N+1-k}{N}\big)\sum_{\sigma\in\mathfrak P_{k}^{m+1}\atop i\notin\sigma}\tfrac{F_{N}^{k+1}}{M_\beta^{\otimes(k+1)}}(z_0,z_\sigma)\\
\,=\,\tfrac{N+1-m}N\,G_N^{m+1}(z_0,z_{[m+1]\setminus\{i\}})
-\tfrac1N\sum_{\tau\in\mathfrak P^{m+1}_{m-1}\atop i\notin\tau}G_N^m(z_0,z_{\tau}),
\end{multline*}
which leads to
\begin{multline*}
B_1=\tfrac{N+1-m}N\sum_{i=1}^m\int_{\D}\nabla V(x_i-x_{m+1})\cdot(-\beta v_i)\,G_N^{m+1}(z_0,z_{[m+1]\setminus\{i\}})\,M_\beta(v_{m+1})\,dz_{m+1}\\
-\tfrac1N\sum_{1\le i,j\le m\atop i\ne j}\int_{\D}\nabla V(x_i-x_{m+1})\cdot(-\beta v_i)\,G_N^m(z_0,z_{[m+1]\setminus\{i,j\}})\,M_\beta(v_{m+1})\,dz_{m+1}.
\end{multline*}
The conclusion follows.
\end{proof}

\section{Formal argument and main difficulties}\label{sec:echo}

In this section, we describe at a formal level the string of arguments that lead to the expected Fokker-Planck equation~\eqref{eq:FP} for the tagged particle velocity density, and we emphasize the main difficulties that arise.
Since the mean-field force exerted by the equilibrium bath vanishes, the phase-space probability density $F_N^1=M_\beta G_N^1$ of the tagged particle satisfies the following equation (cf.~Lemma~\ref{lem:eqns}),
\[(\partial_t+v\cdot\nabla_x)G_N^1=M_2^1G_N^2:=(\nabla_v-\beta v)\cdot\int_\D\nabla V(x-x_*)G_N^2(z,z_*)\,M_\beta(v_*)\,dz_*,\]
where the correction to pure transport is thus dictated by the correlation $G_N^2$ of the tagged particle with the background.
As we expect this correction to play a role on long timescale, it is natural to filter out the oscillations created by the spatial transport on smaller timescales. We will therefore
focus on the projection onto the kernel of the  transport, that is, on the velocity distribution, 
\begin{gather*}
f_N^1(v):=\int_{\T^d}F_N^1(x,v)\,dx,\qquad g_N^1(v):=\frac{f_N^1}{M_\beta}(v)=\int_{\T^d}G_N^1(x,v)\,dx,\\
\partial_tg_N^1=\int_{\T^d}(M_2^1G_N^2)(x,\cdot)\,dx.
\end{gather*}
As we expect $G_N^2=O(\frac1N)$, we may either look for the correction $O(\frac1N)$ on the mean-field timescale $t\sim1$, or describe the leading behavior on the relevant long timescale $t\sim N$. As emphasized below, although the two questions are often confused in the physics literature, they are mathematically not equivalent.
On the one hand, characterizing corrections of order $O(\frac1N)$ on the mean-field timescale requires a refined multiscale analysis and a strong control on error terms. In particular, oscillations need to be described precisely. In this respect, looking at the long timescale $t\sim N$ may seem easier as the effect of correlations becomes macroscopic and can be caught with a weaker notion of convergence. The drawback is that we then need to establish long-time a priori estimates and make sure that they are strong enough to control resonances.

\subsection{Formal analysis of correlations}
We proceed to a formal examination of the BBGKY hierarchy to capture the effects of correlations on the tagged particle dynamics. A quick observation of the equations for successive cumulants in Lemma~\ref{lem:eqns}, starting with the particular initial data~\eqref{eq:tagged-initial},
leads us to actually expect the following optimal decay of correlations,
\begin{equation}\label{eq:orders-cum}
G_N^1=O(1),\qquad G_N^{2k},\,G_N^{2k+1}=O\big(\tfrac1{N^k}\big)\quad\text{for $k\ge1$}.
\end{equation}
Starting with the suboptimal a priori decay stated in Lemma~\ref{lem:cum-ap} and using the equations for cumulants, we can easily prove that these orders of magnitude indeed hold true in weak topology on the mean-field timescale $t\sim1$ as a particular case of~Lemma~\ref{lem:truncate-eqn-G123} below.
Surprisingly, the above scaling~\eqref{eq:orders-cum} entails that the second and third cumulants $G_N^2$ and $G_N^3$ are expected to be of the same order $O(\frac1N)$: as opposed to the nonlinear setting described in Section~\ref{sec:overview} where it has higher order and can be neglected (cf.~\cite{D-19}), the three-particle correlation $G_N^3$ can thus no longer be neglected here and indeed leads to a nontrivial contribution of the same order as $G_N^2$. This is due to the fact that linear cumulants are conveniently defined in Lemma~\ref{lem:cumulants} by linearizing at the Maxwellian distribution $M_\beta^{\otimes N}$ without taking into account the spatial correlations of the correct Gibbs measure $M_{N,\beta}$ initially describing the equilibrium background. 
Due to this error, the corresponding BBGKY hierarchy of equations for cumulants in Lemma~\ref{lem:eqns} can only be truncated at~$G_N^4$ rather than $G_N^3$, and it takes the form of the following Bogolyubov type equations~\cite{Bogolyubov-46},
\begin{eqnarray}
(\partial_t+iL_{1})G_N^1&=&\tfrac1NM^1_2(NG_N^2),\nonumber\\
(\partial_t+iL_{2})(NG_N^2)&=&S_1^2G_N^1+M^2_3(NG_N^3)+O(\tfrac1{N}),\nonumber\\
(\partial_t+iL_{3})(NG_N^3)&=&S_1^3G_N^1+O(\tfrac1{N}),\label{eq:truncate-BBGKY}
\end{eqnarray}
which can be solved by means of Duhamel's formula, with $G_N^{2;\circ}=0$,
\begin{multline}\label{eq:nonMarkov}
(\partial_t+iL_1)G_N^1=\tfrac1N\int_0^tM_2^1e^{-iL_2(t-t')}S_1^2G_N^{1;t'}\,dt'\\
+\tfrac1N\int_0^t\!\!\int_0^{t'}M_2^1e^{-iL_2(t-t')}M_3^2e^{-iL_3(t'-t'')}S_1^3G_N^{1;t''}\,dt''dt'\\
+\tfrac1N\Big(\int_0^tM_2^1e^{-iL_2(t-t')}M_3^2e^{-iL_3t'}dt'\Big)(NG_N^{3;\circ})+O(\tfrac1{N^2}),
\end{multline}
where the spatial transport operator $iL_1:=v\cdot\nabla_x$ drops when turning to the velocity distribution $g_N^1$.
As shown in Proposition~\ref{prop:truncation} below, this first-order correction to pure transport can be rigorously justified on the mean-field timescale $t\sim1$. Viewed as an equation on $g_N^1$, this is however not very illuminating as the correction takes a complicated non-Markovian form. Complicated memory effects are typically expected to become negligible on long timescales $t\gg1$ (e.g.~\cite{Spohn-80}), and the above equation on $g_N^1$ would take the simpler form of a diffusive equation on the relevant timescale $t\sim N$, showing in particular that the tagged particle trajectory becomes Brownian under the effect of the background.
By means of complex deformation computations, we show in Section~\ref{sec:eqns-lim} below that the long-time limit of the $O(\frac1N)$ correction in~\eqref{eq:nonMarkov} precisely leads to the predicted Fokker-Planck operator~\eqref{eq:FP}. To complete the argument, it remains to justify that~\eqref{eq:nonMarkov} also holds with essentially the same error estimate uniformly on longer timescales $t\gg1$.

\subsection{Justifying the Markovian limit: a question of timescales}
While the problem is reduced as above to propagating~\eqref{eq:nonMarkov} on longer timescales $t\gg1$, two main difficulties show up:
\begin{enumerate}[$\bullet$]
\item
The expected decay $G_N^{2k},G_N^{2k+1}=O(\frac1{N^k})$, cf.~\eqref{eq:orders-cum}, can only be proved on the mean-field timescale $t\sim1$. On longer timescales, rather appealing to the suboptimal time-uniform estimate $G_N^{m+1}=O(\frac1{N^{m/2}})$, cf.~Lemma~\ref{lem:cum-ap}, additional spurious terms involving $G_N^2,G_N^4$ have therefore to be taken into account in~\eqref{eq:nonMarkov}.
\item The uniform estimate $G_N^{m+1}=O(\frac1{N^{m/2}})$ holds only in weak spaces, typically in~$\Ld^2_\beta$, cf.~Lemma~\ref{lem:cum-ap}.
Since the operators $M$'s and $S$'s in Lemma~\ref{lem:eqns} involve $v$-derivatives, the error terms in the truncated equations~\eqref{eq:truncate-BBGKY} involve $v$-derivatives of higher-order cumulants, and the resulting lack of regularity of the errors may come in resonance with the singularity of the long-time propagators for the linearized Vlasov operators $L$'s, leading to possibly diverging contributions.
\end{enumerate}
In order to illustrate the difficulty, let us consider a typical error term from~\eqref{eq:truncate-BBGKY}, that is,
\[R^t_N:=\tfrac1{N^2}\int_0^tM_2^1e^{-iL_2(t-t')}S_2^2(NG_N^{2;t'})\,dt',\]
which we aim to analyze on the long timescale $t=N^r\tau$, $\tau\sim1$, for some $r>0$.
This is equivalently reformulated in terms of Laplace transform: with the notation $\widetilde R^\alpha_N:=\int_0^\infty R^{N^r\tau}_Ne^{-\alpha\tau}d\tau$, $\Re\alpha>0$, we compute
\begin{equation}\label{eq:def-R-toest}
\widetilde R^\alpha_N= \tfrac1{N^2}M_2^1(iL_2+\tfrac\alpha{N^r})^{-1}S_2^2(N\widetilde G_N^{2;\alpha}).
\end{equation}
In order to be negligible as desired in~\eqref{eq:nonMarkov}, this term needs to be shown of order $o(\frac1N)$.
Assuming for simplicity that the linearized Vlasov operator $iL_2$ is replaced by the pure transport operator $iL_2':=v_0\cdot\nabla_{x_0}+v_1\cdot\nabla_{x_1}$,
the long-time propagator is explicitly checked to display violent oscillations in phase space, which is known as filamentation: indeed, the evolution
\begin{equation}\label{eq:filament}
e^{-iL_2'N^r\tau}h(z_0,z_1)=h\big((x_0-N^r\tau v_0,v_0),(x_1-N^r\tau v_1,v_1)\big)
\end{equation}
exhibits oscillations at the scale $O(\frac1{N^r})$ and converges in the long-time limit $t\sim N^r\uparrow\infty$ towards the average in position but only in the weak sense.
Correspondingly, the resolvent $(iL_2'+\frac\alpha{N^r})^{-1}$ takes on the following form in Fourier space,
\begin{multline}\label{eq:longtime-prop}
(i\widehat L_2'+\tfrac\alpha{N^r})^{-1}\widehat G\,=\,
\big(ik_0\cdot v_0+ik_1\cdot v_1+\tfrac\alpha{N^r}\big)^{-1}\widehat G\\
~\xrightarrow{N\uparrow\infty}~\Big(\pi\delta(k_0\cdot v_0+k_1\cdot v_1)-i\pv\tfrac1{k_0\cdot v_0+k_1\cdot v_1}\Big)\widehat G,
\end{multline}
where by the Sokhotski-Plemelj formula the distributional limit does not correspond to a bounded operator on $\Ld^2_\beta(\D^2)$.
Provided that $G_N^2$ is of order $o(1)$ in the smooth topology on the timescale $t\sim N^r$, we would deduce that the error $R_N$ has negligible size $o(\frac1N)$ when averaged with respect to velocity against a smooth test function.
However, the only a priori estimates at hand for $G_N^2$ hold in $\Ld^2_\beta(\D^2)$, cf.~Lemma~\ref{lem:cum-ap}: even if $G_N^2$ was known to be of the optimal order $O(\frac1N)$ in that space uniformly in time,
since the operator $S_2^2$ contains a $v$-derivative, we easily check that $R_N$ would a priori be at best of order $O(\frac{N^{3r/2}}{N^2})$ when averaged with a smooth test function.
On the intermediate timescale $t\sim N^r$ with $r<\frac23$, $R_N$ is thus of order $o(\frac1N)$, hence negligible as desired.
In contrast, on longer timescales, more information is required on possible singularities of $G_N^2$ to ensure that those do not create resonances with oscillations of the long-time propagator.
Suitably unravelling such finer information is left as an open question.

\subsection{A full series expansion}
We provide a formal argument suggesting that violent oscillations of the long-time propagators in phase space never come in resonance with the corresponding oscillations of cumulants.
For that purpose, instead of aiming to truncate the BBGKY hierarchy as in~\eqref{eq:nonMarkov}, we rather iteratively replace higher-order cumulants in terms of the Duhamel formula for the full corresponding equations as given in Lemma~\ref{lem:eqns}. This leads to the following closed equation for $G_N^1$ in form of an infinite series expansion,
\begin{multline}\label{eq:formal-g1}
(\partial_t+iL_1)G_N^1\\
=\sum_{n=1}^\infty\tfrac1{N^{n}}\sum_{\ell=n+1}^{3n}\int_{(\R^+)^{\ell}}\delta\Big(t-\sum_{j=1}^\ell t_j\Big)\bigg(\sum_{\sigma\in J_\ell^n}A^{\sigma_1}e^{-it_1L}A^{\sigma_2}\ldots A^{\sigma_{\ell-1}}e^{-it_{\ell-1} L}A^{\sigma_\ell}G_N^{1;t_\ell}\\
+\sum_{\sigma\in K_\ell^n}A^{\sigma_1}e^{-it_1L}A^{\sigma_2}\ldots A^{\sigma_\ell}e^{-it_\ell L}\big(N^{\sigma_1^\ell-1}G_N^{\sigma_1^\ell+1;\circ}\big)\bigg)dt_1\ldots dt_\ell,
\end{multline}
where the index sets $J_\ell^n$'s and $K_\ell^n$'s are given by
\begin{eqnarray*}
J_\ell^n&:=&\Big\{\sigma\in\{-2,-1,0,1\}^\ell:\,\sigma_1^\ell=0,~\sigma_1^j>0~\forall1\le j<\ell,~\gamma(\sigma)=n\Big\},\\
K_\ell^n&:=&\Big\{\sigma\in\{-2,-1,0,1\}^\ell:\,\sigma_1^\ell>1,~\sigma_1^j>0~\forall1\le j\le\ell,~\gamma(\sigma)+\sigma_1^\ell=n+1\Big\},
\end{eqnarray*}
with the shorthand notation $\sigma_1^j:=1+\sum_{l=1}^j\sigma_l$ and $\gamma(\sigma):=\#\{j:\sigma_j\le0\}$,
and where the operators $A$'s and $L$'s are given by
\[A^{-2}\in\{S^{m+2}_m\}_m,\quad A^{-1}\in\{S^{m+1}_m\}_m,\quad A^0\in\{S^m_m\}_m,\quad A^{1}\in\{M^{m-1}_m\}_m,\quad L\in\{L_m\}_m,\]
while for notational simplicity we omit subscripts $m$'s, which are indeed anyway uniquely determined for each term.
Note that elements in the index sets $J_\ell^n$'s and $K_\ell^n$'s can be viewed as subsets of walks with steps $-2$, $-1$, $0$, or $1$, starting at site $0$. The terms of formal order $O(\frac1N)$ obtained for $n=1$ in~\eqref{eq:formal-g1} are checked to coincide with the right-hand side terms in~\eqref{eq:nonMarkov}.
If $G_N^1$ is controlled in the smooth topology and if for all $m\ge2$ the initial data $G_N^{m+1;\circ}$ are similarly controlled by $O(\frac1{N^{m-1}})$, then each term in the series~\eqref{eq:formal-g1} can be checked to admit a well-defined long-time limit $t\uparrow\infty$ when tested with a smooth test function. 
While the direct analysis of error terms such as~\eqref{eq:def-R-toest} was hindered by possible resonances between $G_N^2$ and oscillations of the long-time propagator, the present full expansion suggests that such resonances should not create diverging contributions, at least term by term.
However, although the index sets $J_\ell^n$'s and $K_\ell^n$'s have moderate (exponential) size, the following two issues prevent the summability of the series in the long-time limit and obstruct any rigorous analysis:
\begin{enumerate}[$\bullet$]
\item Each occurrence of an operator $A$ contains a $v$-derivative. Summability of the series then requires to restrict to an analytic setting.
\item The $v$-derivatives stemming from the operators $A$'s do not commute with the linearized propagators $e^{-itL}$'s, cf.~\eqref{eq:filament}. In order to compute the long-time limit in each term, we must then first proceed to multiple integrations by parts, which generates a factorial number of terms that do not seem to recombine nicely in the limit.
\end{enumerate}
For those reasons, we do not know how to make this perturbative approach rigorous. Still, it suggests that a compensation mechanism is hidden in the BBGKY hierarchy, which would systematically avoid divergences.
A diagrammatic approach to take advantage of such compensations has been proposed by Prigogine and Balescu~\cite{PB-59a,PB-59b}, but the loss of derivatives and the convergence issues are not addressed.

\medskip
In the next sections, we rather focus on intermediate timescales $t\sim N^r$ with $r>0$ small enough, for which propagators have tamer oscillations.

\section{Truncation of the BBGKY hierarchy on intermediate timescale}\label{sec:intermediate}

This section is devoted to the rigorous truncation of the BBGKY hierarchy for cumulants on intermediate timescales $t\sim N^r$ with $r>0$ small enough. More precisely, on such timescales, we prove the following rigorous version of the integrated form~\eqref{eq:nonMarkov} of the non-Markovian Bogolyubov equations. Note that for $t\ll N$ all occurrences of $G_N^1$ in the right-hand side of~\eqref{eq:nonMarkov} can indeed be replaced by the initial data~$g^\circ$ to leading order.

\begin{prop}\label{prop:truncation}
Let the assumptions of Theorem~\ref{th:main} hold for $V,\beta,f^\circ$.
Given $0\le r<\frac1{18}$, there holds for all $\tau\ge0$,
\begin{align*}
&\bigg\|N(\partial_t g_N^{1})|_{t=N^r\tau}
-\int_0^{N^r\tau} \int_{\T^d}\big(M_2^1e^{-iL_2(N^r\tau-t')} S_1^2g^\circ\big)(x,\cdot)\,dx\,dt'\\
&\qquad-\!\!\int_0^{N^r\tau}\!\!\!\int_0^{t'} \int_{\T^d}\big(M_2^1e^{-iL_2(N^r\tau-t')}M_3^2 e^{-iL_3(t'-t'')}S_1^3g^\circ\big)(x,\cdot)\,dx\,dt''dt'\\
&\qquad
-\int_0^{N^r\tau}\int_{\T^d} \big(M_2^1e^{-iL_2(N^r\tau-t')}M_3^2e^{-iL_3t'}(NG_N^{3;\circ})\big)(x,\cdot)\,dx\,dt'
\bigg\|_{H^{-4}_\beta(\R^d)}\\
&\hspace{11cm}\,\lesssim_\beta\,\langle\tau\rangle^9N^{9r-\frac12}.
\qedhere
\end{align*}
\end{prop}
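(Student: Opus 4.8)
The plan is to iterate the BBGKY hierarchy on cumulants from Lemma~\ref{lem:eqns} via Duhamel's formula, truncating at $G_N^4$ and controlling all error terms by a combination of the time-uniform a priori estimates of Lemma~\ref{lem:cum-ap} and crude bounds on the intermediate-time propagators $e^{-itL_m}$ and the operators $M$'s and $S$'s. More precisely, starting from the equation $(\partial_t+iL_1)G_N^1=\frac1NM_2^1(NG_N^2)$, I would substitute the Duhamel expressions for $NG_N^2$, then $NG_N^3$, stopping before $NG_N^4$; using the initial data $G_N^{2;\circ}=0$ and the form of $G_N^{3;\circ}$ from Lemma~\ref{lem:initial}, this produces exactly the three main terms in the statement plus remainder terms of two types: (a) terms containing $\frac1NS_2^2(NG_N^2)$, $\frac1NS_1^2\cdot$ higher correction, etc., i.e.\ the $O(\frac1N)$ corrections to the truncated equations~\eqref{eq:truncate-BBGKY}, and (b) the term $\frac1{N^2}M_2^1\int e^{-iL_2}M_3^2\int e^{-iL_3}M_4^3(NG_N^4)$ coming from the last level of truncation. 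Each such remainder, after integration over the time simplex of total length $N^r\tau$, picks up a factor of at most $(N^r\tau)^{\#\text{integrations}}$ from the time integrations, a factor $N^{-k}$ from the cumulant order via Lemma~\ref{lem:cum-ap} (with $k=1$ for $G_N^2,G_N^3$ and $k=2$ for $G_N^4$), and additional powers of $N^r$ from each $v$-derivative in the $M$'s and $S$'s that fails to commute with the propagators $e^{-itL_m}$, as exhibited in~\eqref{eq:filament}.

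The key technical estimates I would need are: (i) $\|e^{-itL_m}\|_{\Ld^2_\beta\to\Ld^2_\beta}$ is bounded uniformly, which follows since $L_m$ is essentially self-adjoint on $\Ld^2_\beta$ for $\beta\|V\|_{\Ld^\infty}$ small (the linearized Vlasov operator around the Maxwellian is skew-adjoint up to a small symmetric perturbation controlled by the temperature assumption); (ii) the operators $M$'s and $S$'s are bounded from $\Ld^2_\beta$ to $H^{-1}_\beta$ with constants depending only on $\|V\|_{W^{1,\infty}}$ and $\beta$, accounting for the single $v$-derivative each carries; (iii) when composing $e^{-itL_m}$ with a subsequent $M$ or $S$ carrying a $v$-derivative, one gains at most a factor $N^r\langle t\rangle$ from commuting the derivative through, i.e.\ from the filamentation identity~\eqref{eq:filament} and its analogue for the full $L_m$. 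Tracking carefully: the worst remainder term involves $G_N^4$ at order $N^{-2}$, with three propagator compositions and up to three loss-of-derivative factors each contributing $N^r$, plus the time-simplex volume $\langle\tau\rangle^3 N^{3r}$; collecting, this is of size $N^{-2}\cdot N^{3r}\cdot N^{3r}\cdot\langle\tau\rangle^{\text{O}(1)}$ after multiplying by the outer $N$ in $N\partial_tg_N^1$, i.e.\ $N^{-1+6r}$ — but the stated exponent $9r-\frac12$ and the constraint $r<\frac1{18}$ suggest the true bookkeeping is more delicate, likely because controlling the error terms of type (a) requires first establishing the intermediate-time version of the cumulant orders~\eqref{eq:orders-cum} (that $G_N^2,G_N^3=O(\frac1N)$ on $t\sim N^r$ in a weak norm), which itself costs additional powers of $N^r$ through a bootstrap.

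I would therefore organize the proof in two stages. \emph{First}, prove that on the timescale $t\sim N^r$ the cumulants $NG_N^2,NG_N^3$ remain bounded (and $N^2G_N^4$ remains bounded) in a suitable weak norm, say $H^{-s}_\beta$ for $s$ large enough, uniformly in $N$, by running Duhamel on their equations from Lemma~\ref{lem:eqns}, using Lemma~\ref{lem:cum-ap} as the black-box $\Ld^2_\beta$ input at the highest level and absorbing the $v$-derivatives into the negative Sobolev index; the losses $N^r$ per propagator-derivative composition are what force $r$ to be small. \emph{Second}, feed these bounds back into the Duhamel expansion for $G_N^1$, isolate the three named terms, and estimate each remaining term in $H^{-4}_\beta(\R^d)$ — the index $4$ being dictated by the total number of uncommuted $v$-derivatives hitting the test function along the longest retained composition. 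The projection $\int_{\T^d}\cdot\,dx$ onto the velocity distribution kills the free-transport oscillations in $L_1$ but \emph{not} those in $L_2,L_3$, so the filamentation losses in the inner propagators are genuine and unavoidable.

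The main obstacle I expect is the precise tracking of the loss-of-derivatives factors through the composition $M_2^1e^{-iL_2(t-t')}M_3^2e^{-iL_3 t'}(\cdots)$: each $M$ carries $(\nabla_{v_j}-\beta v_j)$, and to extract the $\delta(k\cdot(v-v_*))$-type structure needed to identify the limit one must commute these derivatives past the propagators, which by~\eqref{eq:filament}-type identities converts $\nabla_v$ into $\nabla_v + (\text{time})\,k\cdot\nabla_x$-flavored terms, producing the polynomial-in-$\tau$ and polynomial-in-$N^r$ growth. Getting the sharp exponent $9r-\frac12$ (rather than a worse one that would shrink the admissible range of $r$ to nothing) requires being economical: using the a priori $\Ld^2_\beta$ bound only at the top level, keeping the weak norm as weak as the $H^{-4}$ target allows, and exploiting that the outermost operator $M_2^1$ followed by $\int_{\T^d}dx$ can absorb derivatives onto the test function rather than onto $G_N$. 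The constraint $r<\frac1{18}$ is presumably exactly what makes $N^{9r}\cdot N^{-1/2}\to 0$, i.e.\ $9r<\frac12$, so the bookkeeping must be tight enough to land at the exponent $9$ on $N^r$ and no larger.
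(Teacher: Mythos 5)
Your plan is essentially the paper's proof: a two-stage Duhamel truncation --- first closing approximate equations for $NG_N^2$ and $NG_N^3$ with the time-uniform $\Ld^2_\beta$ bounds of Lemma~\ref{lem:cum-ap} as the only input on discarded cumulants, then substituting into the equation for $g_N^1$ and replacing $G_N^1$ by $g^\circ$ --- with your key estimates (i)--(iii) appearing as Lemma~\ref{lem:spectrum}(iv), Lemma~\ref{lem:prop-MS}, and the $\langle t\rangle^{s}$-loss propagator bound on $H^{-s}_\beta$ of Lemma~\ref{lem:prop-L}, which together dictate the $H^{-4}_\beta$ norm and the $\langle\tau\rangle^{9}N^{9r}$ factors. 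The one correction to your bookkeeping speculation: the $-\tfrac12$ in the exponent comes from applying the suboptimal symmetry bounds of Lemma~\ref{lem:cum-ap} (e.g.\ $\|NG_N^{2}\|_{\Ld^2_\beta}\lesssim N^{1/2}$, $\|NG_N^{4}\|_{\Ld^2_\beta}\lesssim N^{-1/2}$) to the remainder terms, so no bootstrap to the optimal orders~\eqref{eq:orders-cum} is performed, and no uniform-in-$N$ weak-norm bound on $NG_N^{2},NG_N^{3}$ is needed or obtained (one only gets $O(\langle\tau\rangle^{2}N^{2r})$ in $H^{-1}_\beta$, the useful statement being the approximate Duhamel representations with $O(\langle\tau\rangle^{2}N^{2r-\frac12})$ error).
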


While the above is quantified in $H^{-4}_\beta(\R^d)$, we define the whole string of Sobolev spaces with respect to the Maxwellian distribution.
For $s\ge0$, we first define $H^s_\beta(\D^{m+1})$ as the Hilbert subspace of $\Ld^2_\beta(\D^{m+1})$ with the norm
\[\|G\|_{H^s_\beta(\D^{m+1})}^2=\int_{\D^{m+1}}|\langle(\nabla_x,\nabla_v)\rangle^sG|^2M_\beta^{\otimes(m+1)},\]
where we use the shorthand notation $(x,v)=(x_{[0,m]},v_{[0,m]})\in\D^{m+1}$ and where we recall the notation $\langle\nabla\rangle=(1-\triangle)^{1/2}$.
We denote by $H^{-s}_\beta(\D^{m+1})$ the dual of $H^s_\beta(\D^{m+1})$ with respect to the scalar product of $\Ld^2_\beta(\D^{m+1})$, which is again a Hilbert space with the dual norm
\[\|G\|_{H^{-s}_\beta(\D^{m+1})}^2=\int_{\D^{m+1}}|\langle(\nabla_x,\nabla_v-\beta v)\rangle^{-s}G|^2M_\beta^{\otimes(m+1)}.\]
Note that $\nabla_v-\beta v$ and $\nabla_v$ are actually equivalent in the definition of these weighted spaces.
We similarly write $H^s_\beta((\R^d)^{m+1})$ for the subspace of $H^s_\beta(\D^{m+1})$ of functions that do not depend on $x$.

\subsection{Linearized Vlasov operators}
We start with a spectral description of the linearized Vlasov operator $L_{m+1}$ as defined in Lemma~\ref{lem:eqns}.
We include an explicit computation of the resolvent, although this is only needed in Section~\ref{sec:eqns-lim}.
The positivity property for the (approximate) dispersion function $\e_{\beta,m}^\circ$ in~(ii) is related to the stability of the Maxwellian equilibrium (e.g.~\cite[Section~9.2]{Krall-Trivelpiece}) and is further refined in Lemma~\ref{lem:prel-est}.
Note that a more general situation is discussed in~\cite[Sections~1.1--1.2]{Degond-86}, where linearization is performed around a non-Maxwellian equilibrium.

\begin{lem}[Linearized Vlasov operator]\label{lem:spectrum}
Given $V\in W^{1,\infty}(\T^d)$ with $\widehat V\ge0$, the following hold for all $0\le j\le m\le N$ and $\beta\in(0,\infty)$,
\begin{enumerate}[(i)]
\item The operator $L_{m+1}^{(j)}$ is a bounded perturbation of a self-adjoint operator, hence it generates a $C_0$-group $\{e^{itL_{m+1}^{(j)}}\}_{t\in\R}$.
\item The resolvent of $L_{m+1}^{(j)}$ takes on the following explicit form, for $\omega\in\C\setminus\R$ and $j\ne0$, in Fourier space,
\begin{gather*}
\qquad(\widehat L_{m+1}^{(j)}-\omega)^{-1}\widehat G\,=\,\tfrac{\widehat G}{k_j\cdot v_j-\omega}-\tfrac{N+1-m}N\tfrac{\beta\widehat V(k_j)}{\e_{\beta,m}^\circ(k_j,\omega)}\tfrac{k_j\cdot v_j}{k_j\cdot v_j-\omega}\big\langle\tfrac{\widehat G}{k_j\cdot v_j-\omega}\big\rangle_{v_j},\\
\qquad\big\langle(\widehat L_{m+1}^{(j)}-\omega)^{-1}\widehat G\big\rangle_{v_j}=\tfrac{1}{\e_{\beta,m}^{\circ}(k_j,\omega)}\big\langle\tfrac{\widehat G}{k_j\cdot v_j-\omega}\big\rangle_{v_j},
\end{gather*}
in terms
of
\[\qquad\e_{\beta,m}^{\circ}(k,\omega)\,:=\,1+\tfrac{N+1-m}N\beta\widehat V(k)\big\langle\tfrac{k\cdot v}{k\cdot v-\omega}\big\rangle_{v} ,\]
which satisfies for all $\omega\in\C\setminus\R$ and $k\in\Z^d$,
\[\qquad|\e_{\beta,m}^\circ(k,\omega)|\,\ge\,1-\tfrac{|\Re\omega|}{|\omega|}\,>\,0.\]
\item The spectrum of $L_{m+1}^{(j)}$ coincides with $\R$, and is absolutely continuous with an eigenvalue embedded at $0$ (with eigenspace $\{\psi\in\Ld^2_\beta(\D^{m+1}):\nabla_{x_j}\psi\equiv0\}$).
\item The $C_0$-group $\{e^{itL_{m+1}^{(j)}}\}_{t\in\R}$ is uniformly bounded, that is,
\[\qquad\sup_{t\in\R}\|e^{itL_{m+1}^{(j)}}G\|_{\Ld^2_\beta(\D^{m+1})}\,\lesssim\,\|G\|_{\Ld^2_\beta(\D^{m+1})}\qquad\text{for all $G\in\Ld^2_\beta(\D^{m+1})$}.\]
\end{enumerate}
In addition, the operator $L_{m+1}=\sum_{j=0}^mL_{m+1}^{(j)}$ is the sum of $m+1$ commuting operators, hence it also generates a uniformly bounded $C_0$-group.
\end{lem}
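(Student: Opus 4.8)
The plan is to Fourier-transform in the position variables $x_{[0,m]}$, where each $L_{m+1}^{(j)}$ becomes fibered over the wavevectors $k_{[0,m]}\in(2\pi\Z^d)^{m+1}$ and, on a given fiber $k_j$, acts only on the single velocity variable $v_j$ (the other variables being spectators) as $\Lambda_j(\Id+c_jQ_j)$, where $\Lambda_j$ denotes multiplication by $k_j\cdot v_j$, $Q_j$ is the $\Ld^2(M_\beta\,dv_j)$-orthogonal projection onto the constants (so $Q_jh=\langle h\rangle_{v_j}$), and $c_j:=\tfrac{N+1-m}{N}\beta\widehat V(k_j)\ge0$, the nonnegativity being precisely where $\widehat V\ge0$ enters; for $k_j=0$ the operator acts as $0$. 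Since $\Lambda_j$ is multiplication by a real function, $\bigoplus_{k_j}\Lambda_j=-iv_j\cdot\nabla_{x_j}$ is self-adjoint on $\Ld^2_\beta$, while the correction $\Lambda_j c_jQ_j$ is bounded on $\Ld^2_\beta$ with norm $\lesssim\beta^{1/2}\|V\|_{W^{1,\infty}}$ — using $|\langle\widehat G\rangle_{v_j}|^2\le\langle|\widehat G|^2\rangle_{v_j}$ together with $\sup_k|k|\,|\widehat V(k)|\le\|\nabla V\|_{\Ld^1}$ — which gives (i).

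For (iv) I would use a similarity transform: $\Id+c_jQ_j$ is a bounded, positive, boundedly invertible self-adjoint operator with explicit square root $B_j:=\Id+(\sqrt{1+c_j}-1)\,Q_j$ (so $\|B_j\|=\sqrt{1+c_j}$ and $\|B_j^{-1}\|=1$, as $Q_j$ is a projection), and a direct computation gives $B_j\,L_{m+1}^{(j)}\,B_j^{-1}=B_j\,\Lambda_j\,B_j$, which is self-adjoint. Thus $L_{m+1}^{(j)}$ is similar, through the bounded invertible $B_j$, to a self-adjoint operator, whence $\sup_t\|e^{itL_{m+1}^{(j)}}\|\le\|B_j^{-1}\|\,\|B_j\|=\sqrt{1+c_j}\le(1+2\beta\|V\|_{\Ld^\infty})^{1/2}$ uniformly in $t$, using $\widehat V(k)\le\widehat V(0)=\|V\|_{\Ld^\infty}$ (valid since $\widehat V\ge0$). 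For the final assertion, $L_{m+1}^{(j)}$ and $L_{m+1}^{(j')}$ act on disjoint blocks of variables when $j\ne j'$, hence commute, so $e^{itL_{m+1}}=\prod_{j=0}^m e^{itL_{m+1}^{(j)}}$ is uniformly bounded by $(1+2\beta\|V\|_{\Ld^\infty})^{(m+1)/2}$.

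For (ii) I would invert $L_{m+1}^{(j)}-\omega$ on the fiber $k_j\ne0$: writing the equation $(L_{m+1}^{(j)}-\omega)u=G$ as $(k_j\cdot v_j-\omega)\widehat u+c_j(k_j\cdot v_j)\langle\widehat u\rangle_{v_j}=\widehat G$, dividing by $k_j\cdot v_j-\omega$ and taking $\langle\cdot\rangle_{v_j}$ yields $\e_{\beta,m}^\circ(k_j,\omega)\,\langle\widehat u\rangle_{v_j}=\big\langle\tfrac{\widehat G}{k_j\cdot v_j-\omega}\big\rangle_{v_j}$, and substituting back into the equation gives the two displayed resolvent formulas — provided $\e_{\beta,m}^\circ\ne0$. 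The one genuinely non-routine step, and the main obstacle, is the lower bound on $|\e_{\beta,m}^\circ|$. I would argue geometrically: for $\omega\in\C\setminus\R$, after the change of variable $s=-1/t$ the map $s\mapsto\tfrac{s}{s-\omega}$ equals $t\mapsto\tfrac{1}{1+t\omega}$, i.e.\ the affine map $t\mapsto1+t\omega$ followed by $z\mapsto1/z$; hence it carries $\R$ into the circle $\partial D_\omega$, the image of the line $\{1+t\omega:t\in\R\}$ under inversion, which is the circle through $0$ and $1$ with center $\tfrac12\big(1-i\tfrac{\Re\omega}{\Im\omega}\big)$ and radius $\tfrac{|\omega|}{2|\Im\omega|}$. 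Since $\big\langle\tfrac{k\cdot v}{k\cdot v-\omega}\big\rangle_v$ is the barycenter of a probability measure supported on $\partial D_\omega$, it lies in the closed disc $D_\omega$; therefore $\e_{\beta,m}^\circ(k,\omega)=1+c_k\big\langle\tfrac{k\cdot v}{k\cdot v-\omega}\big\rangle_v$ lies in the disc $1+c_kD_\omega$ (with $c_k\ge0$), whose distance to the origin equals $\big(1+c_k+\tfrac{c_k^2|\omega|^2}{4|\Im\omega|^2}\big)^{1/2}-\tfrac{c_k|\omega|}{2|\Im\omega|}$, a nonincreasing function of $c_k\ge0$ with infimum $\tfrac{|\Im\omega|}{|\omega|}$; since $\tfrac{|\Im\omega|}{|\omega|}\ge\tfrac{|\omega|-|\Re\omega|}{|\omega|}=1-\tfrac{|\Re\omega|}{|\omega|}>0$ by the triangle inequality, the claim follows (for $k=0$ it is trivial, as $\e_{\beta,m}^\circ\equiv1$).

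For (iii): the resolvent bound of (ii) gives $\operatorname{spec}(L_{m+1}^{(j)})\subseteq\R$; conversely, for each $\lambda\in\R$, fixing $k_j\ne0$ and choosing $h_n$ normalized in $\Ld^2(M_\beta\,dv_j)$ with support in $\{|k_j\cdot v_j-\lambda|<1/n\}$ and $\langle h_n\rangle_{v_j}\to0$ produces a Weyl sequence, since $\|(L_{m+1}^{(j)}-\lambda)h_n\|\le\tfrac1n+c_j|\langle h_n\rangle_{v_j}|\,\|k_j\cdot v_j\|_{\Ld^2(M_\beta)}\to0$; hence $\operatorname{spec}(L_{m+1}^{(j)})=\R$. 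The kernel is found by a one-line Fourier computation: $L_{m+1}^{(j)}\psi=0$ forces $(1+c_j)\langle\widehat\psi\rangle_{v_j}=0$ on every fiber $k_j\ne0$, hence $\widehat\psi=0$ there, i.e.\ $\nabla_{x_j}\psi\equiv0$, and conversely. Finally, absolute continuity of the remaining spectrum follows since on each fiber $k_j\ne0$ the self-adjoint $B_j\Lambda_jB_j$ is a finite-rank perturbation of the purely absolutely continuous $\Lambda_j$ — so by the Kato--Rosenblum theorem their absolutely continuous parts are unitarily equivalent — and this passes to $L_{m+1}^{(j)}$ through the similarity $B_j$; alternatively, one applies the limiting-absorption principle directly to the resolvent formula of (ii), using the refined non-vanishing of $\e_{\beta,m}^\circ$ on $\R$ from Lemma~\ref{lem:prel-est}.
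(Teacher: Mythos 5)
Your proposal is correct in substance and follows the same skeleton as the paper's proof: Fourier decomposition in $x_j$, the fiberwise form $\Lambda_j(\Id+c_jQ_j)$, the resolvent obtained by dividing by $k_j\cdot v_j-\omega$ and averaging in $v_j$, and a symmetrization to get the uniform group bound. Two sub-arguments are genuinely different. For the lower bound on $\e^\circ_{\beta,m}$ the paper expands $|\e^\circ_{\beta,m}|^2$ into real and imaginary parts and applies the elementary inequality $(a+b\Re\omega)^2+(b\Im\omega)^2\ge(a^2+b^2|\omega|^2)(1-|\Re\omega|/|\omega|)$ (the sign coming from $\widehat V\ge0$), whereas your geometric argument (the image of $\R$ under $s\mapsto s/(s-\omega)$ is a circle through $0$ and $1$, the Maxwellian average lies in the corresponding disc, and the distance of $1+c_kD_\omega$ to the origin is nonincreasing in $c_k\ge0$ with infimum $|\Im\omega|/|\omega|$) is clean, correct, and even yields the slightly stronger bound $|\Im\omega|/|\omega|\ge1-|\Re\omega|/|\omega|$. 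For (iv) the paper introduces an equivalent modified scalar product weighted by $\Id+c_jQ_j$ (written in physical space) and invokes Stone's theorem; your conjugation by the explicit square root $B_j=(\Id+c_jQ_j)^{1/2}$ is the same mechanism repackaged as a similarity transform and gives the same uniform constant. The commutation of the $L_{m+1}^{(j)}$'s via the fiber picture, and the Weyl-sequence/kernel computation for (iii), are fine (the paper leaves (iii) essentially to the reader).

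Two points deserve a fix. First, the inequality ``$\widehat V(k)\le\widehat V(0)=\|V\|_{\Ld^\infty}$'' is false in general: $V(x)=\cos(2\pi x_1)$ has $\widehat V\ge0$, $\widehat V(0)=0$, yet nonzero coefficients at $k=\pm2\pi e_1$. What you actually need, $c_j\lesssim\beta\|V\|_{\Ld^\infty}$, follows trivially from $|\widehat V(k)|\le\|V\|_{\Ld^1(\T^d)}\lesssim\|V\|_{\Ld^\infty(\T^d)}$. Second, in (iii) the Kato--Rosenblum step only shows that the absolutely continuous part of $B_j\Lambda_jB_j$ is unitarily equivalent to $\Lambda_j$; it does not by itself exclude singular continuous spectrum or further eigenvalues, so it does not prove the asserted ``absolutely continuous with an embedded eigenvalue at $0$''. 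Your alternative, a limiting-absorption argument from the explicit resolvent formula, is the right route (and is what the paper's one-line remark alludes to), but Lemma~\ref{lem:prel-est}(ii) carries the smallness assumption $\beta\|V\|_{\Ld^\infty}\le1/C_0$ not made in the present lemma; under $\widehat V\ge0$ alone the required nonvanishing of $\e^\circ_{\beta,m}(k,\lambda+i0)$ for $\lambda\in\R$ follows instead from the Penrose-type computation for the Maxwellian: the imaginary boundary value is a positive multiple of $\lambda$ times a Gaussian density, nonzero for $\lambda\ne0$, while at $\lambda=0$ one has $\e^\circ_{\beta,m}=1+c_k>0$. Since the paper itself dispatches (iii) in a single sentence, these are refinements rather than deviations from its level of rigor.
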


\begin{proof}
Let $0\le j\le m$ be fixed. We start with the proof of~(i). The essentially self-adjoint operator $L_{m+1}^{(j),0}:=v_j\cdot\frac1i\nabla_{x_j}$ generates a $C_0$-group of isometries on $\Ld^2_\beta(\D^{m+1})$, which is explicitly given by
\[e^{itL_{m+1}^{(j),0}}G(z_{[0,m]})=e^{tv_j\cdot\nabla_{x_j}}G(z_{[0,m]})=G\big(z_{[0,j-1]},(x_j+tv_j,v_j),z_{[j+1,m]}\big).\]
Since the operator $L_{m+1}^{(j)}$ takes the form $L_{m+1}^{(j)}=L_{m+1}^{(j),0}+L_{m+1}^{(j),1}$, where the perturbation $L_{m+1}^{(j),1}$ is obviously bounded on $\Ld^2_\beta(\D^{m+1})$, it follows from standard perturbation theory (e.g.~\cite[Theorem~IX.2.1]{Kato}) that the perturbed operator $L_{m+1}^{(j)}$ also generates a $C_0$-group on $\Ld^2_\beta(\D^{m+1})$.

\medskip\noindent
We turn to the resolvent computation~(ii).
Letting $\omega\in\C\setminus\R$, $j\ne0$, and $G\in\Ld^2_\beta(\D^{m+1})$, we aim to compute $H_\omega:=(L_{m+1}^{(j)}-\omega)^{-1}G$. By definition of $L_{m+1}^{(j)}$ in Lemma~\ref{lem:eqns}, the identity $(L_{m+1}^{(j)}-\omega)H_\omega=G$ takes on the following form in Fourier variables,
\[\widehat G=(k_j\cdot v_j-\omega)\widehat H_\omega+\tfrac{N+1-m}N\beta\widehat V(k_j) k_j\cdot v_j\langle \widehat H_\omega\rangle_{v_j},\]
or equivalently, dividing by $k_j\cdot v_j-\omega$,
\begin{equation}\label{eq:comput-resolv0}
\tfrac{\widehat G}{k_j\cdot v_j-\omega}=\widehat H_\omega+\tfrac{N+1-m}N\beta\widehat V(k_j)\tfrac{ k_j\cdot v_j}{k_j\cdot v_j-\omega}\langle \widehat H_\omega\rangle_{v_j}.
\end{equation}
Averaging with respect to $v_j$ yields
\[\big\langle\tfrac{\widehat G}{k_j\cdot v_j-\omega}\big\rangle_{v_j}=\e_{\beta,m}^\circ(k_j,\omega)\langle \widehat H_\omega\rangle_{v_j},\]
with $\e_{\beta,m}^\circ$ defined in the statement.
Since $\widehat V$ is real-valued by assumption, we compute
\begin{multline}\label{eq:bound-eps-pre}
|\e_{\beta,m}^\circ(k,\omega)|^2\,=\,\Big|1+\tfrac{N+1-m}N\beta\widehat V(k)\big\langle\tfrac{ k\cdot v}{k\cdot v-\omega}\big\rangle_{v} \Big|^2\\
\,=\,\Big(1+\tfrac{N+1-m}N\beta\widehat V(k)\big\langle\tfrac{(k\cdot v-\Re\omega)^2}{(k\cdot v-\Re\omega)^2+(\Im\omega)^2}\big\rangle_{v}
+(\Re\omega)\tfrac{N+1-m}N\beta\widehat V(k)\big\langle\tfrac{k\cdot v-\Re\omega}{(k\cdot v-\Re\omega)^2+(\Im\omega)^2}\big\rangle_{v} \Big)^2\\
+\Big((\Im\omega)\tfrac{N+1-m}N\beta\widehat V(k)\big\langle\tfrac{k\cdot v}{(k\cdot v-\Re\omega)^2+(\Im\omega)^2}\big\rangle_{v}\Big)^2,
\end{multline}
and hence, in view of the inequality
\[(a+b\Re \omega)^2+(b\Im \omega)^2\ge(a^2+b^2|\omega|^2)\big(1-\tfrac{|\Re \omega|}{|\omega|}\big)\qquad\text{for all~$a,b\in\R$,}\]
and recalling that $\widehat V$ is nonnegative by assumption, we deduce
\begin{align*}
|\e_{\beta,m}^\circ(k,\omega)|^2\,\ge\,1-\tfrac{|\Re \omega|}{|\omega|}>0.
\end{align*}
As in particular $\e_{\beta,m}^\circ$ does not vanish, the above becomes
\[\langle\widehat H_\omega\rangle_{v_j}=\tfrac1{\e_{\beta,m}^\circ(k_j,\omega)}\big\langle\tfrac{\widehat G}{k_j\cdot v_j-\omega}\big\rangle_{v_j},\]
which yields, once combined with~\eqref{eq:comput-resolv0},
\begin{equation*}
\widehat H_\omega\,=\,\tfrac{\widehat G}{k_j\cdot v_j-\omega}-\tfrac{N+1-m}N\tfrac{\beta \widehat V(k_j)}{\e^\circ_{\beta,m}(k_j,\omega)}\tfrac{ k_j\cdot v_j}{k_j\cdot v_j-\omega}\big\langle\tfrac{\widehat G}{k_j\cdot v_j-\omega}\big\rangle_{v_j}.
\end{equation*}
Examining carefully this resolvent computation leads to the characterization~(iii) of the spectrum. 

\medskip\noindent
Finally, property~(iv) follows from the general characterization of the growth bound of $C_0$-groups in e.g.~\cite[Corollary~A-III.7.11]{Nagel-86}, together with the above resolvent computation.
We also provide a more direct argument based on energy conservation.
Define the following modified scalar product on~$\Ld^2_\beta(\D^{m+1})$ (recall that $\widehat V\ge0$),
\begin{multline*}
\langle H,G\rangle_{\widetilde\Ld^2_{\beta,j}(\D^{m+1})}\,:=\,\langle H,G\rangle_{\Ld^2_{\beta}(\D^{m+1})}\\
+\tfrac{N+1-m}N\int_{\D^{m+2}}V(x_j-x_*)\,\overline{H(z_{[0,m]})}\,G(z_{[0,m]\setminus\{j\}},z_*)\,M_\beta^{\otimes(m+2)}(z_{[0,m]},z_*)\,dz_{[0,m]}dz_*,
\end{multline*}
note that the corresponding norm is Lipschitz-equivalent to the norm of $\Ld^2_\beta(\D^{m+1})$,
\[\|G\|_{\Ld^2_{\beta}(\D^{m+1})}\,\le\,\|G\|_{\widetilde\Ld^2_{\beta,j}(\D^{m+1})}\,\le\,\big(1+\|V\|_{\Ld^\infty(\T^d)}\big)^\frac12\|G\|_{\Ld^2_{\beta}(\D^{m+1})},\]
and denote by $\widetilde\Ld^2_{\beta,j}(\D^{m+1})$ the Hilbert space $\Ld^2_{\beta}(\D^{m+1})$ endowed with this new structure.
Since a straightforward computation shows that $L_{m+1}^{(j)}$ is essentially self-adjoint on~$\widetilde\Ld^2_{\beta,j}(\D^{m+1})$, Stone's theorem yields
\[\|e^{itL_{m+1}^{(j)}}G\|_{\Ld^2_\beta(\D^{m+1})}\,\le\,\|e^{itL_{m+1}^{(j)}}G\|_{\widetilde\Ld^2_{\beta,j}(\D^{m+1})}\,=\,\|G\|_{\widetilde\Ld^2_{\beta,j}(\D^{m+1})}\,\lesssim\,\|G\|_{\Ld^2_\beta(\D^{m+1})},\]
and the conclusion follows.
\end{proof}

Next, we establish the following estimate for the $C_0$-group generated by the linearized Vlasov operator $L_{m+1}$ in weak norms.

\begin{lem}[Weak bounds on linearized Vlasov evolution]\label{lem:prop-L}
Given $V\in W^{1,\infty}(\T^d)$ with $\widehat V\ge0$, there holds for all $\beta\in(0,\infty)$, $0\le m\le N$, $s\ge0$, and $G\in C^\infty_c(\D^{m+1})$,
\[\|e^{iL_{m+1}t}G\|_{H^{-s}_\beta(\D^{m+1})}\,\lesssim_{\beta,m,s}\,\langle t\rangle^s\|G\|_{H^{-s}_\beta(\D^{m+1})}.\qedhere\]
\end{lem}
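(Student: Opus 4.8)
The plan is to reduce, by duality, to a polynomial-in-time growth estimate for the adjoint group on \emph{positive} Sobolev spaces, and to prove the latter by a commutator/energy argument. The $C_0$-group property and the case $s=0$ are already contained in Lemma~\ref{lem:spectrum}, so only the $\langle t\rangle^s$-growth for $s>0$ is at stake. Since $H^{-s}_\beta(\D^{m+1})$ is the dual of $H^s_\beta(\D^{m+1})$ for the $\Ld^2_\beta$-pairing, one has $\|e^{iL_{m+1}t}\|_{H^{-s}_\beta\to H^{-s}_\beta}=\|e^{-iL_{m+1}^*t}\|_{H^s_\beta\to H^s_\beta}$, where $L_{m+1}^*$ is the $\Ld^2_\beta$-adjoint; as each $L_{m+1}^{(j)}$ is self-adjoint for an equivalent Hilbert structure (proof of Lemma~\ref{lem:spectrum}(iv)), so is its adjoint, the adjoints still commute in $j$, and $L_{m+1}^*$ is again of the form ``free transport $v\cdot\nabla_x/i$ plus a bounded perturbation of convolution-in-$x$, average-in-$v$ type''. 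It thus suffices to prove, for any operator $L$ of this form, that $\|e^{iLt}\phi\|_{H^s_\beta(\D^{m+1})}\lesssim_{\beta,m,s}\langle t\rangle^s\|\phi\|_{H^s_\beta(\D^{m+1})}$ for $\phi\in C^\infty_c(\D^{m+1})$ and $s\ge0$.

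The two key commutation identities are $[\nabla_{x_l},iL]=0$ for all $l$ (the transport coefficients do not depend on $x_l$, and the perturbation acts by convolution in each spatial variable), so that $\nabla_{x_l}$ commutes with the whole group, and
\[[\,\nabla_{v_j}-\beta v_j\,,\ iL\,]\ =\ \nabla_{x_j}+B_j,\]
where the transport part contributes exactly $\nabla_{x_j}$ through $[\nabla_{v_j},v_j\cdot\nabla_{x_j}]=\nabla_{x_j}$, while the perturbation contributes a remainder $B_j$ bounded on $\Ld^2_\beta(\D^{m+1})$ with norm $\lesssim_\beta\|V\|_{W^{1,\infty}}$: the only a priori dangerous contribution involves $\langle(\nabla_{v_j}-\beta v_j)G\rangle_{v_j}=\int\nabla_{v_j}(GM_\beta)\,dv_j=0$ and disappears, leaving $B_j$ of the explicit form ``(polynomial in $v_j$ of degree $\le 3)\times(\nabla V\ast_{x_j}\langle\,\cdot\,\rangle_{v_j})$''. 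In particular $B_j$ commutes with every $\nabla_{x_l}$, and $\nabla_x^\alpha(\nabla_v-\beta v)^\gamma B_j$ stays bounded on $\Ld^2_\beta(\D^{m+1})$ for any multi-indices (the $x$-derivatives pass through the convolution, the $v$-derivatives only raise the polynomial degree, and all Gaussian moments are finite).

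Granting these, I would prove the growth bound for integer $s=k$ by induction on $k$ (base case $k=0$ being Lemma~\ref{lem:spectrum}(iv)), using $\|\phi\|_{H^k_\beta}^2\simeq\sum_{|\alpha|+|\gamma|\le k}\|\nabla_x^\alpha(\nabla_v-\beta v)^\gamma\phi\|_{\Ld^2_\beta}^2$. For a $k$th-order derivative $P=\nabla_x^\alpha(\nabla_v-\beta v)^\gamma$ and $u(t):=e^{iLt}\phi$, commuting $P$ through $iL$ by means of the two identities shows that $Pu$ solves $\partial_t(Pu)=iL(Pu)+\Sigma$, the source $\Sigma$ being a finite sum of terms of the form ``(bounded operator built from the $B_l$'s)$\,\times\,$(derivative of $u$ of order $\le k-1$)'' and ``(pure $x$-derivative of $u$ of order $\le k$)'' — and the latter equals $\nabla_x^{\alpha'}u=e^{iLt}\nabla_x^{\alpha'}\phi$, of $\Ld^2_\beta$-norm $\lesssim\|\phi\|_{H^k_\beta}$, while the former is $\lesssim\langle\tau\rangle^{k-1}\|\phi\|_{H^k_\beta}$ by the induction hypothesis. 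Duhamel's formula together with the uniform $\Ld^2_\beta$-bound of $\{e^{iL\tau}\}$ then gives $\|Pu(t)\|_{\Ld^2_\beta}\lesssim\langle t\rangle^k\|\phi\|_{H^k_\beta}$, hence $\|u(t)\|_{H^k_\beta}\lesssim\langle t\rangle^k\|\phi\|_{H^k_\beta}$; the non-integer case follows by interpolation with $s=0$, and dualizing back yields the lemma.

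The main obstacle — and the step needing real care — is the bookkeeping of the source $\Sigma$ in the induction: one must verify that the $x$-derivatives spun off by the velocity-derivative commutators genuinely commute with the entire propagator (so they never generate powers of $t$) and that the bounded remainders $B_l$ survive composition with all the further derivatives produced along the way. Both rely on the precise structure of the perturbation in $L_{m+1}$ (convolution in $x$, averaging against $M_\beta$ in $v$), and it is exactly this structure, rather than any extra smoothness of $V$, that keeps the argument within the class $V\in W^{1,\infty}(\T^d)$.
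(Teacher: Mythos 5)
Your proposal is correct and follows essentially the same route as the paper: the same two commutator identities $[\nabla_{x_l},iL_{m+1}]=0$ and $[\nabla_{v_l}-\beta v_l,iL_{m+1}]=\nabla_{x_l}+(\text{bounded})$ (with the same key cancellation coming from $\int(\nabla_{v}-\beta v)G\,M_\beta\,dv=0$, which in the paper appears as the operator $R^{(l)}_{m+1}$ and the identity $R^{(l)}_{m+1}(\nabla_{v_l}-\beta v_l)=0$), combined with Duhamel's formula, induction on the order of derivatives, and the uniform $\Ld^2_\beta$-bound of Lemma~\ref{lem:spectrum}(iv). The only (cosmetic) difference is that you transfer the estimate to the adjoint group on the positive spaces $H^s_\beta$, whereas the paper keeps the original group and tests against $H^{s_0+1}_\beta$ functions — but its induction uses exactly the dual bound $\|(e^{iL_{m+1}t})^*H\|_{H^s_\beta}\lesssim\langle t\rangle^s\|H\|_{H^s_\beta}$, so the two arguments coincide.
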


\begin{proof}
By duality, it suffices to prove
\begin{equation}\label{eq:prop-reg}
\|e^{-iL_{m+1}^*t}G\|_{H^s_\beta(\D^{m+1})}\,\lesssim_{\beta,m,s}\,\langle t\rangle^s\|G\|_{H^s_\beta(\D^{m+1})},
\end{equation}
where the adjoint operator is given by
\begin{multline*}
iL_{m+1}^*G\,=\,\sum_{j=0}^{m}\Big(v_j\cdot\nabla_{x_j}G+\mathds1_{j\ne0}\tfrac{N+1-m}N\int_\D\beta v_*\cdot\nabla V(x_j-x_*)\\
\times G(z_0,z_{[m]\setminus\{j\}},z_*)\,M_\beta(v_*)\,dz_*\Big).
\end{multline*}
The propagation of regularity~\eqref{eq:prop-reg} relies crucially on the following two observations:
\begin{equation}\label{x-commutator} 
[\nabla_{x_l},iL_{m+1}^*]\,=0\,,
\end{equation}
and 
\begin{equation}\label{eq:element-comm}
[\nabla_{v_l},iL_{m+1}^*]\,=\,\nabla_{x_l}+R_{m+1}^{(l)},
\end{equation}
where $R_{m+1}^{(l)}$ is the bounded operator on $\Ld^2_\beta(\D^{m+1})$ defined by
\begin{equation*}
R_{m+1}^{(l)}G\,=\,\mathds1_{l\ne0}\tfrac{N+1-m}N\int_{\D}(\beta\Id-\beta v_*\otimes\beta v_*)\nabla V(x_l-x_{*}) \,G(z_0,z_{[m]\setminus\{l\}},z_*)\,M_\beta(v_{*})\,dz_{*}.
\end{equation*}
We argue by induction:
since the result~\eqref{eq:prop-reg} with $s=0$ follows from Lemma~\ref{lem:spectrum}(iv), we may assume that it holds for some $s= s_0\ge0$, and we shall deduce that it also holds for $s=s_0+1$.
Since in view of~\eqref{x-commutator} the operator $\nabla_{x_l}$ obviously commutes with the group $\{e^{-iL_{m+1}^*t}\}_t$, it remains to prove
\begin{equation}\label{eq:prop-reg-re}
\|\nabla_{v_l}^{s_0+1}e^{-iL_{m+1}^*t}G\|_{\Ld^2_\beta(\D^{m+1})}\,\lesssim_{\beta,m,s_0}\,\langle t\rangle^{s_0+1}\|G\|_{H^{s_0+1}_\beta(\D^{m+1})}.
\end{equation}
Writing
\[(\partial_t+iL_{m+1}^*)(\nabla_{v_l}^{s_0+1}e^{-iL_{m+1}^*t}G)\,=\,-[\nabla_{v_l}^{s_0+1},iL_{m+1}^*]e^{-iL_{m+1}^*t}G,\]
Duhamel's formula yields 
\[\nabla_{v_l}^{s_0+1}e^{-iL_{m+1}^*t}G-e^{-iL_{m+1}^*t}\nabla_{v_l}^{s_0+1}G\,=\,-\int_0^te^{-iL_{m+1}^*(t-t')}[\nabla_{v_l}^{s_0+1},iL_{m+1}^*]e^{-iL_{m+1}^*t'}G\,dt',\]
and we deduce in view of Lemma~\ref{lem:spectrum}(iv),
\begin{multline*}
\|\nabla_{v_l}^{s_0+1}e^{-iL_{m+1}^*t}G\|_{\Ld^2_\beta(\D^{m+1})}\,\lesssim\,\|\nabla_{v_l}^{s_0+1}G\|_{\Ld^2_\beta(\D^{m+1})}\\
+\int_0^t\|[\nabla_{v_l}^{s_0+1},iL_{m+1}^*]e^{-iL_{m+1}^*t'}G\|_{\Ld^2_\beta(\D^{m+1})}\,dt'.
\end{multline*}
Expanding the commutator as
\[[\nabla_{v_l}^{s_0+1},iL_{m+1}^*]=\sum_{j=0}^{s_0}\nabla_{v_l}^{s_0-j}[\nabla_{v_l},iL_{m+1}^*]\nabla_{v_l}^{j},\]
appealing to~\eqref{eq:element-comm},
using that $\nabla_{x_l}$ commutes with $\nabla_{v_l}$ and with the group $\{e^{-iL_{m+1}^*t}\}_t$,
and noting that for all $s\ge0$ the operator $R_{m+1}^{(l)}$ is bounded by $C\beta$ on $H^s_\beta(\D^{m+1})$ (cf.~$\|(\beta v)^{\otimes2}\|_{\Ld^2_\beta(\D)}\lesssim\beta$),
we are led to
\begin{multline*}
\|\nabla_{v_l}^{s_0+1}e^{-iL_{m+1}^*t}G\|_{\Ld^2_\beta(\D^{m+1})}\,\lesssim_{s_0}\,\|G\|_{H^{s_0+1}_\beta(\D^{m+1})}\\
+\int_0^t\|e^{-iL_{m+1}^*t'}\nabla_{x_l}G\|_{H^{s_0}_\beta(\D^{m+1})}\,dt'
+\beta\int_0^t\|e^{-iL_{m+1}^*t'}G\|_{H^{s_0}_\beta(\D^{m+1})}\,dt'.
\end{multline*}
After combination with the induction assumption (cf.~\eqref{eq:prop-reg} with $s=s_0$), the conclusion~\eqref{eq:prop-reg-re} follows.
\end{proof}

Finally, we show that the operators $M$'s and $S$'s as defined in Lemma~\ref{lem:eqns} amount to the loss of (at most) one $v$-derivative.

\begin{lem}[Bounds on operators $M$'s and $S$'s]\label{lem:prop-MS}
Given $V\in C^\infty(\T^d)$, there holds for all $\beta\in(0,\infty)$, $0\le m\le N$, $s\ge0$, and $G^{m+r}\in C^\infty_c(\D^{m+r})$ with $r\in\{-1,0,1,2\}$,
\begin{align*}
\|M_{m+2}^{m+1}G^{m+2}\|_{H^{-s-1}_\beta(\D^{m+1})}~&\lesssim_{m,s}~\|G_N^{m+2}\|_{H^{-s}_\beta(\D^{m+2})},\\
\|S_{m-1}^{m+1}G^{m-1}\|_{H^{-s}_\beta(\D^{m+1})}~&\lesssim_{\beta,m,s}~\|G_N^{m-1}\|_{H^{-s}_\beta(\D^{m-1})},\\
\|S_{m}^{m+1}G^{m}\|_{H^{-s-1}_\beta(\D^{m+1})}~&\lesssim_{m,s}~\|G_N^{m}\|_{H^{-s}_\beta(\D^{m})},\\
\|S_{m+1}^{m+1}G^{m+1}\|_{H^{-s-1}_\beta(\D^{m+1})}~&\lesssim_{m,s}~\|G_N^{m+1}\|_{H^{-s}_\beta(\D^{m+1})},
\end{align*}
where the multiplicative constants $C_{\beta,m,s}$ further depend on $\|\nabla V\|_{W^{s,\infty}(\T^d)}$.
\end{lem}

\begin{proof}
By duality, it suffices to show
\begin{align*}
\|(M_{m+2}^{m+1})^*G^{m+1}\|_{H^{s}_\beta(\D^{m+2})}~&\lesssim_{m,s}~\|G_N^{m+1}\|_{H^{s+1}_\beta(\D^{m+1})},\\
\|(S_{m-1}^{m+1})^*G^{m+1}\|_{H^{s}_\beta(\D^{m-1})}~&\lesssim_{m,s}~\sqrt\beta\|G_N^{m+1}\|_{H^{s}_\beta(\D^{m+1})},\\
\|(S_{m}^{m+1})^*G^{m+1}\|_{H^{s}_\beta(\D^{m})}~&\lesssim_{m,s}~\|G_N^{m+1}\|_{H^{s+1}_\beta(\D^{m+1})},\\
\|(S_{m+1}^{m+1})^*G^{m+1}\|_{H^{s}_\beta(\D^{m+1})}~&\lesssim_{m,s}~\|G_N^{m+1}\|_{H^{s+1}_\beta(\D^{m+1})}.
\end{align*}
Integrations by parts lead to the following explicit forms of the adjoint operators,
\begin{eqnarray*}
(M_{m+2}^{m+1})^*G_N^{m+1}&=&-\tfrac{N-m}{N(m+1)}\sum_{j=0}^{m+1}\sum_{1\le l\le m+1\atop j\ne l}\nabla V(x_j-x_l)\cdot \nabla_{v_j}G_N^{m+1}(z_0,z_{[m+1]\setminus\{l\}}),\\
(S_{m-1}^{m+1})^*G_N^{m+1}&=&-\tfrac{m(m-1)}2\iint_{\D\times\D}\nabla V(x_*-x_*')\cdot(\beta{v_*}-\beta{v_*'})\,G_N^{m+1}(z_0,z_{[m-2]},z_*,z_*')\\
&&\hspace{5cm}\times M_\beta(v_*)M_\beta(v_*')\,dz_*dz_*',\\
(S_m^{m+1})^*G_N^{m+1}&=&-m\sum_{j=0}^{m-1}\int_\D\nabla V(x_j-x_*)\cdot (\nabla_{v_j}-\nabla_{v_*})G_N^{m+1}(z_0,z_{[m-1]},z_*)\,M_\beta(v_*)\,dz_*\\
&&\hspace{-2.1cm}+m\sum_{j=1}^{m-1}\iint_{\D\times\D}\nabla V(x_{*}-x_j)\cdot\nabla_{v_*}G_N^{m+1}(z_0,z_{[m-1]\setminus\{j\}},z_*,z_*')\,M_\beta(v_{*})M_\beta(v_{*})\,dz_{*}dz_{*}',\\
(S_{m+1}^{m+1})^*G_N^{m+1}&=&-\sum_{0\le j\ne l\le m}\nabla V(x_j-x_l)\cdot \nabla_{v_j}G_N^{m+1}\\
&&\hspace{-1cm}+\sum_{j=0}^m\sum_{1\le l\le m\atop j\ne l}\nabla V(x_j-x_l)\cdot\int_{\D} \nabla_{v_j} G_N^{m+1}(z_0,z_{[m]\setminus\{l\}},z_*)\,M_\beta(v_{*})\,dz_{*}.
\end{eqnarray*}
Direct estimates together with the bound $\|\beta v\|_{\Ld^2_\beta(\D)}\lesssim\sqrt\beta$ then yield the conclusion.
\end{proof}

Lemma~\ref{lem:prop-L} implies that, for fixed $N$, marginals and cumulants are smooth globally in time. Then, by Lemma  \ref{lem:prop-MS}, all terms in the cumulant equations make sense, although it is not yet clear at this stage whether they are uniformly bounded or not.

\subsection{Truncation of the BBGKY hierarchy}
This section is devoted to the proof of Proposition~\ref{prop:truncation}.
We start by rigorously truncating the BBGKY hierarchy of Lemma~\ref{lem:eqns}, leading to a closed system of equations on $G_N^1,G_N^2,G_N^3$. This is based on the uniform a priori estimates of Lemma~\ref{lem:cum-ap},
and constitutes a rigorous version of~\eqref{eq:truncate-BBGKY}.
In addition, this justifies the optimal orders of magnitude in~\eqref{eq:orders-cum}. Although we focus here on the first three cumulants, the estimates are easily pursued to higher order.

\begin{lem}\label{lem:truncate-eqn-G123}
Let the assumptions of Theorem~\ref{th:main} hold for $V,\beta,f^\circ$.
Given $0\le r<\frac14$, there holds for all $\tau\ge0$,
\begin{equation}\label{eq:G1}
N(\partial_t g_N^1)|_{t=N^r\tau}(v)\,=\,\int_{\T^d}\big(M_2^1NG_N^{2;N^r\tau}\big)(x,v)\,dx,
\end{equation}
and
\begin{align*}
\bigg\|NG_N^{2;N^r\tau}
-\int_0^{N^r\tau} e^{-iL_2(N^r\tau-t')}\big(S_1^2G_N^{1;t'}+M_3^2(NG_N^{3;t'})\big)\,dt'\bigg\|_{H^{-1}_\beta(\D^2)}
&\lesssim_\beta\langle\tau\rangle^2N^{2r-\frac12},\\
\bigg\|NG_N^{3;N^r\tau}-e^{-iN^rL_3\tau}(NG_N^{3;\circ})-\int_0^{N^r\tau} e^{-iL_3(N^r\tau-t')}S_1^3G_N^{1;t'}\,dt'\bigg\|_{H^{-1}_\beta(\D^3)}
&\lesssim_\beta\langle\tau\rangle^2N^{2r-\frac12}.
\end{align*}
In particular, this implies
\begin{equation}\label{eq:opt-bound-order3}
\|G_N^{2;N^r\tau}\|_{H^{-1}_\beta(\D^2)}+\|G_N^{3;N^r\tau}\|_{H^{-1}_\beta(\D^2)}\,\lesssim_\beta\,\langle \tau\rangle^{2}N^{2r-1}.
\qedhere
\end{equation}
\end{lem}

\begin{proof}
Averaging in space the equation for $G_N^1$ in Lemma~\ref{lem:eqns} directly yields~\eqref{eq:G1}.
We turn to the description of $NG_N^2$. The corresponding equation in Lemma~\ref{lem:eqns} takes the form
\[\partial_t(NG_N^2)+iL_2(NG_N^2)=S_{1}^2G_N^1+M_{3}^2(NG_N^3)+\tfrac1NS_{2}^2(NG_N^2),\]
and thus, by Duhamel's formula, with $G_N^{2;\circ}=0$ (cf.~Lemma~\ref{lem:initial}),
\begin{multline*}
NG_N^{2;N^r\tau}=
\int_0^{N^r\tau} e^{-iL_2(N^r\tau-t')}\big(S_1^2G_N^{1;t'}+M_3^2(NG_N^{3;t'})\big)\,dt'\\
+\tfrac1N\int_0^{N^r\tau} e^{-iL_2(N^r\tau-t')}S_{2}^2(NG_N^{2;t'})\,dt'.
\end{multline*}
Successively applying Lemmas~\ref{lem:prop-L} and~\ref{lem:prop-MS}, we deduce
\begin{eqnarray*}
\lefteqn{\bigg\|NG_N^{2;N^r\tau}-\int_0^{N^r\tau} e^{-iL_2(N^r\tau-t')}\big(S_1^2G_N^{1;t'}+M_3^2(NG_N^{3;t'})\big)\,dt'\bigg\|_{H^{-1}_\beta(\D^2)}}\hspace{2cm}\\
&\le&\tfrac1N\int_0^{N^r\tau} \big\|e^{-iL_2(N^r\tau-t')}S_{2}^2(NG_N^{2;t'})\big\|_{H^{-1}_\beta(\D^2)}\,dt'\\
&\lesssim_{\beta}&N^{r-1}\langle\tau\rangle\int_0^{N^r\tau}\|S_{2}^2(NG_N^{2;t'})\|_{H^{-1}_\beta(\D^2)}\,dt'\\
&\lesssim_{\beta}&N^{r-1}\langle\tau\rangle\int_0^{N^r\tau}\|NG_N^{2;t'}\|_{\Ld^2_\beta(\D^2)}\,dt',
\end{eqnarray*}
and the stated estimate then follows from the a priori estimate of Lemma~\ref{lem:cum-ap} in the form $\|NG_N^{2}\|_{\Ld^2_\beta(\D^2)}\lesssim_\beta N^\frac12$.

\medskip\noindent
We turn to the corresponding description of $NG_N^3$.
In view of Duhamel's formula, the equation for $NG_N^3$ in Lemma~\ref{lem:eqns} yields
\begin{multline}\label{eq:G3-pre}
NG_N^{3;N^r\tau}=e^{-iN^rL_3\tau}(NG_N^{3;\circ})+\int_0^{N^r\tau}e^{-iL_3(N^r\tau-t')}S_1^3G_N^{1;t'}\,dt'\\
+\int_0^{N^r\tau} e^{-iL_3(N^r\tau-t')}\big(M_4^3(NG_N^{4;t'})+\tfrac1NS_{2}^3(NG_N^{2;t'})+\tfrac1NS_{3}^3(NG_N^{3;t'})\big)dt'.
\end{multline}
Appealing again to Lemmas~\ref{lem:prop-L} and~\ref{lem:prop-MS}, and to the a priori estimate of Lemma~\ref{lem:cum-ap} in the forms $\|NG_N^{4}\|_{\Ld^2_\beta(\D^2)}\lesssim_\beta N^{-\frac12}$, $\|NG_N^{2}\|_{\Ld^2_\beta(\D^2)}\lesssim_\beta N^\frac12$, and $\|NG_N^{3}\|_{\Ld^2_\beta(\D^2)}\lesssim_\beta 1$, the corresponding result follows.

\medskip\noindent
Finally, similar arguments show that
\begin{eqnarray*}
\bigg\|\int_0^{N^r\tau} e^{-iL_2(N^r\tau-t')}\big(S_1^2G_N^{1;t'}+M_3^2(NG_N^{3;t'})\big)\,dt'\bigg\|_{H^{-1}_\beta(\D^2)}
&\lesssim_\beta&\langle\tau\rangle^2N^{2r},\\
\bigg\|e^{-iN^rL_3\tau}(NG_N^{3;\circ})+\int_0^{N^r\tau} e^{-iL_3(N^r\tau-t')}S_1^3G_N^{1;t'}\,dt'\bigg\|_{H^{-1}_\beta(\D^3)}
&\lesssim_\beta&\langle\tau\rangle^2N^{2r},
\end{eqnarray*}
and the last estimate~\eqref{eq:opt-bound-order3} then follows from the previous descriptions of $G_N^2,G_N^3$.
\end{proof}

Solving the expressions for $NG_N^2$ and $NG_N^3$ in terms of $g_N^1$ and inserting them into the equation for $g_N^1$, we are led to a closed equation on $g_N^1$ as claimed in Proposition~\ref{prop:truncation}.

\begin{proof}[Proof of Proposition~\ref{prop:truncation}]
Since Lemmas~\ref{lem:prop-L} and~\ref{lem:prop-MS} yield for all $H\in\Ld^\infty(\R^+;C^\infty_c(\D^3))$,
\[\bigg\|\int_0^{N^r\tau}e^{-iL_2(N^r\tau-t')}M_3^2H^{t'}\,dt'\bigg\|_{H^{-2}_\beta(\D^2)}\,\lesssim_\beta\,\langle\tau\rangle^3N^{3r}\sup_{0\le t'\le N^r\tau}\|H^{t'}\|_{H^{-1}_\beta(\D^3)},\]
we may insert the approximate expression for $NG_N^3$ inside that for $NG_N^2$ in Lemma~\ref{lem:truncate-eqn-G123}, to the effect of
\begin{multline*}
\bigg\|NG_N^{2;N^r\tau}
-\int_0^{N^r\tau} e^{-iL_2(N^r\tau-t')}M_3^2e^{-iL_3t'}(NG_N^{3;\circ})\,dt'
-\int_0^{N^r\tau} e^{-iL_2(N^r\tau-t')}S_1^2G_N^{1;t'}\,dt'\\
-\int_0^{N^r\tau}\!\!\!\int_0^{t'} e^{-iL_2(N^r\tau-t')}M_3^2e^{-iL_3(t'-t'')}S_1^3G_N^{1;t''}\,dt''dt'
\bigg\|_{H^{-2}_\beta(\D^2)}
\,\lesssim_\beta\,\langle\tau\rangle^5N^{5r-\frac12}.
\end{multline*}
Next, we argue that we can replace $G_N^1$ by its initial condition $g^\circ$ (cf.~Lemma~\ref{lem:initial}) in the last two left-hand side terms.
For that purpose, we appeal to~\eqref{eq:G1} in the time-integrated form of
\begin{equation}\label{eq:G1-bis}
G_N^{1;N^r\tau}=g^\circ+\tfrac1N\int_0^{N^r\tau} e^{-iL_1(N^r\tau-t')}M_2^1(NG_N^{2;t'})\,dt',
\end{equation}
which yields, in view of Lemmas~\ref{lem:prop-L} and~\ref{lem:prop-MS} and of the a priori estimate of Lemma~\ref{lem:cum-ap} in the form $\|NG_N^{2}\|_{\Ld^2_\beta(\D^2)}\lesssim_\beta N^\frac12$,
\begin{equation}\label{eq:G1-bis+}
\|G_N^{1;N^r\tau}-g^\circ\|_{H^{-1}_\beta(\D)}\,\lesssim\,\langle\tau\rangle^2N^{2r-\frac12}.
\end{equation}
Since Lemmas~\ref{lem:prop-L} and~\ref{lem:prop-MS} yield for all $H\in\Ld^\infty(\R^+;C^\infty_c(\D))$,
\begin{equation*}
\bigg\|\int_0^{N^r\tau}e^{-iL_2(N^r\tau-t')}S_1^2H^{t'}\,dt'\bigg\|_{H^{-2}_\beta(\D^2)}\,\lesssim_\beta\,\langle\tau\rangle^3N^{3r}\sup_{0\le t'\le N^r\tau}\|H^{t'}\|_{H^{-1}_\beta(\D)},
\end{equation*}
and similarly,
\begin{multline*}
\bigg\|\int_0^{N^r\tau}\!\!\!\int_0^{t'}e^{-iL_2(N^r\tau-t')}M_3^2e^{-iL_3(t'-t'')}S_1^3H^{t''}\,dt''dt'\bigg\|_{H^{-3}_\beta(\D^2)}\\
\,\lesssim_\beta\,\langle\tau\rangle^7N^{7r}\sup_{0\le t'\le N^r\tau}\|H^{t'}\|_{H^{-1}_\beta(\D)},
\end{multline*}
we may insert~\eqref{eq:G1-bis+} into the above approximate expression for $NG_N^2$, to the effect of
\begin{multline*}
\bigg\|NG_N^{2;N^r\tau}
-\int_0^{N^r\tau} e^{-iL_2(N^r\tau-t')}M_3^2e^{-iL_3t'}(NG_N^{3;\circ})\,dt'
-\int_0^{N^r\tau} e^{-iL_2(N^r\tau-t')}S_1^2g^\circ\,dt'\\
-\int_0^{N^r\tau}\!\!\!\int_0^{t'} e^{-iL_2(N^r\tau-t')}M_3^2e^{-iL_3(t'-t'')}S_1^3g^\circ\,dt''dt'
\bigg\|_{H^{-3}_\beta(\D^2)}
\,\lesssim_\beta\,\langle\tau\rangle^9N^{9r-\frac12}.
\end{multline*}
Finally inserting this into the equation~\eqref{eq:G1} for $g_N^1$, and using Lemma~\ref{lem:prop-MS} once again, the conclusion follows.
\end{proof}

\section{Markovian limit of the truncated hierarchy}\label{sec:eqns-lim}

The present section is devoted to the computation of the long-time limit of the different terms in the integrated truncated BBGKY hierarchy of Proposition~\ref{prop:truncation}, showing that it coincides with the expected linearized Lenard--Balescu operator~\eqref{eq:FP}. The proof of Theorem~\ref{th:main} is concluded in Section~\ref{sec:proof-main}.

\subsection{Laplace transform}
The long-time propagators in Proposition~\ref{prop:truncation} are best computed formally by means of Laplace transforms. In such terms, we reformulate as follows the result of Proposition~\ref{prop:truncation}.

\begin{cor}\label{cor:truncation}
Let the assumptions of Theorem~\ref{th:main} hold for $V,\beta,f^\circ$.
Given $0\le r<\frac1{18}$, there holds for all $\phi\in C^\infty_c(\R^+)$,
\begin{align*}
&\bigg\|\int_0^\infty\phi(\tau)\,(N\partial_t g_N^{1})|_{t=N^r\tau}\,d\tau\\
&\hspace{0.5cm}-\int_\R g_\phi(\alpha)\bigg(\int_{\T^d}\Big(M_2^1\big(iL_2+\tfrac{i\alpha+1}{N^r}\big)^{-1}S_1^2g^\circ\Big)(x,\cdot)\,dx\\
&\hspace{1.3cm}+\int_{\T^d}\Big(M_2^1\big(iL_2+\tfrac{i\alpha+1}{N^r}\big)^{-1}M_3^2 \big(iL_3+\tfrac{i\alpha+1}{N^r}\big)^{-1}S_1^3g^\circ\Big)(x,\cdot)\,dx\\
&\hspace{1.3cm}+\tfrac{i\alpha+1}{N^r}\int_{\T^d}\Big(M_2^1\big(iL_2+\tfrac{i\alpha+1}{N^r}\big)^{-1}M_3^2\big(iL_3+\tfrac{i\alpha+1}{N^r}\big)^{-1}(NG_N^{3;\circ})\Big)(x,\cdot)\,dx
\bigg)\,d\alpha\bigg\|_{H^{-4}_\beta(\R^d)}\\
&\hspace{12.3cm}\lesssim_{\beta,\phi}\,N^{9r-\frac12},
\end{align*}
where $g_\phi(\alpha):=\frac1{2\pi}\int_0^\infty \frac{e^{(i\alpha+1)\tau}}{i\alpha+1}\phi(\tau)\,d\tau$ belongs to $C^\infty_b(\R)$ and satisfies
\[|g_\phi(\alpha)|\lesssim_\phi\langle\alpha\rangle^{-2}\qquad\text{and}\qquad\int_\R g_\phi=\int_0^\infty\phi.\qedhere\]
\end{cor}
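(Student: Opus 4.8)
The plan is to derive Corollary~\ref{cor:truncation} from the pointwise-in-$\tau$ estimate of Proposition~\ref{prop:truncation} by testing against $\phi$ and re-expressing the three Duhamel terms via a Laplace transform; the quantitative work having already been carried out upstream, what remains is essentially a change of variables and a Fubini argument. First I would test the estimate of Proposition~\ref{prop:truncation} against $\phi$: by Minkowski's integral inequality the left-hand side of the corollary is bounded by $\int_0^\infty|\phi(\tau)|\,\big\|N(\partial_tg_N^1)|_{t=N^r\tau}-[\cdots]\big\|_{H^{-4}_\beta(\R^d)}\,d\tau$, and since $\phi$ has compact support the estimate $\langle\tau\rangle^9N^{9r-\frac12}$ integrates to $\lesssim_{\beta,\phi}N^{9r-\frac12}$, which is exactly the claimed bound. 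What then remains is the purely algebraic identity: for each of the three Duhamel terms in Proposition~\ref{prop:truncation}, one must identify its $\tau$-average $\int_0^\infty\phi(\tau)[\cdots]\,d\tau$ with the corresponding resolvent expression in the statement (the harmless spatial average $\int_{\T^d}\cdots\,dx$ passing through every integral and being suppressed below).

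Next I would record two ingredients. The first is the Laplace representation of the resolvents: since each linearized Vlasov operator $L_{m+1}$ generates a uniformly bounded $C_0$-group on $\Ld^2_\beta(\D^{m+1})$ (Lemma~\ref{lem:spectrum}(iv)), and more generally on each $H^{-s}_\beta(\D^{m+1})$ with growth at most $\langle t\rangle^s$ (Lemma~\ref{lem:prop-L}), for every $z$ with $\Re z>0$ we have
\[\Big(iL_{m+1}+z\Big)^{-1}=\int_0^\infty e^{-iL_{m+1}s}\,e^{-zs}\,ds,\]
the integral converging absolutely (for fixed $N$) on each of these spaces; I apply this with $z=\frac{i\alpha+1}{N^r}$, for which $\Re z=N^{-r}>0$. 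The second is the elementary Fourier identity: for $u\ge0$,
\[\int_\R g_\phi(\alpha)\,e^{-(i\alpha+1)u}\,d\alpha=\int_u^\infty\phi,\qquad\int_\R g_\phi(\alpha)\,\tfrac{i\alpha+1}{N^r}\,e^{-(i\alpha+1)u}\,d\alpha=\tfrac1{N^r}\phi(u),\]
obtained by inserting the definition of $g_\phi$, applying Fubini, and using $\int_\R e^{i\alpha t}\,d\alpha=2\pi\delta(t)$. Taking $u=0$ in the first identity already gives $\int_\R g_\phi=\int_0^\infty\phi$, while the decay $|g_\phi(\alpha)|\lesssim_\phi\langle\alpha\rangle^{-2}$ follows from one integration by parts in $\tau$, and $g_\phi\in C^\infty_b(\R)$ from differentiating under the integral sign over the compact set $\operatorname{supp}\phi$.

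I would then treat each Duhamel term by a change of variables. For the first term, substituting $s=N^r\tau-t'$ and applying Fubini (licit for fixed $N$ by the above bounds together with Lemma~\ref{lem:prop-MS}) gives
\[\int_0^\infty\phi(\tau)\int_0^{N^r\tau}M_2^1e^{-iL_2s}S_1^2g^\circ\,ds\,d\tau=\int_0^\infty M_2^1e^{-iL_2s}S_1^2g^\circ\Big(\int_{s/N^r}^\infty\phi\Big)ds,\]
and the first Fourier identity with $u=s/N^r$, combined with the Laplace representation of $(iL_2+\tfrac{i\alpha+1}{N^r})^{-1}$, rewrites the right-hand side as $\int_\R g_\phi(\alpha)M_2^1(iL_2+\tfrac{i\alpha+1}{N^r})^{-1}S_1^2g^\circ\,d\alpha$. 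For the second term, the substitution $s_1=N^r\tau-t'$, $s_2=t'-t''$ maps the simplex $\{0\le t''\le t'\le N^r\tau\}$ onto $\{s_1,s_2\ge0,\ s_1+s_2\le N^r\tau\}$ with unit Jacobian, and the same reasoning with $u=(s_1+s_2)/N^r$, expanding both resolvents, produces the double-resolvent term. For the third term, carrying the initial data $NG_N^{3;\circ}$, the substitution $s_1=N^r\tau-t'$, $s_2=t'$ parametrizes the segment $s_1+s_2=N^r\tau$; here it is the second Fourier identity that enters, its prefactor $\tfrac1{N^r}\phi(\cdot)$ — equivalently the factor $\tfrac{i\alpha+1}{N^r}$ in front of the product of resolvents — compensating exactly the absence of a time integral over $G_N^1$, and reverting to $\tau=(s_1+s_2)/N^r$ recovers $\int_0^\infty\phi(\tau)\int_0^{N^r\tau}M_2^1e^{-iL_2(N^r\tau-t')}M_3^2e^{-iL_3t'}(NG_N^{3;\circ})\,dt'\,d\tau$.

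I do not expect a genuine obstacle here: the content lies in Proposition~\ref{prop:truncation}, and the rest is a formal reorganization. The only points requiring (routine) care are the Fubini justifications — valid for fixed $N$ because $g_\phi$ decays like $\langle\alpha\rangle^{-2}$ while $(iL_{m+1}+\tfrac{i\alpha+1}{N^r})^{-1}$ is bounded on the relevant negative Sobolev space with an $N$-dependent norm (no uniformity being needed at this stage, the smallness being already built into the error term) — and the bookkeeping that correctly pairs the prefactor $\tfrac{i\alpha+1}{N^r}$ with the initial-data term through the second Fourier identity.
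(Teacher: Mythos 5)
Your proposal is correct and follows essentially the same route as the paper: the paper likewise integrates Proposition~\ref{prop:truncation} against $\phi$ and converts the three time-simplex Duhamel terms into $g_\phi$-weighted products of resolvents via a Laplace-transform product formula (Lemma~\ref{lem:Laplace}), proved there by inserting the Fourier representation of $\delta(\tau_0+\dots+\tau_n-N\tau)$ — the same mechanism as your change of variables plus the two Fourier identities for $g_\phi$. Your explicit treatment of the initial-data term, where the absent free time segment is compensated by the prefactor $\tfrac{i\alpha+1}{N^r}$ (your second identity), correctly reproduces the factor appearing in the statement, matching what the paper does implicitly.
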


Integrating the result of Proposition~\ref{prop:truncation} with a test function $\phi$ in time,
this corollary directly follows from applying the Laplace transform in form of the following product formula.

\begin{lem}[Product formula for Laplace transform]\label{lem:Laplace}
For all $\phi\in C^\infty_c(\R^+)$,
there holds for all $n\ge1$ and all generators $R_1,\ldots,R_n$ of uniformly bounded $C_0$-groups on a Hilbert space $\Hc$,
\begin{multline*}
\int_0^\infty \phi(\tau)\int_{(\R^+)^n}\mathds1_{t_1\le\ldots\le t_n\le N\tau}\,e^{-iR_n(N\tau-t_n)}\otimes\ldots\otimes e^{-iR_1(t_{2}-t_1)}dt_1\ldots dt_n\,d\tau\\
\,=\,\int_\R g_\phi(\alpha)\,\big(iR_n+\tfrac{i\alpha+1}{N}\big)^{-1}\otimes\ldots\otimes\big(iR_1+\tfrac{i\alpha+1}{N}\big)^{-1}d\alpha,
\end{multline*}
where the transformation $g_\phi$ is as in the statement of Corollary~\ref{cor:truncation} above.
\end{lem}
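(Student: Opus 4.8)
The plan is to prove the product formula \textbf{Lemma~\ref{lem:Laplace}} by reducing everything to a one-dimensional scalar identity via the spectral representation of the generators, then assembling the tensor product and recognizing the time integral as a (rescaled) Laplace transform in the variable $\tau$. I will work at the level of matrix elements: for fixed vectors in $\Hc$, both sides are ordinary (absolutely convergent) integrals, so it suffices to check the identity after pairing with test vectors.

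\emph{Step 1: the scalar building block.} For a single generator $R$ of a uniformly bounded $C_0$-group and $\Re\lambda>0$, recall the resolvent-as-Laplace-transform formula $(iR+\lambda)^{-1}=\int_0^\infty e^{-s(iR+\lambda)}\,ds$, which converges in operator norm since $\|e^{-isR}\|\lesssim1$. Thus $\big(iR_j+\tfrac{i\alpha+1}{N}\big)^{-1}=\int_0^\infty e^{-s_j iR_j}e^{-s_j(i\alpha+1)/N}\,ds_j$. Tensoring these $n$ identities and integrating against $g_\phi(\alpha)$ gives the right-hand side as
\[
\int_\R g_\phi(\alpha)\int_{(\R^+)^n}e^{-is_n R_n}\otimes\cdots\otimes e^{-is_1R_1}\;e^{-(i\alpha+1)(s_1+\cdots+s_n)/N}\,ds_1\cdots ds_n\,d\alpha.
\]

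\emph{Step 2: change of variables and the $\alpha$-integral.} On the left-hand side, substitute $s_j:=\tau_{j+1}-\tau_j$ (with the conventions $\tau_{n+1}:=N\tau$, $\tau_1:=\tau_1$), so that the ordered simplex $\{0\le\tau_1\le\cdots\le\tau_n\le N\tau\}$ becomes $\{s_1,\dots,s_n\ge0,\ s_1+\cdots+s_n\le N\tau\}$ and $e^{-iR_n(N\tau-\tau_n)}\otimes\cdots\otimes e^{-iR_1(\tau_2-\tau_1)}$ becomes $e^{-is_n R_n}\otimes\cdots\otimes e^{-is_1R_1}$ — wait, the first factor $N\tau-\tau_n=s_n$, the last $\tau_2-\tau_1=s_1$, so indeed one gets $e^{-is_nR_n}\otimes\cdots\otimes e^{-is_1R_1}$ together with one extra free integration that the simplex description of the left-hand side already absorbed; more precisely, after integrating out $\tau_1$ (which ranges over $[0,N\tau-s_1-\cdots-s_n]$... ) one should instead keep all $n$ variables $s_1,\dots,s_n$ and retain the constraint $\sum_j s_j\le N\tau$. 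Then the left-hand side equals
\[
\int_0^\infty\phi(\tau)\int_{(\R^+)^n}\mathds1_{\sum_j s_j\le N\tau}\,e^{-is_nR_n}\otimes\cdots\otimes e^{-is_1R_1}\,ds_1\cdots ds_n\,d\tau.
\]
Comparing with Step~1, it remains to verify the purely scalar identity, for every $S\ge0$,
\[
\int_0^\infty\phi(\tau)\,\mathds1_{S\le N\tau}\,d\tau\;=\;\int_\R g_\phi(\alpha)\,e^{-(i\alpha+1)S/N}\,d\alpha,
\]
i.e.\ $\int_{S/N}^\infty\phi(\tau)\,d\tau=\int_\R g_\phi(\alpha)e^{-(i\alpha+1)S/N}\,d\alpha$ with $g_\phi(\alpha)=\tfrac1{2\pi}\int_0^\infty\tfrac{e^{(i\alpha+1)\tau}}{i\alpha+1}\phi(\tau)\,d\tau$. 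This is Fourier inversion: writing $\sigma:=S/N$ and $\Phi(\tau):=\int_\tau^\infty\phi$, one has $\Phi'=-\phi$, $\widehat{\Phi}(\alpha)=\tfrac{\widehat\phi(\alpha)}{i\alpha}$-type relations adjusted by the $+1$ damping; explicitly, $\int_\R g_\phi(\alpha)e^{-(i\alpha+1)\sigma}\,d\alpha=\tfrac1{2\pi}\int_\R\int_0^\infty\tfrac{e^{(i\alpha+1)(\tau-\sigma)}}{i\alpha+1}\phi(\tau)\,d\tau\,d\alpha$, and the inner $\alpha$-integral of $\tfrac{e^{(i\alpha+1)u}}{i\alpha+1}$ equals $2\pi\,\mathds1_{u\ge0}$ by contour integration (close in the upper/lower half-plane according to the sign of $u=\tau-\sigma$; the simple pole at $\alpha=i$ contributes $2\pi$ for $u>0$ and nothing for $u<0$). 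This yields $\int_{\tau\ge\sigma}\phi(\tau)\,d\tau$ as required.

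\emph{Step 3: justification of the interchanges and the stated bounds on $g_\phi$.} All Fubini applications are legitimate: on the left the $\tau$-integral is over the compact support of $\phi$ and the operators are uniformly bounded, so the $s$-simplex has finite measure; on the right the damping $e^{-(s_1+\cdots+s_n)/N}$ together with $|g_\phi(\alpha)|\lesssim_\phi\langle\alpha\rangle^{-2}$ gives absolute convergence. The decay $|g_\phi(\alpha)|\lesssim_\phi\langle\alpha\rangle^{-2}$ follows from two integrations by parts in $\tau$ in the defining integral (using $\phi\in C^\infty_c$), and $\int_\R g_\phi=\int_0^\infty\phi$ is the $\sigma=0$ case of the scalar identity above. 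The main obstacle is purely bookkeeping: getting the change of variables $\tau_j\mapsto s_j$ exactly right so that the telescoping of increments matches the tensor slots in the correct order and the overall constraint $\sum s_j\le N\tau$ (rather than an equality) appears, and then handling the contour integral for $\int_\R\tfrac{e^{(i\alpha+1)u}}{i\alpha+1}\,d\alpha$ carefully at $u=0$ (where it should be interpreted as the limit, giving the half-line indicator up to a measure-zero ambiguity that does not affect the $\tau$-integral). Once the scalar identity is in hand, the operator statement follows by linearity and density after pairing with vectors of $\Hc$.
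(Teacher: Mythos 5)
Your proof is correct and is essentially the paper's argument run in the opposite direction: the change of variables to increments, the representation $(iR_j+\tfrac{i\alpha+1}{N})^{-1}=\int_0^\infty e^{-isR_j}e^{-s(i\alpha+1)/N}\,ds$, and the Fourier/residue fact $\tfrac1{2\pi}\int_\R\tfrac{e^{(i\alpha+1)u}}{i\alpha+1}\,d\alpha=\mathds1_{u\ge0}$ are exactly the ingredients the paper deploys when it inserts $\delta(\tau_0+\cdots+\tau_n-N\tau)=\tfrac1{2\pi}\int_\R e^{-i\alpha(\cdot)}\,d\alpha$ together with the damping $e^{\tau}e^{-\frac1N(\tau_0+\cdots+\tau_n)}$. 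Your scalar identity $\int_{S/N}^\infty\phi=\int_\R g_\phi(\alpha)\,e^{-(i\alpha+1)S/N}\,d\alpha$ is precisely that delta insertion after integrating out the slack variable $\tau_0$, so the two proofs coincide up to organization (and the conditionally convergent interchange you flag is no less formal than the paper's own delta manipulation, and is fixable by the $\langle\alpha\rangle^{-2}$ decay of $g_\phi$ plus a standard regularization or Parseval argument).
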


\begin{proof}
Consider the increments $\tau_j:=t_{j+1}-t_j$ for $1\le j\le n-1$, and set $\tau_n:=N\tau-t_n$ and $\tau_0:=t_1$.
Using the uniform boundedness of the $C_0$-groups, we can write
\begin{multline*}
\int_0^\infty \phi(\tau)\int_{(\R^+)^n}\mathds1_{t_1\le\ldots\le t_n\le N\tau}\,e^{-iR_n(N\tau-t_n)}\otimes\ldots\otimes e^{-i R_1(t_{2}-t_1)}dt_1\ldots dt_n\,d\tau\\
\,=\,\int_0^\infty \phi(\tau)\int_{(\R^+)^{n+1}}\delta(\tau_0+\ldots+\tau_n-N\tau)\,e^{-iR_n\tau_n}\otimes\ldots\otimes e^{-iR_1\tau_1}d\tau_0\ldots d\tau_n\,d\tau,
\end{multline*}
or equivalently,
\begin{multline*}
\int_0^\infty \phi(\tau)\int_{(\R^+)^n}\mathds1_{t_1\le\ldots\le t_n\le N\tau}\,e^{-iR_n(N\tau-t_n)}\otimes\ldots\otimes e^{-i R_1(t_{2}-t_1)}dt_1\ldots dt_n\,d\tau\\
\,=\,\int_0^\infty e^\tau\phi(\tau)\int_{(\R^+)^{n+1}}\delta(\tau_0+\ldots+\tau_n-N\tau)\\
\times\,e^{-(iR_n+\frac1N)\tau_n}\otimes\ldots\otimes e^{-(iR_1+\frac1N)\tau_1}\,e^{-\frac1N\tau_0}\,d\tau_0\ldots d\tau_n\,d\tau.
\end{multline*}
Then inserting the formula $\delta(\tau_0+\ldots+\tau_n-N\tau)=\frac1{2\pi}\int_\R e^{-i\alpha(\tau_0+\ldots+\tau_n-N\tau)}d\alpha$, we find
\begin{eqnarray*}
\lefteqn{\int_0^\infty \phi(\tau)\int_{(\R^+)^n}\mathds1_{t_1\le\ldots\le t_n\le N\tau}\,e^{-iR_n(N\tau-t_n)}\otimes\ldots\otimes e^{-i R_1(t_{2}-t_1)}dt_1\ldots dt_n\,d\tau}\\
&=&\frac1{2\pi}\int_\R\Big(\int_0^\infty e^{(i\alpha+1)\tau}\phi(\tau)\,d\tau\Big)\Big(\tfrac1N\int_0^\infty e^{-\frac1N(i\alpha+1)\tau_0}d\tau_0\Big)\\
&&\hspace{2cm}\times\Big(\int_0^\infty e^{-(iR_n+\frac{i\alpha+1}N)\tau_n}d\tau_n\Big)\otimes\ldots\otimes\Big(\int_0^\infty e^{-(iR_1+\frac{i\alpha+1}N)\tau_1}d\tau_1\Big)\,d\alpha\\
&=&\frac1{2\pi}\int_\R\Big(\int_0^\infty \frac{e^{(i\alpha+1)\tau}}{i\alpha+1}\phi(\tau)\,d\tau\Big)\big(iR_n+\tfrac{i\alpha+1}{N}\big)^{-1}\otimes\ldots\otimes\big(iR_1+\tfrac{i\alpha+1}{N}\big)^{-1}d\alpha.
\end{eqnarray*}
It remains to analyze the integral in bracket, that is, $g_\phi(\alpha):=\frac1{2\pi}\int_0^\infty \frac{e^{(i\alpha+1)\tau}}{i\alpha+1}\phi(\tau)\,d\tau$.
Note that
\begin{eqnarray*}
(1+\alpha^2)\int_0^\infty \frac{e^{(i\alpha+1)\tau}}{i\alpha+1}\phi(\tau)\,d\tau&=&2\int_0^\infty e^{(i\alpha+1)\tau}\phi(\tau)d\tau+\int_0^\infty (-\partial_\tau) e^{(i\alpha+1)\tau}\phi(\tau)d\tau\\
&=&2\int_0^\infty e^{(i\alpha+1)\tau}\phi(\tau)d\tau+\int_0^\infty e^{(i\alpha+1)\tau}\phi'(\tau)d\tau+\phi(0),
\end{eqnarray*}
where the right-hand side is uniformly bounded in $\alpha$ for $\phi\in C^\infty_c(\R^+)$, hence $g_\phi(\alpha)$ is bounded by $C_\phi(1+\alpha^2)^{-1}$.
Moreover, a straightforward computation by means of Fourier transforms yields the identity $\int_\R g_\phi=\int_0^\infty\phi$.
\end{proof}

\subsection{Preliminary estimates}
We establish the following uniform estimates, which are useful for application of Lebesgue's dominated convergence theorem when computing the limit of the different terms appearing in Corollary~\ref{cor:truncation}. In particular, the bound on the dispersion function $\e_{\beta,m}^\circ$ improves on the positivity statement of Lemma~\ref{lem:spectrum}(ii).

\begin{lem}\label{lem:prel-est}
The following hold for all $\beta\in(0,\infty)$ and $k\in2\pi\Z^d\setminus\{0\}$,
\begin{enumerate}[(i)]
\item \emph{Decay estimate:} for all $\Im\omega>0$,
\[\big|\big\langle\tfrac1{k\cdot v-\omega}\big\rangle_{v}\big|+\tfrac1{|k|}\big|\big\langle\tfrac{k\cdot v}{k\cdot v-\omega}\big\rangle_{v}\big|\,\lesssim_\beta\,\tfrac1{1+|\Re\omega|};\]
\item \emph{Lower bound on $\e_{\beta,m}^\circ$:} given $V\in\Ld^\infty(\T^d)$ with $\widehat V\ge0$ and $\beta\|V\|_{\Ld^\infty}\le \frac1{C_0}$ for some large enough constant $C_0\simeq1$, there holds for $1\le m\le N$ and $\Im\omega>0$,
\[|\e_{\beta,m}^\circ(k,\omega)|\,\gtrsim\,1;\]
\item \emph{Uniform bounds:} for all $\Im\omega,\Im\eta>0$,
\begin{gather*}
\big\langle\big|\tfrac{1}{k\cdot v-\omega}\big|\big\rangle_v+\tfrac1{|k|}\big\langle\big|\tfrac{k\cdot v}{k\cdot v-\omega}\big|\big\rangle_v\,\lesssim_\beta\,1+\log(1+\tfrac1{\Im\omega}),\\
\big|\big\langle\tfrac{1}{(k\cdot v-\eta)(k\cdot v-\omega)}\big\rangle_v\big|+\tfrac1{|k|}\big|\big\langle\tfrac{k\cdot v}{(k\cdot v-\eta)(k\cdot v-\omega)}\big\rangle_v\big|\,\lesssim_\beta\,1.\qedhere
\end{gather*}
\end{enumerate}
\end{lem}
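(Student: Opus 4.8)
The plan is to reduce the three statements, via separation of the direction $\hat k:=k/|k|$ and a Gaussian rescaling, to a few elementary bounds on the \emph{universal} plasma dispersion function $Z_0(\xi):=\tfrac1{\sqrt\pi}\int_\R\tfrac{e^{-t^2}}{t-\xi}\,dt$ (for $\Im\xi>0$) and its relatives $W_j(\xi):=\tfrac1{\sqrt\pi}\int_\R\tfrac{t^je^{-t^2}}{t-\xi}\,dt$, $j=1,2$. Indeed each integrand depends on $v$ only through $k\cdot v=|k|(\hat k\cdot v)$, and $\hat k\cdot v$ has law $m_\beta(u)\,du=(\beta/2\pi)^{1/2}e^{-\beta u^2/2}\,du$ under $M_\beta$; with the shorthand $\xi:=\tfrac\omega{|k|}\sqrt{\beta/2}$, which lies in the open upper half-plane $\C^+$ since $\Im\omega>0$, the substitution $u=\sqrt{2/\beta}\,t$ yields
\[\big\langle\tfrac1{k\cdot v-\omega}\big\rangle_v=\tfrac{\sqrt{\beta/2}}{|k|}\,Z_0(\xi),\qquad \big\langle\tfrac{k\cdot v}{k\cdot v-\omega}\big\rangle_v=1+\xi Z_0(\xi)=W_1(\xi),\]
so that $\e_{\beta,m}^\circ(k,\omega)=1+\tfrac{N+1-m}N\beta\widehat V(k)\,W_1(\xi)$; crucially $W_1$ is dimensionless and carries \emph{no} residual power of $\beta$. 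A partial-fraction decomposition turns the two-resolvent quantities in (iii) into divided differences,
\[\big\langle\tfrac1{(k\cdot v-\eta)(k\cdot v-\omega)}\big\rangle_v=\tfrac{\beta}{2|k|^2}\,\tfrac{Z_0(\xi_\eta)-Z_0(\xi_\omega)}{\xi_\eta-\xi_\omega},\qquad \tfrac1{|k|}\big\langle\tfrac{k\cdot v}{(k\cdot v-\eta)(k\cdot v-\omega)}\big\rangle_v=\tfrac{\sqrt{\beta/2}}{|k|^2}\,\tfrac{\psi(\xi_\eta)-\psi(\xi_\omega)}{\xi_\eta-\xi_\omega},\]
with $\xi_\eta,\xi_\omega\in\C^+$ defined analogously and $\psi(\xi):=\xi Z_0(\xi)$; one reads off $W_1=1+\xi Z_0$, and a single integration by parts gives $Z_0'=-2(1+\xi Z_0)=-2W_1$ and $\psi'=Z_0-2(\xi+\xi^2Z_0)=Z_0-2W_2$.

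The analytic heart is a single preliminary estimate, which I would isolate: \emph{if $\phi$ is entire with $|\phi(z)|\lesssim_\phi e^{-(\Re z)^2/2}$ uniformly on $\{-2\le\Im z\le0\}$, then $W_\phi(\xi):=\int_\R\tfrac{\phi(t)}{t-\xi}\,dt$ satisfies $|W_\phi(\xi)|\lesssim_\phi\tfrac1{1+|\Re\xi|}$ for all $\Im\xi>0$.} To prove it, shift the contour of integration down to $\R-2i$ (no pole is crossed, since $\Im\xi>0$, and the vertical arcs vanish by the Gaussian decay), so that $W_\phi(\xi)=\int_\R\tfrac{\phi(t-2i)}{(t-\Re\xi)-i(2+\Im\xi)}\,dt$, where the denominator is now bounded below by $\max(|t-\Re\xi|,2)$ and $|\phi(t-2i)|\lesssim_\phi e^{-t^2/2}$. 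For $|\Re\xi|\le1$ this already gives $|W_\phi(\xi)|\lesssim_\phi1$; for $|\Re\xi|>1$, split the integral at $|t|=|\Re\xi|/2$, using $|t-\Re\xi|\ge|\Re\xi|/2$ on the inner range and $\int_{|t|>|\Re\xi|/2}e^{-t^2/2}\,dt\lesssim|\Re\xi|^{-1}$ on the outer one, so that both pieces are $\lesssim_\phi|\Re\xi|^{-1}$. Applying this with $\phi(t)=t^je^{-t^2}$, $j=0,1,2$, gives $|Z_0(\xi)|,\,|W_1(\xi)|,\,|W_2(\xi)|\lesssim\tfrac1{1+|\Re\xi|}$ on $\C^+$ with absolute constants.

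Statements (i) and (ii) then follow at once. For (i), $\big|\langle\tfrac1{k\cdot v-\omega}\rangle_v\big|+\tfrac1{|k|}\big|\langle\tfrac{k\cdot v}{k\cdot v-\omega}\rangle_v\big|\lesssim_\beta\tfrac1{|k|(1+|\Re\xi|)}$, and since $|k|(1+|\Re\xi|)=|k|+\sqrt{\beta/2}\,|\Re\omega|\gtrsim_\beta1+|\Re\omega|$ (using $|k|\ge2\pi$), this is $\lesssim_\beta\tfrac1{1+|\Re\omega|}$. For (ii), $|W_1(\xi)|\le C_1$ with $C_1$ an absolute constant (already from the master estimate at $\Re\xi=0$), hence $|\e_{\beta,m}^\circ(k,\omega)|\ge1-\tfrac{N+1-m}N\beta|\widehat V(k)|\,C_1\ge1-C_1\beta\|V\|_{\Ld^\infty}\ge\tfrac12$ as soon as $C_0\ge2C_1$, using $|\widehat V(k)|\le\|V\|_{\Ld^\infty}$ and $0\le\tfrac{N+1-m}N\le1$ for $1\le m\le N$.

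For (iii), the first inequality is the only place a logarithm appears: the elementary one-dimensional bound $\tfrac1{\sqrt\pi}\int_\R\tfrac{(1+|t|)e^{-t^2}}{|t-\xi|}\,dt\lesssim1+\log(1+\tfrac1{\Im\xi})$ — split $|t-\Re\xi|<1$ from $|t-\Re\xi|\ge1$, using $\int\tfrac{dt}{\sqrt{(t-\Re\xi)^2+(\Im\xi)^2}}=2\operatorname{arcsinh}(1/\Im\xi)$ on the inner range — after the same rescaling gives $\langle|\tfrac1{k\cdot v-\omega}|\rangle_v+\tfrac1{|k|}\langle|\tfrac{k\cdot v}{k\cdot v-\omega}|\rangle_v\lesssim_\beta\tfrac1{|k|}\big(1+\log\big(1+\tfrac{|k|}{\Im\omega}\sqrt{2/\beta}\big)\big)$, and then $\tfrac1{|k|}\log(1+c|k|/\Im\omega)\le\tfrac{\log(1+c|k|)}{|k|}+\tfrac1{|k|}\log(1+\tfrac1{\Im\omega})\lesssim_\beta1+\log(1+\tfrac1{\Im\omega})$ (using $|k|\ge2\pi$ and $\sup_{x\ge1}x^{-1}\log(1+cx)<\infty$). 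The second inequality of (iii) uses the divided-difference representations above: since $\C^+$ is convex and $Z_0,\psi$ are holomorphic there, each divided difference is an average of the derivative along a segment in $\C^+$, so $\big|\langle\tfrac1{(k\cdot v-\eta)(k\cdot v-\omega)}\rangle_v\big|\le\tfrac{\beta}{2|k|^2}\sup_{\C^+}|Z_0'|$ and $\tfrac1{|k|}\big|\langle\tfrac{k\cdot v}{(k\cdot v-\eta)(k\cdot v-\omega)}\rangle_v\big|\le\tfrac{\sqrt{\beta/2}}{|k|^2}\sup_{\C^+}|\psi'|$, and both suprema are finite absolute constants by $Z_0'=-2W_1$, $\psi'=Z_0-2W_2$ and the master estimate, while $\tfrac{\beta}{|k|^2},\tfrac{\sqrt{\beta/2}}{|k|^2}\lesssim_\beta1$. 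The main obstacle I anticipate is precisely the decay $|W_j(\xi)|\lesssim(1+|\Re\xi|)^{-1}$ uniformly up to the real axis: the naive estimate $\int\tfrac{|\phi|}{|t-\xi|}$ is useless, being both logarithmically divergent as $\Im\xi\downarrow0$ and non-decaying in $\Re\xi$, so one genuinely has to displace the pole a fixed distance off the axis (the contour shift) before the Gaussian tail can be exploited cleanly; a secondary bookkeeping point is that the bound on $\langle\tfrac{k\cdot v}{k\cdot v-\omega}\rangle_v$ feeding into (ii) must be $\beta$-independent, which works only because that dimensionless quantity equals the universal $W_1(\xi)$.
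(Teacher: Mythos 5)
Your proof is correct, and while it starts from the same reduction as the paper (only $\hat k\cdot v$ matters, so everything is a one-dimensional Gaussian integral), the mechanism is genuinely different. The paper proves (i)--(ii) by a direct real-variable computation — splitting $\tfrac1{y-\omega}$ into real and imaginary parts and estimating Poisson-kernel-type integrals against the Gaussian, cf.~\eqref{eq:comput-gen}, with (ii) obtained from the explicit formula for $|\e^\circ_{\beta,m}|^2$ together with $\widehat V\ge0$ — and it proves the two-pole bound in (iii) by slightly deforming the integration path near $\Re\eta$ and $\Re\omega$. You instead normalize to the universal plasma dispersion function $Z_0$ and its moments $W_1,W_2$, prove a single master decay estimate $|W_\phi(\xi)|\lesssim(1+|\Re\xi|)^{-1}$ on $\C^+$ by shifting the whole contour to $\R-2i$ (legitimate: the pole stays above the strip and the Gaussian kills the vertical sides), and deduce (i) and (ii) at once; for the two-pole bound you use partial fractions, the mean-value form of the divided difference along a segment in the convex set $\C^+$, and the identities $Z_0'=-2W_1$, $(\xi Z_0)'=Z_0-2W_2$, all of which check out. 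What your route buys is transparency of the $\beta$- and $|k|$-dependence — the key point for (ii) is precisely that $\langle\tfrac{k\cdot v}{k\cdot v-\omega}\rangle_v=W_1(\xi)$ is a dimensionless, universally bounded quantity, and your triangle-inequality argument does not even need $\widehat V\ge0$ — while the paper's route is more elementary, using no complex analysis beyond the local deformation in (iii); your first bound in (iii) is essentially the paper's region-splitting argument. Two cosmetic remarks only: $|\widehat V(k)|\le\|V\|_{\Ld^\infty}$ holds up to the (dimension-dependent) torus normalization constant, which is harmlessly absorbed into $C_0$ (the paper itself carries a $(2\pi)^d$ there); and the parenthetical claim that the bound on $W_1$ comes from the master estimate at $\Re\xi=0$ should simply say that the master estimate yields $\sup_{\C^+}|W_1|<\infty$, which is what your argument actually uses.
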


\begin{proof}
We start with the proof of~(i). Setting $\hat k:={k}/{|k|}$ and splitting the $v$-integral over $\hat k\R$ and over $(\hat k\R)^\bot$, we can write
\[\big\langle\tfrac1{k\cdot v-\omega}\big\rangle_{v}\,=\,\big(\tfrac{\beta}{2\pi|k|^2}\big)^\frac12\int_{\R}\tfrac1{y-\omega}\,e^{-\frac\beta{2|k|^2} y^2}\,dy.\]
Decomposing the real and imaginary parts, an elementary computation yields for all $r,c>0$ and $y_0\in\R$,
\begin{eqnarray}\label{eq:comput-gen}
\Big|c^\frac12\int_\R\tfrac1{y-y_0-ir}e^{-cy^2}dy\Big|&\le&\Big|c^\frac12\int_\R\tfrac{y-y_0}{(y-y_0)^2+r^2}e^{-cy^2}dy\Big|+c^\frac12\int_\R\tfrac{r}{(y-y_0)^2+r^2}e^{-cy^2}dy\nonumber\\
&\lesssim&\big(\tfrac1{|y_0|}+c^\frac12 e^{-\frac c2y_0^2}\big)\wedge\big(1+c^\frac12\big)\nonumber\\
&\lesssim&\tfrac1{|y_0|}\wedge\big(1+c^\frac12\big),
\end{eqnarray}
and we deduce
\[\big|\big\langle\tfrac1{k\cdot v-\omega}\big\rangle_{v}\big|\,\lesssim\,\tfrac1{|\Re\omega|}\wedge\big(1+\tfrac\beta{|k|^2}\big)^\frac12\,\lesssim_\beta\,\tfrac1{1+|\Re\omega|}.\]
Similarly noting that
\[c^\frac12\Big|c^\frac12\int_\R\tfrac{1}{y-y_0-ir}ye^{-cy^2}dy\Big|\,\lesssim\,\tfrac1{|y_0|}\wedge\big(1+c^\frac12\big),\]
the corresponding bound on $\langle\tfrac{k\cdot v}{k\cdot v-\omega}\rangle_{v}$ follows.

\medskip\noindent
We turn to the lower bound~(ii) for $\e_{\beta,m}^\circ$.
Since $\widehat V$ is nonnegative, we deduce from~\eqref{eq:bound-eps-pre} that
\[|\e_{\beta,m}^\circ(k,\omega)|\,\ge\,1-\tfrac{N+1-m}N\widehat V(k)|\Re\omega|\big|\big\langle\tfrac{\beta (k\cdot v-\Re\omega)}{(k\cdot v-\Re\omega)^2+(\Im\omega)^2}\big\rangle_{v}\big|,\]
hence, in view of~\eqref{eq:comput-gen},
\[|\e_{\beta,m}^\circ(k,\omega)|\,\ge\,1-C\beta\widehat V(k)
\,\ge\,1-C(2\pi)^d\beta\|V\|_{\Ld^\infty(\T^d)},\]
and the claim follows.

\medskip\noindent
It remains to establish the bounds in~(iii) and we start with the first one. Writing
\[\big\langle\big|\tfrac{1}{k\cdot v-\omega}\big|\big\rangle_v\,\lesssim\,\big(\tfrac\beta{2\pi|k|^2}\big)^\frac12\int_\R\tfrac1{|y-\Re\omega|+|\Im\omega|}\,e^{-\frac\beta{2|k|^2}y^2}\,dy,\]
and separately estimating the contribution of the $y$-integral for $|y-\Re\omega|\le\Im\omega$, for $\Im\omega\le|y-\Re\omega|\le L$, and for $|y-\Re\omega|\ge L$, we deduce for all $L\ge\Im\omega$,
\[\big\langle\big|\tfrac{1}{k\cdot v-\omega}\big|\big\rangle_v\,\lesssim\,\beta^\frac12+\tfrac1L+\beta^\frac12\int_{\Im\omega\le|y-\Re\omega|\le L}\tfrac1{|y-\Re\omega|}\,dy\,\lesssim\,\beta^\frac12+\tfrac1L+\beta^\frac12\log\tfrac{L}{\Im\omega}.\]
Choosing $L=1+\Im\omega$, this yields the claim
\[\big\langle\big|\tfrac{1}{k\cdot v-\omega}\big|\big\rangle_v\,\lesssim_\beta\,1+\log(1+\tfrac1{\Im\omega}).\]
Multiplying the integrand by $|k\cdot v|$ only changes the estimate by a factor $|k|$, and the first part of~(iii) follows.

\medskip\noindent
We turn to the second part of~(iii). Writing
\[\big\langle\tfrac{1}{(k\cdot v-\eta)(k\cdot v-\omega)}\big\rangle_v\,=\,\big(\tfrac\beta{2\pi|k|^2}\big)^\frac12\int_\R\tfrac1{(y-\eta)(y-\omega)}\,e^{-\frac\beta{2|k|^2}y^2}\,dy,\]
and slightly deforming the integration path for $y$ close to $\Re\eta$ and to $\Re\omega$ in order to ensure that $|y-\eta|$ and $|y-\omega|$ are uniformly bounded below by $1$ (note that $\eta$ and $\omega$ are on the same complex half-plane), the result follows.
\end{proof}

\subsection{Contribution from $2$-particle correlations}
This section is devoted to the explicit computation of the contribution of $2$-particle correlations in the formula of Corollary~\ref{cor:truncation}, which formally takes the form $M_2^1(iL_2+0)^{-1}S_1^2g^\circ$.

\begin{prop}\label{prop:2-corr-comp}
Given $V\in W^{1,\infty}(\T^d)$ and $\beta\in(0,\infty)$, the following convergence holds in $H^{-1}_\beta(\R^d)$, uniformly for $N\ge1$,
\begin{multline*}
\lim_{\omega\to0\atop\Im\omega>0}\int_{\T^d}\big(M_2^1(iL_2-i\omega)^{-1}S_1^2g^\circ\big)(x,v)\,dx\\
\,=\,(\nabla_{v}-\beta v)\cdot\bigg(\sum_{k\in2\pi\Z^d}(k\otimes k)\tfrac{\pi\widehat V(k)^2\langle\delta(k\cdot(v_*-v))\rangle_{v_*}}{|\e_{\beta,1}^\circ(k,k\cdot v+i0)|^2}\bigg)\nabla_v g^\circ(v)\\
+(\nabla_{v}-\beta v)\cdot\bigg(\sum_{k\in2\pi\Z^d}\beta\widehat V(k)\,(k\otimes k)\tfrac{\pi\widehat V(k)^2\langle\delta(k\cdot(v_*-v))\rangle_{v_*}}{|\e_{\beta,1}^\circ(k,k\cdot v+i0)|^2}\bigg)(\nabla_v-\beta v) g^\circ(v),
\end{multline*}
where in addition the argument of the limit can be written as $(\nabla_v-\beta v)\cdot G_{N,\beta}^\omega(v)$ where $G_{N,\beta}^\omega(v)$ is bounded pointwise by $C_\beta|\langle\nabla_v-\beta v\rangle g^\circ(v)|$ uniformly for~$\Im\omega>0$.
\end{prop}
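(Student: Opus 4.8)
The plan is to compute the left‑hand side explicitly via Fourier transform in the space variables $x_0,x_1$, and then to pass to the limit $\omega\to0$ by the Sokhotski--Plemelj formula. Since the initial datum $g^\circ$ is spatially homogeneous, the Fourier transform of $S_1^2g^\circ$ (cf.~Lemma~\ref{lem:eqns} with $m=1$) is supported on the sectors $(k_0,k_1)=(-\mu,\mu)$, $\mu\in2\pi\Z^d$, where it equals $-i\mu\,\widehat V(\mu)\cdot\big((\nabla_v-\beta v)g^\circ(v)+\beta v_1g^\circ(v)\big)$ (writing $v=v_0$). On such a sector the linearized Vlasov operator $L_2=L_2^{(0)}+L_2^{(1)}$ acts as multiplication by $-\mu\cdot v_0$ (the part $j=0$ carries no self-consistent term) plus the screened transport operator $L_2^{(1)}$ in the variable $v_1$ at wavevector $\mu$; since these two pieces commute and the first is a scalar multiplication, $(iL_2-i\omega)^{-1}=-i\big(L_2^{(1)}-(\omega+\mu\cdot v_0)\big)^{-1}$, which is given by the explicit resolvent formula of Lemma~\ref{lem:spectrum}(ii) with spectral parameter $\omega+\mu\cdot v_0$ (which has positive imaginary part); this is where the dispersion function $\e_{\beta,1}^\circ(\mu,\omega+\mu\cdot v_0)$ enters.

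Applying $M_2^1$ and integrating in $x_0$ then amounts to averaging the resolvent over $v_1$ against $M_\beta$ and summing over $\mu$ with the factor $i\mu\,\widehat V(\mu)\cdot(\nabla_v-\beta v)$ in front. Using the second identity of Lemma~\ref{lem:spectrum}(ii) for the $v_1$-average, together with the elementary relation $\mu\cdot\big\langle v_1(\mu\cdot v_1-\omega')^{-1}\big\rangle_{v_1}=1+\omega'\big\langle(\mu\cdot v_1-\omega')^{-1}\big\rangle_{v_1}=\big\langle\mu\cdot v_1(\mu\cdot v_1-\omega')^{-1}\big\rangle_{v_1}$, one finds
\[
\int_{\T^d}\big(M_2^1(iL_2-i\omega)^{-1}S_1^2g^\circ\big)(x,v)\,dx\,=\,(\nabla_v-\beta v)\cdot G_{N,\beta}^\omega(v),
\]
where
\begin{multline*}
G_{N,\beta}^\omega(v)=\sum_{\mu\in2\pi\Z^d}\frac{-i\,\widehat V(\mu)^2}{\e_{\beta,1}^\circ(\mu,\omega+\mu\cdot v)}\,\mu\Big(\big(\mu\cdot(\nabla_v-\beta v)g^\circ(v)\big)\big\langle\tfrac1{\mu\cdot(v_*-v)-\omega}\big\rangle_{v_*}\\
+\beta\,g^\circ(v)\big\langle\tfrac{\mu\cdot v_*}{\mu\cdot(v_*-v)-\omega}\big\rangle_{v_*}\Big).
\end{multline*}
The lower bound $|\e_{\beta,1}^\circ(\mu,\omega+\mu\cdot v)|\gtrsim1$ of Lemma~\ref{lem:prel-est}(ii) (which uses the smallness assumption $\beta\|V\|_{\Ld^\infty(\T^d)}\le\tfrac1{C_0}$ of Theorem~\ref{th:main}), the decay bounds of Lemma~\ref{lem:prel-est}(i) for the two $v_*$-averages, and the summability $\sum_\mu|\mu|^2\widehat V(\mu)^2=\|\nabla V\|_{\Ld^2(\T^d)}^2<\infty$ show that the series converges absolutely and that $|G_{N,\beta}^\omega(v)|\lesssim_\beta|\langle\nabla_v-\beta v\rangle g^\circ(v)|$, uniformly in $\omega$ with $\Im\omega>0$ and in $N$ (for $m=1$ the factor $\tfrac{N+1-m}N$ equals $1$, so there is in fact no $N$-dependence here). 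As $\omega\to0$ with $\Im\omega>0$, the Sokhotski--Plemelj formula gives, for a.e.~$v$, $\big\langle(\mu\cdot(v_*-v)-\omega)^{-1}\big\rangle_{v_*}\to\pv\big\langle(\mu\cdot(v_*-v))^{-1}\big\rangle_{v_*}+i\pi\big\langle\delta(\mu\cdot(v_*-v))\big\rangle_{v_*}$ and $\e_{\beta,1}^\circ(\mu,\omega+\mu\cdot v)\to\e_{\beta,1}^\circ(\mu,\mu\cdot v+i0)\ne0$; by dominated convergence $G_{N,\beta}^\omega\to G_{N,\beta}^0$ in $\Ld^2_\beta(\R^d)$, hence the stated convergence in $H^{-1}_\beta(\R^d)$, since $(\nabla_v-\beta v)\cdot{}$ maps $\Ld^2_\beta$ boundedly into $H^{-1}_\beta$.

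It remains to identify $G_{N,\beta}^0$ with the announced Fokker--Planck expression, and this is the delicate step. I would symmetrize the series in $\mu\mapsto-\mu$, using that $\widehat V$ real and $M_\beta$ even yield the reflection relations $\e_{\beta,1}^\circ(-\mu,-\mu\cdot v+i0)=\overline{\e_{\beta,1}^\circ(\mu,\mu\cdot v+i0)}$ and $\big\langle(-\mu\cdot(v_*-v)-i0)^{-1}\big\rangle_{v_*}=-\overline{\big\langle(\mu\cdot(v_*-v)-i0)^{-1}\big\rangle_{v_*}}$, together with the identity $\e_{\beta,1}^\circ(\mu,\mu\cdot v+i0)=1+\beta\widehat V(\mu)\big(1+\mu\cdot v\,\big\langle(\mu\cdot(v_*-v)-i0)^{-1}\big\rangle_{v_*}\big)$ and the vanishing $\sum_\mu\mu\,\widehat V(\mu)=-i\nabla V(0)=0$ of the would-be mean-field term. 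After these manipulations the principal-value contributions cancel against the real and imaginary parts of $\e_{\beta,1}^\circ$, only the $\delta$-contribution survives, and the single resolvent screening factor $\e_{\beta,1}^\circ(\mu,\mu\cdot v+i0)^{-1}$ becomes $|\e_{\beta,1}^\circ(\mu,\mu\cdot v+i0)|^{-2}$; the term carrying $\beta g^\circ\big\langle\mu\cdot v_*(\mu\cdot(v_*-v)-\omega)^{-1}\big\rangle_{v_*}$ is exactly what supplies the missing piece so that the coefficient of $\nabla_vg^\circ$ becomes $\sum_k(k\otimes k)\,\pi\widehat V(k)^2\big\langle\delta(k\cdot(v_*-v))\big\rangle_{v_*}\,|\e_{\beta,1}^\circ(k,k\cdot v+i0)|^{-2}$ and the remaining part reassembles into the analogous term with weight $\beta\widehat V(k)$ acting on $(\nabla_v-\beta v)g^\circ$, as claimed. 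The main obstacle is precisely this last bookkeeping: producing the weight $|\e|^{-2}$ out of a single screening factor, and tracking the compensation between the two terms of $G_{N,\beta}^\omega$ — in particular the apparent spurious factor $1+\beta\widehat V(\mu)$ in front of $\nabla_vg^\circ$ before reassembly. Everything else is a routine consequence of Lemmas~\ref{lem:spectrum} and~\ref{lem:prel-est}.
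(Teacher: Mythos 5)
Your proof is correct, and except for one sub-step it is the paper's proof: the Fourier reduction through $M_2^1$ and $S_1^2$, the explicit formula you obtain for the $v_*$-averaged resolvent (your $G_{N,\beta}^\omega$ is exactly the paper's identity \eqref{eq:L2-inv} after the rewriting $k\cdot(\nabla_v-\beta v+\beta v_*)g^\circ=k\cdot\nabla_vg^\circ+\beta k\cdot(v_*-v)g^\circ$), the uniform bounds from Lemma~\ref{lem:prel-est}, and the limit via Sokhotski--Plemelj, dominated convergence and $k\mapsto-k$ symmetrization all coincide. The genuine difference is the inversion of $L_2$: the paper represents $(iL_2-i\omega)^{-1}$ via Lemma~\ref{lem:sum-resolv} as an $\alpha$-integral of the two resolvents of Lemma~\ref{lem:spectrum}(ii) and then deforms the contour (using the analyticity and decay bounds of Lemma~\ref{lem:prel-est}) to pick up the residue at $\alpha=-k\cdot v-\frac\omega2$, whereas you note that on the sector $(k_0,k_1)=(-\mu,\mu)$ the tagged-particle part $L_2^{(0)}$ is plain multiplication by $-\mu\cdot v_0$, so the resolvent is the fiberwise resolvent of $L_2^{(1)}$ at the shifted parameter $\omega+\mu\cdot v_0$ (still with $\Im>0$), and Lemma~\ref{lem:spectrum}(ii) applies directly. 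This is simpler and rigorous; its only cost is that it exploits the tagged-particle structure, while the paper's $\alpha$-representation is what remains available for the $L_3$-resolvent in Proposition~\ref{prop:3-corr-comp} and in the symmetric setting of Appendix~\ref{app:lin-setting} (cf.~\eqref{eq:pre-decomp-LB}). Your final bookkeeping does close as sketched: writing $\Phi_k:=\langle(k\cdot(v_*-v)-i0)^{-1}\rangle_{v_*}$, one has $\beta g^\circ\langle\tfrac{k\cdot v_*}{k\cdot(v_*-v)-\omega}\rangle_{v_*}=\widehat V(k)^{-1}g^\circ(\e_{\beta,1}^\circ(k,k\cdot v+\omega)-1)$, the resulting $\sum_kk\widehat V(k)$ term drops by parity, the remaining piece symmetrizes to $\beta(k\otimes k)v\,g^\circ\,\pi\widehat V(k)^2\langle\delta(k\cdot(v_*-v))\rangle_{v_*}|\e_{\beta,1}^\circ|^{-2}$, and since $\Im\big(\Phi_k\,\overline{\e_{\beta,1}^\circ(k,k\cdot v+i0)}\big)=(1+\beta\widehat V(k))\pi\langle\delta(k\cdot(v_*-v))\rangle_{v_*}$, the total coefficient is $(1+\beta\widehat V(k))(\nabla_v-\beta v)g^\circ+\beta vg^\circ=\nabla_vg^\circ+\beta\widehat V(k)(\nabla_v-\beta v)g^\circ$, i.e.\ the stated limit — the same recombination the paper performs through \eqref{eq:decomp-1/eps}. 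Two minor remarks: your observation that every $N$-factor equals $1$ at this order is the cleanest justification of the claimed $N$-uniformity, and your dominated-convergence step, like the paper's, implicitly uses $\langle\nabla_v-\beta v\rangle g^\circ\in\Ld^2_\beta(\R^d)$.
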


We first need to find a way to explicitly compute the resolvent of $L_2=L_2^{(0)}+L_2^{(1)}$. As the resolvents of both summands are explicitly given in Lemma~\ref{lem:spectrum}(ii), the resolvent of their sum can be deduced from the following useful general identity (see e.g.~\cite[p.120]{Reed-Simon-73}).

\begin{lem}[Resolvent of sums of commuting operators]\label{lem:sum-resolv}
Let $iH_1$ and $iH_2$ denote two generators of uniformly bounded commuting $C_0$-groups on a Hilbert space $\Hc$.
Then, for all $0<\Im\eta<\Im\omega$,
\[(H_1+H_2-\omega)^{-1}=\frac1{2\pi i}\int_\R(H_1+\alpha-\omega+\eta)^{-1}(H_2-\alpha-\eta)^{-1}\,d\alpha.\qedhere\]
\end{lem}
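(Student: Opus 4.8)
\textbf{Proof plan for Lemma~\ref{lem:sum-resolv}.}
The plan is to verify the identity by a direct resolvent computation, using the standard integral representation of resolvents in terms of the $C_0$-groups together with a contour/Fubini argument. First I would write, for $\Im\omega>0$ and any generator $iH$ of a uniformly bounded $C_0$-group, the elementary formula
\[(H-\omega)^{-1}\,=\,-i\int_0^\infty e^{i(H-\omega)t}\,dt\,=\,-i\int_0^\infty e^{iHt}\,e^{-i\omega t}\,dt,\]
which converges absolutely in operator norm since $\Im\omega>0$ and $\|e^{iHt}\|\lesssim1$. Applying this to $H_1+H_2$ (whose group is $e^{i(H_1+H_2)t}=e^{iH_1t}e^{iH_2t}$ by commutativity and uniform boundedness), one gets $(H_1+H_2-\omega)^{-1}=-i\int_0^\infty e^{iH_1t}e^{iH_2t}e^{-i\omega t}\,dt$.

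The key step is then to recognize the integrand as a Fourier-type convolution and insert the Fourier representation of the factor $e^{-i\omega t}$ split across the two half-lines. Concretely, with $0<\Im\eta<\Im\omega$, I would apply the representation above separately to $(H_1+\alpha-\omega+\eta)^{-1}=-i\int_0^\infty e^{iH_1 t}\,e^{-i(\omega-\eta-\alpha)t}\,dt$ and $(H_2-\alpha-\eta)^{-1}=-i\int_0^\infty e^{iH_2 s}\,e^{-i(\eta+\alpha)s}\,ds$, so that the right-hand side of the claimed identity becomes
\[-\frac1{2\pi i}\int_\R\!\int_0^\infty\!\!\int_0^\infty e^{iH_1t}e^{iH_2s}\,e^{-i(\omega-\eta-\alpha)t}\,e^{-i(\eta+\alpha)s}\,dt\,ds\,d\alpha.\]
The $\alpha$-integral produces $\int_\R e^{i\alpha(t-s)}\,d\alpha=2\pi\,\delta(t-s)$ (all manipulations justified by absolute convergence, since $\Im(\omega-\eta)>0$ and $\Im\eta>0$ guarantee decay in both $t$ and $s$, and Fubini applies after, say, inserting a harmless regulator $e^{-\e\alpha^2}$ and passing $\e\to0$). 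Collapsing $s=t$ leaves $-i\int_0^\infty e^{iH_1t}e^{iH_2t}\,e^{-i\omega t}\,dt$, which is exactly $(H_1+H_2-\omega)^{-1}$ by the first step.

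The only genuine subtlety — and the step I would treat most carefully — is the rigorous justification of the interchange of the $\alpha$-integral with the $(t,s)$-integrals and the resulting appearance of the Dirac mass: one must either introduce the Gaussian regulator $e^{-\e\alpha^2}$, perform the now-absolutely-convergent triple integral (obtaining a Gaussian approximation of $\delta(t-s)$), and then pass to the limit $\e\downarrow0$ by dominated convergence using $\|e^{iH_jt}\|\lesssim1$ and the exponential weights; or, equivalently, recognize the $\alpha$-integral as a contour integral closed in the appropriate half-plane (the condition $0<\Im\eta<\Im\omega$ is precisely what places the poles in $\alpha$ on opposite sides of $\R$, so that closing the contour picks up exactly one residue and reproduces $(H_1+H_2-\omega)^{-1}$ directly). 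Either route is routine once set up; everything else is bookkeeping with absolutely convergent integrals. I would present the Gaussian-regulator version as it keeps all operators bounded at every stage and avoids any discussion of operator-valued contour integration.
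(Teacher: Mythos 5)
Your strategy---representing each resolvent as a Laplace transform of its group and letting the $\alpha$-integral produce $2\pi\delta(t-s)$---is exactly the mechanism of the paper's proof, just run from the right-hand side to the left (the paper starts from the resolvent of the sum written as the Laplace transform of the product group and inserts $\delta(t-t')=\frac1{2\pi}\int_\R e^{-i\alpha(t-t')}d\alpha$), and your Gaussian-regulator justification of the interchange is a sensible way to make rigorous the step the paper only gestures at. The concrete problem is a sign error in your basic representation, which is not cosmetic: for a generator $iH$ of a uniformly bounded group and $\Im\omega>0$, the convergent formula is
\[(H-\omega)^{-1}\,=\,i\int_0^\infty e^{-iHt}\,e^{i\omega t}\,dt,\]
whereas the formula you use, $(H-\omega)^{-1}=-i\int_0^\infty e^{iHt}e^{-i\omega t}\,dt$, is the one valid for $\Im\omega<0$. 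In the regime of the lemma the relevant spectral parameters are $\omega-\eta-\alpha$ and $\alpha+\eta$, whose imaginary parts $\Im(\omega-\eta)$ and $\Im\eta$ are both positive, so every integral you write down has integrand of norm $\simeq e^{t\,\Im(\omega-\eta)}$ or $e^{s\,\Im\eta}$ and diverges; the claimed absolute convergence fails. Moreover the bookkeeping does not close as written: after the $\alpha$-integration your triple integral collapses to $-\tfrac1{2\pi i}\cdot2\pi\int_0^\infty e^{i(H_1+H_2)t}e^{-i\omega t}\,dt=+i\int_0^\infty e^{i(H_1+H_2)t}e^{-i\omega t}\,dt$, which by your own first formula would be $-(H_1+H_2-\omega)^{-1}$; you only land on the stated identity because two sign slips cancel in the prose.

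The fix is mechanical: use the correct representation ($+i$, group $e^{-iHt}$, weight $e^{izt}$ with $\Im z>0$) for all three resolvents. Then the product of the two resolvents becomes $-\int_0^\infty\!\!\int_0^\infty e^{-iH_1t}e^{-iH_2s}\,e^{i(\omega-\eta)t}e^{i\eta s}\,e^{i\alpha(s-t)}\,dt\,ds$, the $\alpha$-integral gives $2\pi\delta(s-t)$ (justified by your regulator argument, which is fine), and the diagonal contribution is $i\int_0^\infty e^{-i(H_1+H_2)t}e^{i\omega t}\,dt=(H_1+H_2-\omega)^{-1}$, i.e., precisely the paper's computation read backwards. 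So the idea is right and essentially identical to the paper's; the write-up as it stands, however, rests on divergent integrals and compensating sign errors and needs this correction before it is a proof.
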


\begin{proof}
As the two generators commute, their sum also generate a $C_0$-group, and its resolvent is given as the Laplace transform of the generated group,
\[(iH_1+iH_2-i\omega)^{-1}=\int_0^\infty e^{-(iH_1+iH_2)t}e^{i\omega t}\,dt=\int_0^\infty e^{-iH_1t}e^{-iH_2t}e^{i\omega t}\,dt,\]
hence, for all $\eta$,
\[(iH_1+iH_2-i\omega)^{-1}=\int_0^\infty e^{-(iH_1-i\omega+i\eta)t}e^{-(iH_2-i\eta)t}\,dt.\]
Inserting the formula $\delta(t-t')=\frac1{2\pi}\int_\R e^{-i\alpha(t-t')}d\alpha$ and invoking the uniform boundedness of the $C_0$-groups, we can write for $0<\Im\eta<\Im\omega$,
\[(iH_1+iH_2-i\omega)^{-1}=\frac1{2\pi}\int_\R\Big(\int_0^\infty e^{-(iH_1+i\alpha-i\omega+i\eta)t}\,dt\Big)\Big(\int_0^\infty e^{-(iH_2-i\alpha-i\eta)t'}\,dt'\Big)\,d\alpha,\]
and the conclusion follows.
\end{proof}

With this useful trick at hand, we may explicitly compute the resolvent of $L_2$ as required for the proof of Proposition~\ref{prop:2-corr-comp}.

\begin{proof}[Proof of Proposition~\ref{prop:2-corr-comp}]
By definition of $M_2^1$ in Lemma~\ref{lem:eqns}, we can write in Fourier variables,
\begin{multline*}
\int_{\T^d}\big(M_2^1(iL_2-i\omega)^{-1}S_1^2g^\circ\big)(x,v)\,dx\,=\,\big(\widehat M_2^1(i\widehat L_2-i\omega)^{-1}\widehat S_1^2 g^\circ\big)(0,v)\\
\,=\,\sum_{k\in2\pi\Z^d}ik\widehat V(k)\cdot(\nabla_{v}-\beta v)\Big\langle\big((i\widehat L_2-i\omega)^{-1}\widehat S_1^2g^\circ\big)((-k,v),(k,v_*))\Big\rangle_{v_*},
\end{multline*}
hence, by symmetry,
\begin{multline}\label{eq:comput-M21-g2}
\int_{\T^d}\big(M_2^1(iL_2-i\omega)^{-1}S_1^2g^\circ\big)(x,v)\,dx\\
\,=\,-\tfrac1{2i}\sum_{k\in2\pi\Z^d}k\widehat V(k)\cdot(\nabla_{v}-\beta v)
\Big\langle\big((i\widehat L_2-i\omega)^{-1}\widehat S_1^2g^\circ\big)((-k,v),(k,v_*))\\
-\big((i\widehat L_2-i\omega)^{-1}\widehat S_1^2g^\circ\big)((k,v),(-k,v_*))\Big\rangle_{v_*}.
\end{multline}
In view of Lemma~\ref{lem:sum-resolv} with $L_2=L_2^{(0)}+L_2^{(1)}$, we can write
\[(i\widehat L_2-i\omega)^{-1}\widehat S_1^2g^\circ=-\frac1{2\pi}\int_\R\big(\widehat L_2^{(1)}+\alpha-\tfrac\omega2\big)^{-1}\big(\widehat L_2^{(0)}-\alpha-\tfrac\omega2\big)^{-1}\widehat S_1^2g^\circ\,d\alpha.\]
Inserting formulas for resolvents as given in Lemma~\ref{lem:spectrum}(ii), we deduce
\begin{multline}\label{eq:pre-compl-def}
\Big\langle \big((i\widehat L_2-i\omega)^{-1}\widehat S_1^2g^\circ\big)((-k,v),(k,v_*))\Big\rangle_{v_*}\\
=\frac1{2\pi}\int_\R\tfrac{1}{\e_{\beta,1}^\circ(k,\frac\omega2-\alpha)}\tfrac1{k\cdot v+\alpha+\frac\omega2}\Big\langle\tfrac{\widehat S_1^2g^\circ((-k,v),(k,v_*))}{k\cdot v_*+\alpha-\frac\omega2}\Big\rangle_{v_*}\,d\alpha.
\end{multline}
As $\Im\omega>0$, we note that the integrand
\[\alpha~\mapsto~ \tfrac{1}{\e_{\beta,1}^\circ(k,\frac\omega2-\alpha)}\Big\langle\tfrac{\widehat S_1^2g^\circ((-k,v),(k,v_*))}{k\cdot v_*+\alpha-\frac\omega2}\Big\rangle_{v_*}\]
is analytic on the lower complex half-plane $\Im\alpha<\frac12\Im\omega$.
In addition, in view of Lemma~\ref{lem:prel-est} and in view of the definition of $S_1^2$ in Lemma~\ref{lem:eqns} in the form
\[\widehat S_1^{2}g^\circ((-k,v),(k,v_*))=-ik\widehat V(k)\cdot(\nabla_{v}-\beta v+\beta v_*)\,g^\circ(v),\]
the integrand is bounded by
\[\Big|\tfrac{1}{\e_{\beta,1}^\circ(k,\frac\omega2-\alpha)}\Big\langle\tfrac{\widehat S_1^2g^\circ((-k,v),(k,v_*))}{k\cdot v_*+\alpha-\frac\omega2}\Big\rangle_{v_*}\Big|\,\lesssim_\beta\,\tfrac1{1+|\Re(\alpha-\frac\omega2)|}\big(|(\nabla_v-\beta v)g^\circ(v)|+|g^\circ(v)|\big).\]
Complex deformation can then be applied to~\eqref{eq:pre-compl-def} in the lower complex half-plane and we are led to the residue at $\alpha=-k\cdot v-\frac\omega2$,
\begin{equation}\label{eq:L2-inv}
\Big\langle \big((i\widehat L_2-i\omega)^{-1}\widehat S_1^2g^\circ\big)((-k,v),(k,v_*))\Big\rangle_{v_*}
=\tfrac{-i}{\e^\circ_{\beta,1}(k,k\cdot v+\omega)}\Big\langle\tfrac{\widehat S_1^2g^\circ((-k,v),(k,v_*))}{k\cdot (v_*- v)-\omega}\Big\rangle_{v_*}.
\end{equation}
Using the definition of $\e^\circ_{\beta,1}$ in Lemma~\ref{lem:spectrum}(ii) in the form
\begin{equation}\label{eq:decomp-1/eps}
\tfrac1{\e^\circ_{\beta,1}(k,k\cdot v+\omega)}=\tfrac1{|\e^\circ_{\beta,1}(k,k\cdot v+\omega)|^2}\big(1+\beta\widehat V(k)\big\langle\tfrac{ k\cdot v_*}{k\cdot (v_*-v)-\bar\omega}\big\rangle_{v_*}\big),
\end{equation}
inserting the above definition of $S_1^2$, and 
reorganizing the term, we find
\begin{multline*}
\Big\langle \big((i\widehat L_2-i\omega)^{-1}\widehat S_1^2g^\circ\big)((-k,v),(k,v_*))\Big\rangle_{v_*}\\
=-\tfrac{\widehat V(k)}{|\e^\circ_{\beta,1}(k,k\cdot v+\omega)|^2}\Big(1+\beta\widehat V(k)\big\langle\tfrac{ k\cdot v_*}{k\cdot(v_*-v)-\bar\omega}\big\rangle_{v_*}\Big)\\
\times\Big(\big\langle\tfrac{1}{k\cdot (v_*- v)-\omega}\big\rangle_{v_*}k\cdot\nabla g^\circ(v)+\big\langle\tfrac{k\cdot (v_*-v)}{k\cdot (v_*- v)-\omega}\big\rangle_{v_*}\beta g^\circ(v)\Big).
\end{multline*}
We note that Lemma~\ref{lem:prel-est} yields the following bound, uniformly for $\Im\omega>0$,
\[\Big|\Big\langle \big((i\widehat L_2-i\omega)^{-1}\widehat S_1^2g^\circ\big)((-k,v),(k,v_*))\Big\rangle_{v_*}\Big|\,
\lesssim_\beta\,\widehat V(k)\big(|(\nabla-\beta v)g^\circ(v)|+|g^\circ(v)|\big),\]
which implies the stated estimate.
Letting $\omega\to0$ with $\Im\omega>0$, we find
\begin{multline*}
\lim_{\omega\to0\atop\Im\omega>0}\Big\langle \big((i\widehat L_2-i\omega)^{-1}\widehat S_1^2g^\circ\big)((-k,v),(k,v_*))\Big\rangle_{v_*}\\
=-\tfrac{\widehat V(k)}{|\e^\circ_{\beta,1}(k,k\cdot v+i0)|^2}\Big(1+\beta\widehat V(k)\big\langle\tfrac{ k\cdot v_*}{k\cdot(v_*-v)+i0}\big\rangle_{v_*}\Big)
\Big(\big\langle\tfrac{1}{k\cdot (v_*- v)-i0}\big\rangle_{v_*}k\cdot\nabla g^\circ(v)+\beta g^\circ(v)\Big),
\end{multline*}
where all terms indeed make sense.
Noting that $\e^\circ_{\beta,1}(-k,-k\cdot v+i0)=\overline{\e^\circ_{\beta,1}(k,k\cdot v+i0)}$, we can write by symmetry
\begin{multline*}
\lim_{\omega\to0\atop\Im\omega>0}\tfrac1{2i}\Big\langle \big((i\widehat L_2-i\omega)^{-1}\widehat S_1^2g^\circ\big)((-k,v),(k,v_*))-\big((i\widehat L_2-i\omega)^{-1}\widehat S_1^2g^\circ\big)((k,v),(-k,v_*))\Big\rangle_{v_*}\\
=-\tfrac{\widehat V(k)}{|\e^\circ_{\beta,1}(k,k\cdot v+i0)|^2}\Im\Big[\Big(1+\beta\widehat V(k)\big\langle\tfrac{ k\cdot v_*}{k\cdot(v_*-v)+i0}\big\rangle_{v_*}\Big)
\Big(\big\langle\tfrac{1}{k\cdot (v_*- v)-i0}\big\rangle_{v_*}k\cdot\nabla g^\circ(v)+\beta g^\circ(v)\Big)\Big].
\end{multline*}
Decomposing
\[\big\langle\tfrac{\beta k\cdot v_*}{k\cdot(v_*-v)+i0}\big\rangle_{v_*}=\beta+\beta k\cdot v\,\big\langle\tfrac{1}{k\cdot(v_*-v)+i0}\big\rangle_{v_*},\]
and using the Sokhotski-Plemelj formula in the form
\[\Im\big\langle\tfrac{1}{k\cdot(v_*-v)-i0}\big\rangle_{v_*}=\pi\big\langle\delta\big(k\cdot(v_*-v)\big)\big\rangle_{v_*},\]
we find after straightforward simplifications,
\begin{multline*}
\lim_{\omega\to0\atop\Im\omega>0}\frac1{2i}\Big\langle \big((i\widehat L_2-i\omega)^{-1}\widehat S_1^2g^\circ\big)((-k,v),(k,v_*))-\big((i\widehat L_2-i\omega)^{-1}\widehat S_1^2g^\circ\big)((k,v),(-k,v_*))\Big\rangle_{v_*}\\
=-\tfrac{\pi\widehat V(k)\langle\delta(k\cdot(v_*-v))\rangle_{v_*}}{|\e^\circ_{\beta,1}(k,k\cdot v+i0)|^2}\Big[k\cdot\nabla g^\circ(v)+\beta\widehat V(k)k\cdot(\nabla_v-\beta v) g^\circ(v)\Big].
\end{multline*}
Inserting this into~\eqref{eq:comput-M21-g2} yields the conclusion.
\end{proof}

\subsection{Contribution from $3$-particle correlations}
We turn to the explicit computation of the contribution of $3$-particle correlations in the formula of Corollary~\ref{cor:truncation}, which formally takes the form $M_2^1(iL_2+0)^{-1}M_3^2(iL_3+0)^{-1}S_1^3g^\circ$.

\begin{prop}\label{prop:3-corr-comp}
Given $V\in W^{1,\infty}(\T^d)$ and $\beta\in(0,\infty)$, the following convergence holds in $H^{-1}_\beta(\R^d)$, uniformly for $N\ge1$,
\begin{multline*}
\lim_{\omega\to0\atop\Im\omega>0}\int_{\T^d}\big(M_2^1(iL_2-i\omega)^{-1}M_3^2 (iL_3-i\omega)^{-1}S_1^3g^\circ\big)(x,v)\,dx\\
\,=\,-(\nabla_{v}-\beta v)\cdot\bigg(\sum_{k\in2\pi\Z^d}\beta\widehat V(k)(k\otimes k)\tfrac{\pi\widehat V(k)^2\langle\delta(k\cdot (v_*- v))\rangle_{v_*}}{|\e^\circ_{\beta,1}(k,k\cdot v+i0)|^2}\Big(\tfrac{N-1}N\tfrac{1+\beta \widehat V(k)}{1+\frac{N-1}N\beta\widehat V(k)}\Big)\bigg)\\
\times(\nabla_v-\beta v)g^\circ(v),
\end{multline*}
where in addition the argument of the limit can be written as $(\nabla_v-\beta v)G_{N,\beta}^\omega(v)$ where $G_{N,\beta}^\omega(v)$ is bounded pointwise by $C_\beta|(\nabla_v-\beta v)g^\circ(v)|$ uniformly for $\Im\omega>0$.
\end{prop}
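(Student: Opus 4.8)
The argument follows the same blueprint as the proof of Proposition~\ref{prop:2-corr-comp}, with one extra resolvent layer inserted at the level of $L_3$. The plan is to evaluate, in Fourier variables, the operator string $\widehat M_2^1(i\widehat L_2-i\omega)^{-1}\widehat M_3^2(i\widehat L_3-i\omega)^{-1}\widehat S_1^3 g^\circ$ at tagged momentum $k_0=0$, working from the innermost operator outwards. Unfolding $\widehat S_1^3$ from Lemma~\ref{lem:eqns} and using that $g^\circ$ is spatially homogeneous and $V$ even, one gets $\widehat S_1^3 g^\circ=-i\,\delta(k_0)\,\delta(k_1+k_2)\,\beta\widehat V(k_1)\,k_1\cdot(v_1-v_2)\,g^\circ(v_0)$, supported on $\{k_0=0\}\cap\{k_1+k_2=0\}$ and symmetric under $\hat z_1\leftrightarrow\hat z_2$. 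On the invariant subspace $\{k_0=0\}$ the operator $L_3^{(0)}$ vanishes, so $(iL_3-i\omega)^{-1}$ is the resolvent of $L_3^{(1)}+L_3^{(2)}$; I would compute it by applying Lemma~\ref{lem:sum-resolv} with $H_1=L_3^{(1)}$, $H_2=L_3^{(2)}$ and $\eta=\omega/2$, inserting the explicit single-particle resolvents of Lemma~\ref{lem:spectrum}(ii) (here with $m=2$, hence the weight $\tfrac{N-1}N$ and the dispersion functions $\e^\circ_{\beta,2}(\pm k,\cdot)$), and then carrying out the auxiliary $\alpha$-integral by contour deformation, the relevant contributions being the transport residues at $\alpha=\pm k\cdot v_j-\tfrac\omega2$ since $\e^\circ_{\beta,2}$ does not vanish (Lemma~\ref{lem:spectrum}(ii)). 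This yields a closed expression for $\Psi^\omega:=(i\widehat L_3-i\omega)^{-1}\widehat S_1^3 g^\circ$ in terms of the effective symbol $\tfrac1{k\cdot(v_1-v_2)-\omega}$, Maxwellian averages $\langle\cdot\rangle_{v_1},\langle\cdot\rangle_{v_2}$, and the factors $\e^\circ_{\beta,2}(\pm k,\cdot)^{-1}$.

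Next I would act by $\widehat M_3^2$ on $\Psi^\omega$, tracking the momentum shift $k_j\mapsto k_j-k_*$ against the support constraints of $\Psi^\omega$: the $k_*$-sum collapses, and the $j=1$ term of $\widehat M_3^2$ lands at $k_0=k_1=0$, where it is annihilated by the subsequent $\widehat M_2^1$, so only the $j=0$ term survives, producing a two-particle function
\[\widehat\Phi^\omega\big((-k,v_0),(k,v_1)\big)=-ik\widehat V(k)\cdot(\nabla_{v_0}-\beta v_0)\big\langle\widehat\Psi^\omega\big((0,v_0),(k,v_1),(-k,v_*)\big)\big\rangle_{v_*},\]
which has exactly the structural form of $\widehat S_1^2 g^\circ$. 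From this point, the computation of $\widehat M_2^1(i\widehat L_2-i\omega)^{-1}\widehat\Phi^\omega$ at $k_0=0$ is literally that of~\eqref{eq:comput-M21-g2}--\eqref{eq:L2-inv}: one further application of Lemma~\ref{lem:sum-resolv} (now for $L_2=L_2^{(0)}+L_2^{(1)}$) plus a residue introduce the factor $\e^\circ_{\beta,1}(k,k\cdot v+\omega)^{-1}$, which the decomposition~\eqref{eq:decomp-1/eps} converts into $|\e^\circ_{\beta,1}(k,k\cdot v+\omega)|^{-2}$ times a correction, while $\widehat M_2^1$ supplies the outer factor $k\otimes k\,\widehat V(k)$ and the outer derivative $\nabla_v-\beta v$.

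All the uniform estimates needed come from Lemma~\ref{lem:prel-est}: the decay estimate~(i) to carry out the $k$-sum, the lower bound~(ii) on $\e^\circ_{\beta,m}$, and the uniform bounds~(iii) — in particular $\big|\langle\tfrac1{(k\cdot v-\eta)(k\cdot v-\omega)}\rangle_v\big|\lesssim_\beta1$, which controls the nested resolvents — together yield the pointwise domination of the (summable) integrand by $C_\beta|(\nabla_v-\beta v)g^\circ(v)|$, uniform in $\Im\omega>0$ and $N\ge1$; this is the asserted bound on $G_{N,\beta}^\omega$, and it legitimizes passing to the limit under the $k$-sum by dominated convergence. Letting $\omega\to0$ with $\Im\omega>0$, using the Sokhotski--Plemelj identity $\Im\langle\tfrac1{k\cdot(v_*-v)-i0}\rangle_{v_*}=\pi\langle\delta(k\cdot(v_*-v))\rangle_{v_*}$ and the symmetry $\e^\circ_{\beta,m}(-k,-\bar\omega)=\overline{\e^\circ_{\beta,m}(k,\omega)}$, and collecting the various $\tfrac{N-1}N$-factors (one from the prefactor of $\widehat M_3^2$, the others from the weights $\tfrac{N+1-m}N$ inside $\e^\circ_{\beta,2}$, whose arguments degenerate in the residue so that $\langle\tfrac{k\cdot v}{k\cdot v-\cdot}\rangle_v\to1$) into the stated ratio $\tfrac{N-1}N\tfrac{1+\beta\widehat V(k)}{1+\frac{N-1}N\beta\widehat V(k)}$, one arrives at the claimed formula; note in particular that this ratio tends to $1$ as $N\uparrow\infty$, so that in the limit the $3$-particle term precisely cancels the $\beta\widehat V(k)$-weighted contribution of Proposition~\ref{prop:2-corr-comp}.

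The main obstacle is the nested contour computation for the $L_3$-resolvent: unlike the $L_2$ case (where $L_2^{(0)}$ carries no mean-field correction), here \emph{both} one-particle pieces $L_3^{(1)}$ and $L_3^{(2)}$ carry such a correction, so two dispersion functions enter and the averaged (mean-field) parts of both single-particle resolvents contribute branch cuts in the auxiliary variable $\alpha$; one must therefore split each resolvent into its transport and mean-field parts and deform the $\alpha$-contour term by term, checking in each case that no spurious singularity from the combination of the two principal-value parts survives after the $v_1,v_2$-averages, and that once the residues are taken the arguments of the two $\e^\circ_{\beta,2}$-factors degenerate as $\omega\to0$ exactly so as to recombine into the compact ratio above. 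Verifying the uniform-in-$N$, uniform-in-$\omega$ domination at that stage — so that the $\omega\to0$ limit commutes with the summation over $k$ — is the other point that requires the full force of Lemma~\ref{lem:prel-est}(iii).
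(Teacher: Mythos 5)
There is a genuine gap: your proposal never actually produces the key factor $\tfrac{N-1}N\tfrac{1+\beta\widehat V(k)}{1+\frac{N-1}N\beta\widehat V(k)}$ — it only asserts that a nested contour computation "recombines" into it, and you yourself flag that computation as the unresolved "main obstacle". The paper's proof avoids this computation entirely by an algebraic observation that is the heart of the argument: on the relevant Fourier modes one has $\widehat S_1^3g^\circ=i\widehat L_3\widehat H^3$ for the explicit function
\[\widehat H^3\big((k_0,v_0),(k_1,v_1),(k_2,v_2)\big)=\tfrac{\frac\beta2(\widehat V(k_1)+\widehat V(k_2))}{1+\frac{N-1}N\frac\beta2(\widehat V(k_1)+\widehat V(k_2))}\,g^\circ(v_0),\]
so that $(i\widehat L_3-i\omega)^{-1}\widehat S_1^3g^\circ=\widehat H^3+i\omega(i\widehat L_3-i\omega)^{-1}\widehat H^3$. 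The limit is then read off from $\widehat H^3$ itself (this is exactly where the ratio $\frac{\beta\widehat V(k)}{1+\frac{N-1}N\beta\widehat V(k)}$ comes from, the remaining $\frac{N-1}N$ and $1+\beta\widehat V(k)$ coming from the prefactor of $M_3^2$ and from the decomposition~\eqref{eq:decomp-1/eps} with Sokhotski--Plemelj), while the explicit $L_3$-resolvent formula is only needed to bound the remainder by $C\log(2+\tfrac1{\Im\omega})$, which vanishes after multiplication by $i\omega$. Without an ingredient of this type, your route requires carrying out the $\alpha$-integral of Lemma~\ref{lem:sum-resolv} for $L_3^{(1)}+L_3^{(2)}$ explicitly; contrary to what you claim, this cannot be reduced to "transport residues at $\alpha=\pm k\cdot v_j-\tfrac\omega2$", because the source $\widehat S_1^3g^\circ$ depends on both background velocities, so after the $v_1,v_2$-averages the two single-particle resolvents contribute singularities (branch cuts and the two dispersion factors $\e^\circ_{\beta,2}(\pm k,\cdot)$) in \emph{opposite} half-planes of $\alpha$; this is precisely why the paper's own explicit formula retains an unevaluated $\int_\R d\alpha$ that is merely estimated, and the same phenomenon is pointed out for~\eqref{eq:pre-decomp-LB} in Appendix~\ref{app:lin-setting}.

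A second, dependent gap: the asserted uniform pointwise bound $|G_{N,\beta}^\omega(v)|\le C_\beta|(\nabla_v-\beta v)g^\circ(v)|$ (needed both for the statement and to pass to the limit in the $k$-sum and in $H^{-1}_\beta$) is claimed to follow from Lemma~\ref{lem:prel-est}, but in the paper it is extracted from the explicit closed-form expression of the $L_3$-resolvent applied to $\widehat H^3$, combined with the decomposition above; since your proposal never obtains a closed-form expression (for $(iL_3-i\omega)^{-1}S_1^3g^\circ$ directly, which is more singular than the resolvent applied to $\widehat H^3$), the domination step is also unsubstantiated. In short, the overall architecture (Fourier reduction, Lemma~\ref{lem:sum-resolv}, the $L_2$-residue as in~\eqref{eq:L2-inv}, Plemelj at the end) matches the paper, but the decisive step — the exact evaluation of the $L_3$-resolvent contribution — is missing and the sketched substitute would not go through as described.
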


\begin{proof}
By definition of $M_2^1$ in Lemma~\ref{lem:eqns}, we can write in Fourier variables,
\begin{eqnarray}
\lefteqn{\int_{\T^d}\big(M_2^1(iL_2-i\omega)^{-1}M_3^2 (iL_3-i\omega)^{-1}S_1^3g^\circ\big)(x,v)\,dx}\label{eq:g3-dec1}\\
&=&\big(\widehat M_2^1(i\widehat L_2-i\omega)^{-1}\widehat M_3^2 (i\widehat L_3-i\omega)^{-1}\widehat S_1^3g^\circ\big)(0,v)\nonumber\\
&=&\sum_{k\in2\pi\Z^d}ik\widehat V(k)
\cdot(\nabla_{v}-\beta v)\Big\langle\big((i\widehat L_2-i\omega)^{-1}\widehat M_3^2 (i\widehat L_3-i\omega)^{-1}\widehat S_1^3g^\circ\big)((-k,v),(k,v_*))\Big\rangle_{v_*},\nonumber
\end{eqnarray}
while the definition of $M_3^2$ yields
\begin{align}
&\big(\widehat M_3^2 (i\widehat L_3-i\omega)^{-1}\widehat S_1^3g^\circ\big)((-k,v),(k,v_*))\label{eq:g3-dec2}\\
&\hspace{-0.2cm}\,=\,\tfrac{N-1}N\sum_{k'\in2\pi\Z^d}ik'\widehat V(k')
\cdot(\nabla_{v}-\beta v)
\Big(\big\langle\big((i\widehat L_3-i\omega)^{-1}\widehat S_1^3g^\circ\big)\big((-k-k',v),(k,v_*),(k',w_*)\big)\big\rangle_{w_*}\nonumber\\
&\hspace{5cm}+\big\langle\big((i\widehat L_3-i\omega)^{-1}\widehat S_1^3g^\circ\big)\big((-k,v),(k-k',v_*),(k',w_*)\big)\big\rangle_{w_*}\Big).\nonumber
\end{align}
In view of the prefactors $k,k'$, we can restrict the sums to $k,k'\ne0$.
We start with the evaluation of the resolvent $(i\widehat L_3-i\omega)^{-1}$. For that purpose, we note that the definitions of $S_1^3$ and $L_3$ in Lemma~\ref{lem:eqns} yield for $k,k'\ne0$,
\begin{eqnarray*}
(\widehat S_1^3g^\circ)\big((-k-k',v),(k,v_*),(k',w_*)\big)&=&-\delta(k+k')\,ik\cdot (v_*-w_*)\beta\widehat V(k)g^\circ(v),\\
(\widehat S_1^3g^\circ)\big((-k,v),(k-k',v_*),(k',w_*)\big)&=&\delta(k)\,ik'\cdot (v_*-w_*)\beta\widehat V(k')g^\circ(v)~=~0,
\end{eqnarray*}
and for $\widehat G=\widehat G((0,v),(k,v_*),(-k,w_*))$,
\[\widehat L_{3}\widehat G=k\cdot v_*\big(\widehat G+\tfrac{N-1}N\beta\widehat V(k)\langle \widehat G\rangle_{v_*}\big)-k\cdot w_*\big(\widehat G+\tfrac{N-1}N\beta\widehat V(k)\langle \widehat G\rangle_{w_*}\big).\]
Hence, for
\[\widehat H^3\big((k_0,v_0),(k_1,v_1),(k_2,v_2)\big):=\tfrac{\frac\beta2(\widehat V(k_1)+\widehat V(k_2))}{1+\frac{N-1}N\frac\beta2(\widehat V(k_1)+\widehat V(k_2))}g^\circ(v_0),\]
comparing the above formulas for $S_1^3$ and $L_3$ yields
\begin{multline*}
(\widehat L_3\widehat H^3)\big((0,v),(k,v_*),(-k,w_*)\big)=k\cdot (v_*-w_*)\beta\widehat V(k)g^\circ(v)\\
=i(\widehat S_1^3g^\circ)\big((0,v),(k,v_*),(-k,w_*)\big).
\end{multline*}
Inserting these computations into~\eqref{eq:g3-dec2}, we obtain for $k\ne0$,
\begin{multline}\label{eq:g3-dec3}
\big(\widehat M_3^2 (i\widehat L_3-i\omega)^{-1}\widehat S_1^3g^\circ\big)((-k,v),(k,v_*))\\
\,=\,\tfrac{N-1}Nik\widehat V(k)
\cdot(\nabla_{v}-\beta v) \big\langle\big((i\widehat L_3-i\omega)^{-1}i\widehat L_3\widehat H^3\big)\big((0,v),(k,v_*),(-k,w_*)\big)\big\rangle_{w_*}.
\end{multline}
We now turn back to~\eqref{eq:g3-dec1}: repeating the computation of the resolvent of $L_2$ as in~\eqref{eq:L2-inv}, we find
\begin{multline*}
\int_{\T^d}\big(M_2^1(iL_2-i\omega)^{-1}M_3^2 (iL_3-i\omega)^{-1}S_1^3g^\circ\big)(x,v)\,dx\\
\,=\,(\nabla_{v}-\beta v)\cdot\sum_{k\in2\pi\Z^d}\tfrac{k\widehat V(k)}{\e^\circ_{\beta,1}(k,k\cdot v+\omega)}\Big\langle\tfrac{(\widehat M_3^2 (i\widehat L_3-i\omega)^{-1}\widehat S_1^3g^\circ)((-k,v),(k,v_*))}{k\cdot (v_*- v)-\omega}\Big\rangle_{v_*},
\end{multline*}
hence, inserting~\eqref{eq:g3-dec3},
\begin{multline}\label{eq:g3-dec4}
\int_{\T^d}\big(M_2^1(iL_2-i\omega)^{-1}M_3^2 (iL_3-i\omega)^{-1}S_1^3g^\circ\big)(x,v)\,dx\\
\,=\,(\nabla_{v}-\beta v)\cdot\sum_{k\in2\pi\Z^d}\tfrac{N-1}N\tfrac{i(k\otimes k)\widehat V(k)^2}{\e^\circ_{\beta,1}(k,k\cdot v+\omega)}\Big\langle\tfrac1{k\cdot (v_*- v)-\omega}(\nabla_{v}-\beta v)\\
\times\big((i\widehat L_3-i\omega)^{-1}i\widehat L_3\widehat H^3\big)\big((0,v),(k,v_*),(-k,w_*)\big)\Big\rangle_{v_*,w_*}.
\end{multline}
Decomposing
\begin{equation}\label{eq:decomp-L3}
(i\widehat L_3-i\omega)^{-1}i\widehat L_3\widehat H^3\,=\,\widehat H^3+i\omega(i\widehat L_3-i\omega)^{-1}\widehat H^3,
\end{equation}
spectral calculus ensures that $(i\widehat L_3-i\omega)^{-1}i\widehat L_3\widehat H^3$ converges to $\widehat H^3-\widehat H^3_\circ$ in $\Ld^2_\beta(\D^3)$ as $\omega\to0$ with $\Im\omega>0$, where
\[\widehat H^3_\circ:=\delta(k_0)\delta(k_1)\delta(k_2)\widehat H^3\]
is the orthogonal projection of $\widehat H^3$ onto $\operatorname{Ker}L_3=\{\psi\in\Ld^2_\beta(\D^3):\nabla_{x_j}\psi\equiv0~\forall j\}$, and we note that the contribution of $\widehat H^3_\circ$ in~\eqref{eq:g3-dec4} vanishes. The singularity of the prefactor $\tfrac1{k\cdot (v_*- v)-\omega}$ in~\eqref{eq:g3-dec4} however forces us to proceed to a more careful analysis.
Explicitly computing the resolvent of $L_3$ based on Lemmas~\ref{lem:spectrum}(ii) and~\ref{lem:sum-resolv}, and inserting the definition of $H^3$, we find
\begin{multline*}
\Big\langle\tfrac1{k\cdot (v_*- v)-\omega}(\nabla_{v}-\beta v)\big((i\widehat L_3-i\omega)^{-1}\widehat H^3\big)\big((0,v),(k,v_*),(-k,w_*)\big)\Big\rangle_{v_*,w_*}\\
=\tfrac{\beta\widehat V(k)}{1+\frac{N-1}N\beta\widehat V(k)}(\nabla_v-\beta v)g^\circ(v)\bigg(-i\Big\langle\tfrac1{\e_{\beta,2}^\circ(-k,\omega-k\cdot v_*)}\tfrac1{k\cdot (v_*- v)-\omega}\big\langle\tfrac1{k\cdot(v_*-w_*-\omega)}\big\rangle_{w_*}\Big\rangle_{v_*}\\
+\tfrac{N-1}N\beta\widehat V(k)\tfrac1{2\pi}\int_\R\tfrac1{\e_{\beta,2}^\circ(-k,\frac\omega2-\alpha)\e_{\beta,2}^\circ(k,\frac\omega2+\alpha)}\big\langle\tfrac{k\cdot v_*}{(k\cdot v_*-\alpha-\frac\omega2)(k\cdot(v_*-v)-\omega)}\big\rangle_{v_*}\\
\times\big\langle\tfrac1{k\cdot v_*-\alpha-\frac\omega2}\big\rangle_{v_*}\big\langle\tfrac1{-k\cdot w_*+\alpha-\frac\omega2}\big\rangle_{w_*}d\alpha\bigg),
\end{multline*}
hence, in view of Lemma~\ref{lem:prel-est},
\begin{multline}\label{eq:comput-L3-resolv}
\bigg|\Big\langle\tfrac1{k\cdot (v_*- v)-\omega}(\nabla_{v}-\beta v)\big((i\widehat L_3-i\omega)^{-1}\widehat H^3\big)\big((0,v),(k,v_*),(-k,w_*)\big)\Big\rangle_{v_*,w_*}\bigg|\\
\,\lesssim_\beta\,\langle k\rangle\widehat V(k)|(\nabla_v-\beta v)g^\circ(v)|\Big(\log(2+\tfrac1{\Im\omega})+\int_\R\tfrac1{1+|\alpha+\Re\frac\omega2|}\tfrac1{1+|\alpha-\Re\frac\omega2|}d\alpha\Big)\\
\,\lesssim_\beta\,\log(2+\tfrac1{\Im\omega})\langle k\rangle\widehat V(k)|(\nabla_v-\beta v)g^\circ(v)|,
\end{multline}
where we note that the right-hand side tends to $0$ when multiplied by $\Im\omega$ in the limit $\omega\to0$ with $\Im\omega>0$.
Inserting this bound into~\eqref{eq:g3-dec4} together with the decomposition~\eqref{eq:decomp-L3}, the stated estimate easily follows, and passing to the limit yields in $H^{-1}_\beta(\R^d)$,
\begin{multline*}
\lim_{\omega\to0\atop\Im\omega>0}\int_{\T^d}\big(M_2^1(iL_2-i\omega)^{-1}M_3^2 (iL_3-i\omega)^{-1}S_1^3g^\circ\big)(x,v)\,dx\\
\,=\,(\nabla_{v}-\beta v)\cdot\sum_{k\in2\pi\Z^d}\tfrac{N-1}N\tfrac{\beta\widehat V(k)}{1+\frac{N-1}N\beta\widehat V(k)}\tfrac{i(k\otimes k)\widehat V(k)^2}{\e^\circ_{\beta,1}(k,k\cdot v+i0)}\big\langle\tfrac1{k\cdot (v_*- v)-i0}\big\rangle_{v_*}(\nabla_{v}-\beta v)g^\circ(v).
\end{multline*}
Rewriting $\frac1{\e^\circ_{\beta,1}}$ as in~\eqref{eq:decomp-1/eps}, using the symmetries in $k$, and appealing to the Sokhotski-Plemelj formula as at the end of the proof of Proposition~\ref{prop:2-corr-comp}, the conclusion follows after straightforward simplifications.
\end{proof}

\subsection{Contribution from initial correlations}
We finally show by a direct computation that thanks to the prefactor $i\omega=-\frac{i\alpha+1}{N^r}$ the contribution of initial correlations $NG_N^{3;\circ}$ vanishes in the formula of Corollary~\ref{cor:truncation}. 
\begin{prop}\label{prop:init-corr}
Given $V\in W^{2,\infty}(\T^d)$ and $\beta\in(0,\infty)$, the following convergence holds in $H^{-1}_\beta(\R^d)$, uniformly for $N\ge1$,
\[\lim_{\omega\to0\atop\Im\omega>0}i\omega\int_{\T^d}\big(M_2^1(iL_2-i\omega)^{-1}M_3^2(iL_3-i\omega)^{-1}(NG_N^{3;\circ})\big)(x,v)\,dx\,=\,0,\]
where in addition the argument of the limit can be written as $(\nabla_v-\beta v)G_{N,\beta}^\omega(v)$ where $G_{N,\beta}^\omega(v)$ is bounded pointwise by $C_\beta|(\nabla_v-\beta v)g^\circ(v)|$ uniformly for $\Im\omega>0$.
\end{prop}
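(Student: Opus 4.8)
The plan is to re-run the computation in the proof of Proposition~\ref{prop:3-corr-comp} almost verbatim, replacing the role of $S_1^3g^\circ$ by the initial correlation $NG_N^{3;\circ}$; the only genuinely new feature is the prefactor $i\omega$, which is precisely what forces the limit to vanish against the merely logarithmic divergence of the propagators.

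First I would record the structural input coming from Lemma~\ref{lem:initial}: in Fourier variables,
\[\widehat{NG_N^{3;\circ}}\big((k_0,v_0),(k_1,v_1),(k_2,v_2)\big)\,=\,\delta(k_0)\,\delta(k_1+k_2)\,g^\circ(v_0)\,\widehat H_{N,\beta}^\circ(k_1),\]
so that $NG_N^{3;\circ}$ is supported on $\{k_0=0,\ k_1+k_2=0\}$, is independent of $v_1,v_2$, and satisfies $\widehat H_{N,\beta}^\circ(0)=0$ (by the translation-invariance argument in the proof of Lemma~\ref{lem:initial}). On the subspace $\{k_0=0\}$ the operator $L_3=\sum_{j=0}^2L_3^{(j)}$ reduces to the sum $L_3^{(1)}+L_3^{(2)}$ of two commuting operators, whose resolvents are given explicitly by Lemma~\ref{lem:spectrum}(ii) with the dispersion function $\e_{\beta,2}^\circ$. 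I would then unfold $M_2^1$ and $M_3^2$ in Fourier exactly as in~\eqref{eq:g3-dec1}--\eqref{eq:g3-dec4}, restricting the sums to $k,k'\neq0$ because of the prefactors $ik\widehat V(k)$, $ik'\widehat V(k')$; evaluate $(iL_3-i\omega)^{-1}$ on $NG_N^{3;\circ}$ via Lemma~\ref{lem:sum-resolv} together with Lemma~\ref{lem:spectrum}(ii), the computation being particularly simple here since $NG_N^{3;\circ}$ does not depend on $v_1,v_2$; and evaluate the outer resolvent $(iL_2-i\omega)^{-1}$ by the same complex-deformation/residue argument as in~\eqref{eq:L2-inv}, which applies because the relevant integrands are analytic in the lower half-plane $\{\Im\alpha<\tfrac12\Im\omega\}$, exactly as in the proof of Proposition~\ref{prop:3-corr-comp}. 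The two velocity derivatives stemming from $M_2^1$ and $M_3^2$ act on smooth functions of $v$; the one from $M_3^2$ eventually hits $g^\circ(v_0)$, producing the factor $(\nabla_v-\beta v)g^\circ(v)$, while the outer $(\nabla_v-\beta v)\cdot$ of $M_2^1$ is kept factored out, giving the announced form $(\nabla_v-\beta v)G_{N,\beta}^\omega(v)$.

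For the quantitative bounds I would invoke Lemma~\ref{lem:prel-est}(i)--(iii) — the decay estimate, the lower bound $|\e_{\beta,m}^\circ|\gtrsim1$ (using the temperature hypothesis $\beta\|V\|_{\Ld^\infty}\le\tfrac1{C_0}$), and the uniform bounds, including the logarithmic one — exactly as in the derivation of~\eqref{eq:comput-L3-resolv}, to control each term of the $(k,k')$-sum, before the prefactor $i\omega$, by $C_\beta\,\langle k\rangle^{a}\langle k'\rangle^{a'}\widehat V(k)\,\widehat V(k')\,|\widehat H_{N,\beta}^\circ(k_1)|\,\log(2+\tfrac1{\Im\omega})\,|(\nabla_v-\beta v)g^\circ(v)|$, for fixed exponents $a,a'$, where $k_1$ is an affine function of $k,k'$. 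Summing over $k,k'$ then uses the decay $\widehat V(k)=O(\langle k\rangle^{-2})$ afforded by $V\in W^{2,\infty}(\T^d)$ — this is exactly why one full derivative more of $V$ is required here than in Proposition~\ref{prop:3-corr-comp}, the coefficient $\widehat H_{N,\beta}^\circ$ decaying only in $\ell^2$ and not in a smooth class — together with the Cauchy--Schwarz bound $\big(\sum_{k_1}|\widehat H_{N,\beta}^\circ(k_1)|^2\big)^{1/2}=\|H_{N,\beta}^\circ\|_{\Ld^2(\T^d)}\lesssim_\beta1$ from Lemma~\ref{lem:initial}. All constants being uniform in $N\ge1$ (thanks to $|\e_{\beta,m}^\circ|\gtrsim1$ and $\tfrac{N-1}N\le1$), this shows that the argument of the limit is pointwise bounded by $C_\beta\,|\omega|\,\log(2+\tfrac1{\Im\omega})\,|(\nabla_v-\beta v)g^\circ(v)|$, which tends to $0$ as $\omega\to0$ with $\Im\omega>0$; the logarithm can moreover be removed — yielding the claimed uniform pointwise bound on $G_{N,\beta}^\omega$ — by deforming the velocity integration off the real axis by a fixed amount, which is legitimate since, $\widehat V$ being nonnegative and $NG_N^{3;\circ}$ being $v_1,v_2$-independent, all the relevant averages extend analytically past the real axis, so that $|k\cdot(v_*-v)-\omega|$ stays bounded below. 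The only real obstacle is the bookkeeping: tracking, through the nested resolvents, which Fourier mode and which velocity variable each factor depends on, so as to confirm both that the residue computations apply and that the $(k,k')$-sum converges under the $W^{2,\infty}$ hypothesis; once the scheme of Proposition~\ref{prop:3-corr-comp} is in place there is no genuinely new analytic difficulty, the vanishing being the soft effect of the prefactor $i\omega$ against propagators that diverge at most logarithmically.
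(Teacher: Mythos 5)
Your proposal is correct and follows essentially the same route as the paper's proof: insert the Fourier form of $NG_N^{3;\circ}$ from Lemma~\ref{lem:initial}, unfold $M_2^1$, $M_3^2$ and compute the resolvents of $L_2$ and $L_3$ exactly as in Proposition~\ref{prop:3-corr-comp} (via Lemmas~\ref{lem:spectrum}(ii) and~\ref{lem:sum-resolv} and the complex deformation of~\eqref{eq:L2-inv}), then bound everything as in~\eqref{eq:comput-L3-resolv} by $C_\beta\log(2+\tfrac1{\Im\omega})\,|(\nabla_v-\beta v)g^\circ(v)|$ using Lemma~\ref{lem:prel-est}, Cauchy--Schwarz in $k$ with $\|H^\circ_{N,\beta}\|_{\Ld^2(\T^d)}\lesssim_\beta1$ and the decay of $\widehat V$ afforded by $V\in W^{2,\infty}(\T^d)$, so that the prefactor $i\omega$ forces the limit to vanish. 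The only deviation is your final contour-deformation step to remove the logarithm, which the paper does not perform (it keeps the logarithmic bound and concludes directly); this extra refinement is not part of the paper's argument and would need justification, but it does not affect the main line of the proof.
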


\begin{proof}
In view of Lemma~\ref{lem:initial}, we can write $NG_N^{3;\circ}(z_{[0,2]})=g^\circ(v_0) H_{N,\beta}^\circ(x_1-x_2)$ with $\|H_{N,\beta}^\circ\|_{\Ld^2(\T^d)}\lesssim_\beta1$, hence
\begin{equation}\label{eq:decomp-GN30}
N\widehat G_N^{3;\circ}(\hat z_{[0,2]})=\delta(k_1+k_2)\,g^\circ(v_0)\,\widehat H_{N,\beta}^\circ(k_1).
\end{equation}
Combining this with the definitions of $M_2^1$ and $M_3^2$ in Lemma~\ref{lem:eqns}, and repeating the computation of the resolvent of $L_2$ as in~\eqref{eq:L2-inv}, we find
\begin{multline*}
\int_{\T^d}\big(M_2^1(iL_2-i\omega)^{-1}M_3^2 (iL_3-i\omega)^{-1}(N\widehat G_N^{3;\circ})\big)(x,v)\,dx\\
\,=\,(\nabla_{v}-\beta v)\cdot\sum_{k\in2\pi\Z^d}\tfrac{k\widehat V(k)}{\e_{\beta,1}^\circ(k,k\cdot v+\omega)}
\Big\langle \tfrac1{k\cdot(v_*-v)-\omega}\big(\widehat M_3^2 (i\widehat L_3-i\omega)^{-1}(N\widehat G_N^{3;\circ})\big)((-k,v),(k,v_*))\Big\rangle_{v_*},
\end{multline*}
and for $k\ne0$,
\begin{multline*}
\big(\widehat M_3^2 (i\widehat L_3-i\omega)^{-1}(N\widehat G_N^{3;\circ})\big)((-k,v),(k,v_*))\\
\,=\,-\tfrac{N-1}Nik\widehat V(k)
\cdot(\nabla_{v}-\beta v) \big\langle\big((i\widehat L_3-i\omega)^{-1}(N\widehat G_N^{3;\circ})\big)\big((0,v),(k,v_*),(-k,w_*)\big)\big\rangle_{w_*}.
\end{multline*}
Explicitly computing the resolvent of $L_3$ based on Lemmas~\ref{lem:spectrum}(ii) and~\ref{lem:sum-resolv}, and inserting~\eqref{eq:decomp-GN30}, a direct estimate based on Lemma~\ref{lem:prel-est} yields as in~\eqref{eq:comput-L3-resolv},
\begin{multline*}
\bigg|\Big\langle\tfrac1{k\cdot (v_*- v)-\omega}(\nabla_{v}-\beta v)\big((i\widehat L_3-i\omega)^{-1}(N\widehat G_N^{3;\circ})\big)\big((0,v),(k,v_*),(-k,w_*)\big)\Big\rangle_{v_*,w_*}\bigg|\\
\,\lesssim_\beta\,\log(2+\tfrac1{\Im\omega})\langle k\rangle|\widehat H_{N,\beta}^\circ(k)|\,|(\nabla_v-\beta v)g^\circ(v)|.
\end{multline*}
Inserting this into the above, we deduce
\begin{multline*}
\bigg\|\int_{\T^d}\big(M_2^1(iL_2-i\omega)^{-1}M_3^2 (iL_3-i\omega)^{-1}(N\widehat G_N^{3;\circ})\big)(x,\cdot)\,dx\bigg\|_{H^{-1}_\beta(\R^d)}\\
\,\lesssim_\beta\,\log(2+\tfrac1{\Im\omega})\sum_{k\in2\pi\Z^d}\langle k\rangle^3\widehat V(k)^2|\widehat H_{N,\beta}^\circ(k)|\,|(\nabla_v-\beta v)g^\circ(v)|\\
\,\lesssim_\beta\,\log(2+\tfrac1{\Im\omega})|(\nabla_v-\beta v)g^\circ(v)|,
\end{multline*}
and the conclusion follows.
\end{proof}

\subsection{Proof of Theorem~\ref{th:main}}\label{sec:proof-main}
Propositions~\ref{prop:2-corr-comp}, \ref{prop:3-corr-comp}, and~\ref{prop:init-corr} allow to pass to the limit in the different terms in Corollary~\ref{cor:truncation} with $i\omega=-\frac{i\alpha+1}{N^r}$. For $0< r<\frac1{18}$, for all $\phi\in C^\infty_c(\R^+)$, we deduce that the following convergence holds in $H_\beta^{-4}(\R^d)$,
\begin{multline*}
\lim_{N\uparrow\infty}\int_0^\infty\phi(\tau)\,(N\partial_t g_N^{1})|_{t=N^r\tau}\,d\tau\\
=\Big(\int_\R g_\phi\Big)(\nabla_{v}-\beta v)\cdot\bigg(\sum_{k\in2\pi\Z^d}(k\otimes k)\tfrac{\pi\widehat V(k)^2\langle\delta(k\cdot(v_*-v))\rangle_{v_*}}{|\e_{\beta,1}(k,k\cdot v+i0)|^2}\bigg)\nabla_v g^\circ(v).
\end{multline*}
In view of the following identities,
\begin{eqnarray*}
\int_\R g_\phi&=&\int_0^\infty\phi,\\
\e_{\beta,1}^{\circ}(k,k\cdot v-i0)&=&\e_\beta(k,k\cdot v),\\
\langle\delta(k\cdot(v_*-v))\rangle_{v_*}&=&(\tfrac\beta{2\pi|k|^2})^{\frac12} e^{-\frac\beta2(v\cdot\frac k{|k|})^2},
\end{eqnarray*}
this yields the conclusion.\qed

\appendix
\section{Fluctuations around thermal equilibrium}\label{app:lin-setting}

In this appendix, we briefly explain how the techniques developed in this work can be adapted for a rigorous derivation of the linearized Lenard--Balescu equation for fluctuations around the mean-field approximation in a linearized regime at thermal equilibrium.
More precisely, for the system~\eqref{eq:part-dyn}, we start from a global equilibrium for the Liouville equation~\eqref{eq:Liouville}, as given by the Gibbs measure
\[M_{N,\beta}(z_{[N]}):=Z_{N,\beta}^{-1}\,e^{-\frac\beta{2N}\sum_{i\ne j}^NV(x_i-x_j)-\frac\beta2\sum_{i=1}^N|v_i|^2},\]
with normalization factor $Z_{N,\beta}$ and with fixed inverse temperature $\beta\in(0,\infty)$.
We consider chaotic initial data for~\eqref{eq:Liouville} of the form
\begin{equation}\label{eq:CI-pert}
F_{N,\delta}^\circ\,=\,M_{N,\beta}(1+\delta g^\circ)^{\otimes N},
\end{equation}
for some given $g^\circ:\R^d\to\R$, in the tiny perturbation regime
\[\delta\ll\tfrac1N,\]
and we consider its time evolution $F_{N,\delta}$ under the Liouville flow~\eqref{eq:Liouville}. Note that the assumption $\int_{\R^d}g^\circ(v)M_\beta(v)\,dv=0$ ensures that $F_{N,\delta}^\circ$ has mass $1$ for all $\delta$.
In the present perturbative regime $\delta\ll\frac1N$, the initial data $F_{N,\delta}^\circ$ is expanded as follows,
\begin{equation}\label{eq:def-fN0}
F_{N,\delta}^\circ=M_{N,\beta}+\delta H_N^\circ+O(\delta^2N^2)M_{N,\beta},\qquad H_N^\circ(z):=M_{N,\beta}(z)\sum_{i=1}^Ng^\circ(v_i).
\end{equation}
We are thus reduced to analyzing the time evolution $H_N$ of the linearized initial data~$H_N^\circ$ under the Liouville flow~\eqref{eq:Liouville}, and we focus on the velocity distribution of a typical particle,
\begin{equation}\label{eq:def-hN1}
h_N^{1;t}(v)\,:=\,\int_{\T^d}H_N^{1;t}(x,v)\,dx,
\end{equation}
where $H_N^{1}$ denotes the first marginal of $H_N$.
Note that the linearized initial data $H_N^\circ$ is no longer a probability density: in particular there holds $\int_{\D^N}H_N^\circ=0$ and this property is preserved along the flow.
In this context, we expect the time-rescaled linearized velocity density $h_N^{1;Nt}$ to remain close to the solution of the linearization of the Lenard--Balescu equation~\eqref{eq:LB} at the Maxwellian equilibrium $M_\beta$.
Writing the Lenard--Balescu collision operator as
\[\LB(f):=Q(f,f;f),\]
in terms of
\begin{gather*}
Q(f,g;h):=\nabla\cdot\int_{\R^d}B(v,v-v_*;\nabla h)\big(f_*\nabla g-g\nabla_*f_*\big)\,dv_*\\
B(v,v-v_*;\nabla h):=\sum_{k\in\Z^d}(k\otimes k)\,\widehat V(k)^2\tfrac{\delta(k\cdot(v-v_*))}{|\e(k,k\cdot v;\nabla h)|^2}\\
\e(k,k\cdot v;\nabla h):=1+\widehat V(k)\int_{\R^d}\tfrac{k\cdot\nabla h(v_*)}{k\cdot(v-v_*)-i0}\,dv_*,
\end{gather*}
and noting that $Q(M_\beta,M_\beta;h)=0$ for any $h$,
the linearized Lenard--Balescu operator at~$M_\beta$ takes on the following guise (cf.~\cite{Merchant-Liboff-73,Strain-07}),
\[\LLB_\beta h:=Q(h,M_\beta;M_\beta)+Q(M_\beta,h;M_\beta),\]
that is, more explicitly, after some simplifications,
\begin{equation*}
\LLB_\beta h:=\nabla\cdot\Big(\int_{\R^d}B(v,v-v_*;\nabla M_\beta)\,\big((M_\beta)_*(\nabla+\beta v)h-M_\beta(\nabla_*+\beta v_*)h_*\big)\,dv_*\Big).
\end{equation*}
Note that the Fokker-Planck operator in~\eqref{eq:FP} coincides with this full linearized Lenard--Balescu operator without loss term.
Proceeding to a similar cumulant analysis as for Theorem~\ref{th:main}, we are led to the following.

\begin{theor}\label{th:main2}
Let the same assumptions hold as in Theorem~\ref{th:main} with $h^\circ:=M_\beta g^\circ$.
Then, for $0<r<\frac1{18}$, the velocity distribution $h_N^1$ of a typical particle (cf.~\eqref{eq:def-hN1}) satisfies on the timescale $t\sim N^r$, in the above linearized regime,
\[\lim_{N\uparrow\infty}N(\partial_th_N^1)|_{t=N^r\tau}\,=\,\LLB_\beta h^\circ,\]
as a function of $(\tau,v)$ in the weak sense of $\Dc'(\R^+\times\R^d)$.
\end{theor}

\begin{proof}[Idea of the proof]
We only indicate the main differences with the proof of Theorem~\ref{th:main} and we omit the detail. We split the proof into three main steps.

\medskip
\step1 Cumulant expansion.\\
For all $1\le m\le N$, denoting by $H_N^m$ the $m$th-order marginal of $H_N$,
its $m$th-order cumulant is naturally defined as follows,
\[G_N^m(z_{[m]})\,=\,\sum_{j=1}^m(-1)^{m-j}\sum_{\sigma\in\mathfrak P^m_j}\tfrac{H_N^j}{M_\beta^{\otimes j}}(z_\sigma),\]
which indeed corresponds to the definition in Lemma~\ref{lem:cumulants} once the tagged particle is removed. The a priori estimates of Lemma~\ref{lem:cum-ap} then take the following form, for $1\le m\le N$,
\[\|G_N^{m;t}\|_{\Ld^2_\beta(\D^m)}\,\lesssim_{\beta,m}\,N^{-\frac{m-1}2}.\]
Next, similarly as in Lemma~\ref{lem:eqns}, the cumulant $G_N^m$ satisfies
\[\partial_tG_N^{m}+iL_{m}G_N^{m}\,=\,M_{m+1}^{m}G_N^{m+1}+\tfrac1N\big(S_{m-2}^{m}G_N^{m-2}+S_{m-1}^{m}G_N^{m-1}+S_{m}^{m}G_N^{m}\big),\]
where the operators $L_m$, $M_{m+1}^m$, $S_{m-2}^m$, $S_{m-1}^m$, and $S_m^m$ act similarly as the corresponding operators $L_{m+1}$, $M_{m+2}^{m+1}$, $S_{m-1}^{m+1}$, $S_{m}^{m+1}$, and $S_{m+1}^{m+1}$ in Lemma~\ref{lem:eqns} when restricted to the last $m$ exchangeable variables.
In particular, we now define $L_m=\sum_{j=1}^mL_{m+1}^{(j)}$ with $L_{m+1}^{(j)}$ as in Lemma~\ref{lem:eqns}.
This new hierarchy of equations can be truncated on intermediate timescale similarly as in Section~\ref{sec:intermediate}, and it remains to compute its Markovian limit.

\medskip
\step2 Contribution from $2$-particle correlations.\\
The contribution of $2$-particle correlations to $N(\partial_tg_N^1)|_{t=N^r\tau}$ as $N\uparrow\infty$ with $0<r<\frac1{18}$ takes the form $M_2^1(iL_2+0)^{-1}S_1^2g^\circ$ with the new notation for the operators $M_2^1$, $L_2$, and~$S_1^2$, and we prove in $H^{-1}_\beta(\R^d)$ that
\begin{multline*}
\lim_{N\uparrow\infty,\,\omega\to0\atop\Im\omega>0}\int_{\T^d}\big(M_2^1(iL_2-i\omega)^{-1}S_1^2g^\circ\big)(x,v)\,dx\\
\,=\,(\nabla_{v}-\beta v)\cdot\sum_{k\in2\pi\Z^d}(k\otimes k)\tfrac{\pi\widehat V(k)^2}{|\e^\circ_{\beta,2}(k,k\cdot v+i0)|^2}
\bigg(\Big\langle \delta\big(k\cdot (v_*-v)\big)\,\big(\nabla_vg^\circ(v)-\nabla_{v_*}g^\circ(v_*)\big)\Big\rangle_{v_*}\\
+\beta\widehat V(k)\Big\langle\delta\big(k\cdot (v_*-v)\big) \big((\nabla_v-\beta v)g^\circ(v)-(\nabla_{v_*}-\beta v_*)g^\circ(v_*)\big)\Big\rangle_{v_*}\bigg).
\end{multline*}
By definition of $M_2^1$, we can write in Fourier variables, as in~\eqref{eq:comput-M21-g2},
\begin{multline*}
\int_{\T^d}\big(M_2^1(iL_2-i\omega)^{-1}S_1^2g^\circ\big)(x,v)\,dx\\
\,=\,-\tfrac1{2i}\sum_{k\in2\pi\Z^d}k\widehat V(k)\cdot(\nabla_{v}-\beta v)
\Big\langle\big((i\widehat L_2-i\omega)^{-1}\widehat S_1^2g^\circ\big)((-k,v),(k,v_*))\\
-\big((i\widehat L_2-i\omega)^{-1}\widehat S_1^2g^\circ\big)((k,v),(-k,v_*))\Big\rangle_{v_*},
\end{multline*}
while the resolvent of $L_2$ now takes on the following guise,
\begin{align}\label{eq:pre-decomp-LB}
&\big\langle\big((i\widehat L_2-i\omega)^{-1}\widehat S_1^2g^\circ\big)((-k,v),(k,v_*))\big\rangle_{v_*}
\,=\,\tfrac{-i}{\e_{\beta,2}^\circ(k,k\cdot v+\omega)}\Big\langle\tfrac{\widehat S_1^2g^\circ((-k,v),(k,v_*))}{k\cdot (v_*-v)-\omega}\Big\rangle_{v_*}\\
&-\tfrac{N-1}N\beta\widehat V(k)\tfrac1{2\pi}\int_\R\tfrac1{\e_{\beta,2}^\circ(k,\frac\omega2-\alpha)\e_{\beta,2}^\circ(-k,\alpha+\frac\omega2)}\tfrac{k\cdot v}{k\cdot v+\alpha+\frac\omega2}\Big\langle\tfrac{\widehat S_1^2g^\circ((-k,v),(k,v_*))}{(k\cdot v+\alpha+\frac\omega2)(k\cdot v_*+\alpha-\frac\omega2)}\Big\rangle_{v,v_*}\,d\alpha,\nonumber
\end{align}
which is obtained from Lemmas~\ref{lem:spectrum}(ii) and~\ref{lem:sum-resolv} together with a complex deformation argument as in~\eqref{eq:L2-inv}. The first right-hand side term has the same structure as in~\eqref{eq:L2-inv}, while the second one is new and cannot be computed by complex deformation due to the presence of poles in both complex half-planes. Formally passing to the limit and noting that $\e^\circ_{\beta,2}(-k,\alpha+i0)=\overline{\e^\circ_{\beta,2}(k,-\alpha+i0)}$ and $\e^\circ_{\beta,2}(k,\omega)\to\e^\circ_{\beta,1}(k,\omega)$, we are led to
\begin{multline}\label{eq:comput-lim0}
\lim_{N\uparrow\infty,\,\omega\to0\atop\Im\omega>0}\big\langle\big((i\widehat L_2-i\omega)^{-1}\widehat S_1^2g^\circ\big)((-k,v),(k,v_*))\big\rangle_{v_*}
\,=\,\tfrac{-i}{\e^\circ_{\beta,1}(k,k\cdot v+i0)}\Big\langle\tfrac{\widehat S_1^2g^\circ((-k,v),(k,v_*))}{k\cdot (v_*-v)-i0}\Big\rangle_{v_*}\\
-\tfrac1{2\pi}\int_\R\tfrac{\beta\widehat V(k)}{|\e^\circ_{\beta,1}(k,-\alpha+i0)|^2}\tfrac{k\cdot v}{k\cdot v+\alpha+i0}\Big\langle\tfrac{\widehat S_1^2g^\circ((-k,v),(k,v_*))}{(k\cdot v+\alpha+i0)(k\cdot v_*+\alpha-i0)}\Big\rangle_{v,v_*}\,d\alpha,
\end{multline}
The definition of $S_{1}^2$ yields 
\[\widehat S_{1}^2g^\circ((-k,v),(k,v_*))=-i\widehat V(k)\big(S_k(v,v_*)-S_k(v_*,v)\big)\]
with $S_k(v,v_*):=k\cdot(\nabla_{v}-\beta v+\beta v_*)g^\circ(v)$. The main difference with the case of the tagged particle is that now $v$ and $v_*$ play a symmetric role. In particular,  we deduce from this symmetry that
\[\Big\langle\tfrac{\widehat S_1^2g^\circ((-k,v),(k,v_*))}{(k\cdot v+\alpha+i0)(k\cdot v_*+\alpha-i0)}\Big\rangle_{v,v_*}\,=\,2\widehat V(k)\,\Im\Big\langle\tfrac{S_k(v,v_*)}{(k\cdot v+\alpha+i0)(k\cdot v_*+\alpha-i0)}\Big\rangle_{v,v_*}\]
is purely real.
Also noting that exchanging $k$ and $-k$ in the right-hand side of~\eqref{eq:comput-lim0} amounts to taking the complex conjugate, the above computations yield
\begin{eqnarray*}
\lefteqn{\hspace{-1.6cm}\lim_{N\uparrow\infty,\,\omega\to0\atop\Im\omega>0}\tfrac1{2i}\Big\langle\big((i\widehat L_2-i\omega)^{-1}\widehat S_1^2g^\circ\big)((-k,v),(k,v_*))-\big((i\widehat L_2-i\omega)^{-1}\widehat S_1^2g^\circ\big)((k,v),(-k,v_*))\Big\rangle_{v_*}}\\
&=&\lim_{N\uparrow\infty,\,\omega\to0\atop\Im\omega>0}\Im\big\langle\big((i\widehat L_2-i\omega)^{-1}\widehat S_1^2g^\circ\big)((-k,v),(k,v_*))\big\rangle_{v_*}\\
&=&-\Im\Big(\tfrac{\widehat V(k)}{\e^\circ_{\beta,1}(k,k\cdot v+i0)}\Big\langle\tfrac{S_k(v,v_*)-S_k(v_*,v)}{k\cdot (v_*-v)-i0}\Big\rangle_{v_*}\Big)\\
&&-\tfrac1{\pi}\int_\R\tfrac{\beta\widehat V(k)^2}{|\e^\circ_{\beta,1}(k,-\alpha+i0)|^2}\Im\big(\tfrac{k\cdot v}{k\cdot v+\alpha+i0}\big)\Im\Big\langle\tfrac{S_k(v,v_*)}{(k\cdot v+\alpha+i0)(k\cdot v_*+\alpha-i0)}\Big\rangle_{v,v_*}\,d\alpha.
\end{eqnarray*}
Now appealing to the Sokhotski-Plemelj identity in the form $\Im\tfrac{k\cdot v}{k\cdot v+\alpha+i0}=-\pi\delta(k\cdot v+\alpha)$, this becomes
\begin{multline*}
\lim_{N\uparrow\infty,\,\omega\to0\atop\Im\omega>0}\tfrac1{2i}\Big\langle\big((i\widehat L_2-i\omega)^{-1}\widehat S_1^2g^\circ\big)((-k,v),(k,v_*))-\big((i\widehat L_2-i\omega)^{-1}\widehat S_1^2g^\circ\big)((k,v),(-k,v_*))\Big\rangle_{v_*}\\
\,=\,-\Im\Big(\tfrac{\widehat V(k)}{\e^\circ_{\beta,1}(k,k\cdot v+i0)}\Big\langle\tfrac{S_k(v,v_*)-S_k(v_*,v)}{k\cdot (v_*-v)-i0}\Big\rangle_{v_*}\Big)
+\tfrac{\beta\widehat V(k)^2(k\cdot v)}{|\e^\circ_{\beta,1}(k,k\cdot v+i0)|^2}\,\Im\Big\langle\tfrac{S_k(v_*,w_*)}{(k\cdot (v_*-v)+i0)(k\cdot (w_*-v)-i0)}\Big\rangle_{v_*,w_*}.
\end{multline*}
Expanding
\begin{eqnarray*}
\tfrac1{\e^\circ_{\beta,1}(k,k\cdot v+i0)}\,=\,\tfrac1{|\e^\circ_{\beta,1}(k,k\cdot v+i0)|^2}\big(1+\beta\widehat V(k)+\beta\widehat V(k)(k\cdot v)\big\langle\tfrac{1}{k\cdot(v_*-v)+i0}\big\rangle_{v_*}\big),
\end{eqnarray*}
and appealing to the definition of $S_k$ and to the choice $\langle g^\circ(v)\rangle_v=0$ in form of
\begin{eqnarray*}
\Big\langle\tfrac{S_k(v,v_*)-S_k(v_*,v)}{k\cdot (v_*-v)-i0}\Big\rangle_{v_*}&=&k\cdot\Big\langle\tfrac{\nabla_vg^\circ(v)-\nabla_{v_*}g^\circ(v_*)}{k\cdot (v_*-v)-i0}\Big\rangle_{v_*}+\beta g^\circ(v),\\
\Big\langle\tfrac{S_k(v_*,w_*)}{(k\cdot (v_*-v)+i0)(k\cdot (w_*-v)-i0)}\Big\rangle_{v_*,w_*}&=&k\cdot\Big\langle\tfrac{\nabla_{v_*}g^\circ(v_*)}{k\cdot (v_*-v)+i0}\Big\rangle_{v_*}\big\langle\tfrac1{k\cdot (v_*-v)-i0}\big\rangle_{v_*}+\beta\Big\langle\tfrac{g^\circ(v_*)}{k\cdot(v_*-v)+i0}\Big\rangle_{v_*},
\end{eqnarray*}
we find after straightforward simplifications,
\begin{multline*}
\lim_{N\uparrow\infty,\,\omega\to0\atop\Im\omega>0}\tfrac1{2i}\Big\langle\big((i\widehat L_2-i\omega)^{-1}\widehat S_1^2g^\circ\big)((-k,v),(k,v_*))-\big((i\widehat L_2-i\omega)^{-1}\widehat S_1^2g^\circ\big)((k,v),(-k,v_*))\Big\rangle_{v_*}\\
\,=\,-\tfrac{\widehat V(k)(1+\beta\widehat V(k))}{|\e^\circ_{\beta,1}(k,k\cdot v+i0)|^2}\Im\Big\langle\tfrac{k\cdot(\nabla_vg^\circ(v)-\nabla_{v_*}g^\circ(v_*))}{k\cdot (v_*-v)-i0}\Big\rangle_{v_*}
-\tfrac{\beta^2\widehat V(k)^2(k\cdot v)}{|\e^\circ_{\beta,1}(k,k\cdot v+i0)|^2}\Im\Big\langle\tfrac{g^\circ(v)-g^\circ(v_*)}{k\cdot(v_*-v)+i0}\Big\rangle_{v_*},
\end{multline*}
and the claim follows from the Sokhotski-Plemelj identity.

\medskip
\step3 Contribution from $3$-particle correlations.\\
The contribution of $3$-particle correlations takes the form $M_2^1(iL_2+0)^{-1}M_3^2(iL_3+0)^{-1}S_1^3g^\circ$, and we prove in $H^{-1}_\beta(\R^d)$ that
\begin{multline*}
\lim_{N\uparrow\infty,\,\omega\to0\atop\Im\omega>0}\int_{\T^d}\big(M_2^1(iL_2+0)^{-1}M_3^2(iL_3+0)^{-1}S_1^3g^\circ\big)(x,v)\,dx\\
\,=\,-(\nabla_v-\beta v)\cdot\sum_{k\in2\pi\Z^d}\beta\widehat V(k)(k\otimes k)\tfrac{\pi\widehat V(k)^2}{|\e^\circ_{\beta,1}(k,k\cdot v+i0)|^2}\\
\times\Big\langle\delta\big(k\cdot(v_*-v)\big)\big((\nabla_v-\beta v)g^\circ(v)-(\nabla_{v_*}-\beta v_*)g^\circ(v_*)\big)\Big\rangle_{v_*}.
\end{multline*}
The explicit computation of $(iL_3+0)^{-1}S_1^3g^\circ$ is performed similarly as in the proof of Proposition~\ref{prop:3-corr-comp}, while the resolvent of $L_2$ is computed as in Step~2, and the claim easily follows.
\qedhere
\end{proof}

\section*{Acknowledgements}
We wish to thank Pierre-Emmanuel Jabin and Freddy Bouchet for inspiring discussions.
The work of MD is supported by the CNRS-Momentum program.

\bibliographystyle{plain}
\bibliography{biblio}

\end{document}